\documentclass[twocolumn]{autart}

\newcommand{\eE}{\varepsilon_E}

\newcommand{\eT}{\varepsilon_T}

\newcommand{\fhat}{\hat{f}}
\newcommand{\ygoal}{y_{\mathrm{goal}}}

\newcommand{\Bgam}{\mathcal{B}_\gamma}
\newcommand{\yhat}{\hat{y}}
\newcommand{\zhat}{\hat{z}}

\newcommand{\Yhat}{\hat{\mathcal{Y}}}
\newcommand{\Yf}{\mathcal{Y}_f}
\newcommand{\YN}{\mathcal{Y}_N}

\newcommand{\Vbar}{\bar{V}_h}

\usepackage{tikz}
\usetikzlibrary{positioning}
\usepackage{tcolorbox}
\usepackage{amsmath}
\usepackage{amssymb}
\usepackage{empheq}
\usepackage[font={small}]{caption}
\usepackage{graphicx}
\usepackage{xcolor}
\usepackage{comment}
\usepackage{url}
\usepackage{subcaption}
\usepackage{algorithm}
\usepackage[noend]{algpseudocode}
\usepackage{booktabs}
\usepackage{mathtools}
\usepackage{multirow}

\usepackage{enumerate}

\newtheorem{theorem}{Theorem}
\newtheorem{definition}{Definition}
\newtheorem{corollary}{Corollary}
\newtheorem{proposition}{Proposition}

\newtheorem{remark}{Remark}

\newtheorem{assumption}{Assumption}
\newtheorem{lemma}{Lemma}

\newcounter{num}

\newenvironment{proof}{\noindent {\it Proof.}}{ \hfill \mbox{\footnotesize $\blacksquare$ } \\ }

\usepackage{tabularx}

\usepackage{cite}

\newif\ifextended

\ifextended
  \newcommand{\appref}[1]{Appendix~\ref{#1}}
  \newcommand{\Appref}[1]{Appendix~\ref{#1}}
\else
  \newcommand{\appref}[1]{the appendix of the extended version~\cite{myextended}}
  \newcommand{\Appref}[1]{The appendix of the extended version~\cite{myextended}}
\fi

\begin{document}

\begin{frontmatter}
\title{Safety by Invariance, Liveness through Refinement: \\ Heterogeneous Contract Framework for \\ Co-Design of Layered Control}

\thanks[footnoteinfo]{This work has received funding from the Agence Nationale de la Recherche (ANR) under the ESTHER grant ANR-22-CE05-0016, from "France 2030" program through the TASTING grant ANR-22-PETA-0012, and from the RTE-CentraleSupélec chair. Corresponding Author: Y. Takayama. \textit{Emails:} \{yoshinari.takayama, alessio.iovine, guillaume.sandou\}@centralesupelec.fr, \{b.besselink\}@rug.nl, \{adnane.saoud\}@um6p.ma}
\author[First,Second,Third]{Yoshinari Takayama}
\author[First]{Alessio Iovine}
\author[Second]{Bart Besselink}
\author[First]{Guillaume Sandou}
\author[Third]{Adnane Saoud}

\address[First]{Laboratory of Signals and Systems (L2S),  CNRS, CentraleSupelec, Paris-Saclay University, France}
\address[Second]{Bernoulli Institute for Mathematics, Computer Science, and
Artificial Intelligence, University of Groningen, The Netherlands}
\address[Third]{College of Computing, University Mohammed VI Polytechnic, Benguerir, Morocco}
\begin{keyword}
decision-making; layered control architectures;
assume-guarantee contracts; contract-based design; energy systems
\end{keyword}

\begin{abstract}
Real-world control systems must achieve complex, long-horizon objectives (liveness) while strictly respecting safety constraints in continuous time. Addressing both objectives simultaneously poses significant challenges for either a discrete-time planner, such as Model Predictive Control (MPC), or a continuous-time tracking controller acting alone, motivating hierarchical layered control architectures (LCAs). Existing LCA research, however, lacks generality owing to three open difficulties: (i) the absence of a uniform specification language capable of expressing requirements across both discrete planning and continuous execution; (ii) the lack of formal guarantees that these specifications are preserved when interconnecting subsystems operating at heterogeneous time scales; and (iii) reliance on naive input-filtering laws that obstruct compositional separation across layers. This paper addresses all three gaps by importing the safety–liveness decomposition from computer science into a heterogeneous assume-guarantee framework: \emph{safety is enforced by invariance} at the continuous-time layer, while \emph{liveness is achieved through refinement} at the discrete-time layer. Inter-layer coordination is formalized via vertical refinement and timing compatibility conditions. We instantiate this contract with a novel LCA that connects an MPC, an input-to-state stabilizing (ISS) low-level controller, and a reference-governor mechanism that bridges them, stating explicit conditions on each component. The resulting implementation is validated on a Hybrid Energy Storage System (HESS) comprising a battery and a supercapacitor.
\end{abstract}
\end{frontmatter}
\newpage
\section{Introduction}

Real-world control systems face a dual challenge: achieving complex, long-horizon goals (such as those expressed in temporal logics~\cite{Fainekos2009-yn,LindemannBook, STL-GO, yoshinari2023jccpounal}) while strictly adhering to safety constraints in continuous time.
In industrial practice, this hierarchy is naturally addressed by coupling a high-level system, such as Model Predictive Control (MPC)~\cite{Borrelli2017-rp,Rawlings2017-vw}, with low-level tracking controllers, e.g., Proportional Integral Derivative (PID) ones. 
While this combination is intuitive, formally verifying the correctness of such composite systems remains a significant theoretical hurdle, due to the mismatch in time domains (discrete-time planning vs. continuous-time physics). Unlike hybrid systems, where continuous and discrete dynamics coexist within a single dynamical model, a layered architecture~\cite{LCAMatni, Incer_2024} separates the system into distinct functional layers — a discrete-time planner that computes reference commands and a continuous-time controller that tracks them — each designed and analyzed independently, interacting only through well-defined interfaces.
This separation raises the central question of this paper: how can a low-level controller be certified to maintain safety in continuous time while the planner's discrete-time guarantees are certified separately, so that the two can be recombined
across the time-domain mismatch---without a circular cross-layer dependence, and
without introducing deadlocks or invalidating the planner's prediction model?

Standard approaches often rely on heuristic time-scale separation arguments \cite{doflerTSS} or employ Control Barrier Functions (CBFs)~\cite{Ames2019-rq,Yuan18052024} to enforce safety.
A QP-based CBF filter can ignore inter-sample behavior~\cite{AuxiVarBelta}, but the more subtle difficulty is compositional rather than numerical: because such a filter acts on the control input $u$, it alters the very closed-loop tracking behavior that the planner's prediction model presupposes~\cite{rosolia2020multi,garg2021multi,RosoliaUnified}. The planner's assumption about what the low level delivers and the low level's realized guarantee are then coupled through $u$, so the two layers can no longer be analyzed independently.
This motivates moving the safety intervention from the input to the \emph{reference},
where it leaves the inner-loop certificates---and hence the planner's
assumption---intact. Conversely, while rigorous verification frameworks like vertical
abstraction exist~\cite{mazoLCA,Nuzzo2018}, establishing simulation
relations~\cite{Tabuada_undated-oi} often requires exhaustive reachability analysis,
which can be computationally prohibitive for complex dynamics.

On the other hand, assume-guarantee (AG) contract theory provides a powerful
semantic framework: under an \emph{assumption} on inputs or environment
behavior, each subsystem must provide a \emph{guarantee}. Originally developed
in computer science~\cite{benveniste2018contracts, nuzzo2015platform,
hasuoerato, muller2016ifm} with later algebraic
extensions~\cite{SHARF2024111637, alaoui2024adhs, incer2022hypercontracts} and
cyber-physical applications such as aircraft electric power
systems~\cite{nuzzo2014contract}, it established the core operations of
composition, refinement, and conjunction at a single level of abstraction
(\emph{horizontal} contracts). More recently, the control community has tailored
AG contracts to continuous-state dynamical systems through a range of semantic
and computational formulations---parametric contracts linked to small-gain
theory~\cite{kim2017smallgain}, weak/strong semantics for continuous- and
discrete-time systems~\cite{saoud2021assume, saoud2021tac} with STL
extensions~\cite{SiyuanContract} and resilience-based feasibility~\cite{monir2025cdc},
and behavioral formulations for linear systems~\cite{shali2023tac, shaliDC,
ArminHierarchi}. Complementary lines address invariant-set
verification~\cite{girard2022cdc}, convex compositional
synthesis~\cite{ghasemi2024automatica}, and applications such as DC-microgrid
voltage regulation~\cite{zonetti2019ecc}.

Beyond horizontal contracts, contract theory has been extended to mixed discrete-continuous and layered architectures \cite{LCAMatni}. 
However, existing
frameworks struggle to capture the feedback structures specific to layered control,
where components operate across multiple timescales and heterogeneous signal spaces,
and the constructive mechanisms proposed to address
this~\cite{nuzzoLCA,mazoLCA,Incer_2024,nuzzo2014contract,GirardApprox} have focused
primarily on the \emph{verification} of given designs. The closest prior work either
verifies layered designs within a contract algebra~\cite{mazoLCA,Nuzzo2018} or
synthesizes multirate safe controllers at the \emph{input} level, without an
assume-guarantee account of the cross-layer
obligations~\cite{rosolia2020multi,garg2021multi,RosoliaUnified}. In contrast, we
contribute a \emph{vertical} contract whose handshakes are deliberately
asymmetric---safety refines downward unilaterally, liveness refines
bilaterally---and which is \emph{realized} by a reference-level governor that
preserves the inner-loop certificates the planner relies on. To our knowledge this is
the first layered contract in which safety and liveness are discharged by structurally
distinct mechanisms and recombined without a circular cross-layer dependence.

Co-design frameworks have addressed the joint optimization of interdependent system components, from hardware--software stacks for autonomous systems using monotone and category-theoretic methods \cite{ZardiniEtAl2021IROS, zardini2023camod} to planner--controller synthesis under temporal logic specifications \cite{pant2020co}. However, these approaches typically optimize over a shared design space without formalizing explicit assume-guarantee obligations between layers, and do not address the synthesis of controllers that jointly enforce safety and liveness within a layered architecture with heterogeneous timescales and signal spaces.

This paper makes the following contributions.

\begin{itemize}
\item \textbf{A directional vertical contract for heterogeneous-time layers.}
We compose a discrete-time (DT) planning contract $\mathcal C_H$ with a
continuous-time (CT) safety contract $\mathcal C_L$ across two time domains,
coordinated by two \emph{asymmetric} handshakes (Defs.~\ref{def:vertical_refinement}--\ref{def:heterogeneous_wellposed}):
a downward refinement of the reference assumption and an upward refinement of the
tracking guarantee into the planner's mismatch assumption. The asymmetry is deliberate.
Safety is \emph{unilateral}---the CT layer enforces it by robust forward invariance
regardless of what the planner commands, even under the frozen fallback
$r_k=r_{k-1}$, since
the low-level layer enforces invariance regardless of $r$---while liveness is \emph{bilateral},
requiring both handshakes and a timing condition. By modeling the zero-order hold
explicitly, the inter-layer dependence is made sequential, so the composite is
well-posed by construction and correctness follows by induction over sampling
intervals (Thm.~\ref{thm:layered_correctness}), with no algebraic loop to break.

\item \textbf{An explicit, computable scale-separation condition.}
In place of asymptotic singular-perturbation arguments~\cite{Kokotovic1986}, we give a
finite-time timing condition $C_{\mathrm{tss}}:T_s\ge\tau_{LL}$
(Def.~\ref{def:timing_compatibility}) with an explicit estimate of the settling time $\tau_{LL}$
(Prop.~\ref{prop:settling}), linking the low-level decay rate directly to the
high-level sampling period. The bridge from DT MPC convergence to CT contract
satisfaction is an input-to-state-stability argument on the MPC value function
(Lemma~\ref{lem:dissipation}, Prop.~\ref{prop:mpc_iss}): model--reality mismatch enters as a
bounded ISS disturbance with input radius $\varepsilon_E$ and output radius
$\varepsilon_T$, so the planner's convergence guarantee becomes a quantity the CT
layer can supply. This provides a constructive bridge from discrete-time MPC to continuous-time assume-guarantee correctness.

\item \textbf{The explicit reference governor as a contract realizer.}
We recast the explicit reference governor
(ERG)~\cite{GaroneNicotraERG,NicotraUAVERG}---an existing constrained-control
mechanism---as the object that \emph{realizes} the vertical contract. Acting on the
reference rather than the input, it simultaneously enforces $\varphi_{\mathrm{safe}}$
by robust forward invariance (Thm.~\ref{thm:erg_invariance}) and supplies the tracking guarantee
that closes the upward handshake (Cor.~\ref{cor:lowlevel}), while leaving the deployed
inner-loop controller and its stability/tracking certificates untouched. This is what
lets a legacy inner loop be retained without redesign under the contract: the MPC adds
high-level planning, and the compositional framework supplies the formal guarantees.

\item \textbf{Instantiation on a hybrid energy storage system (HESS).}
We instantiate the framework on a battery--supercapacitor HESS whose slow/fast split
realizes the safety--liveness decomposition \emph{at the component level}: the
supercapacitor rejects fast disturbances (safety), the battery tracks the slow energy
target (liveness) (Remark~\ref{rem:superbatt}). A feedforward design removes the known
load from the error dynamics so that the invariance machinery handles only the unknown
residual; numerical results confirm safety and liveness under an off-target initialization and bounded disturbances. The timing condition is sufficient and, as the case study
shows, conservative when strong feedforward is available.
\end{itemize}

\textbf{Organization.}
Section~\ref{sec:preliminaries} formalizes the problem of layered controller synthesis with safety and liveness specifications. Section~\ref{sec:heterogeneous_contracts} develops a contract framework based on vertical refinement to ensure correctness across heterogeneous time domains and bridge the abstraction gap between the planner's simplified model $\hat{f}$ and the physical dynamics $f$. Section~\ref{sec:lca} implements the proposed contract framework using the ERG-tracker and MPC combination. The framework is applied to a HESS in Section~\ref{sec:HESS}.
Section~\ref{sec:simulation} presents numerical simulations validating the results. Section~\ref{sec:conclusion} concludes the paper.

\paragraph*{Notation.}
We denote by $\mathbb{T}$ the time index set:
$\mathbb{T} = \mathbb{R}_{\geq 0}$ in continuous time
and $\mathbb{T} = \mathbb{N}$ in discrete time.
In the latter case, the step index $k \in \mathbb{N}$ corresponds to physical time
$t_k = k T_s$, where $T_s > 0$ is the sampling period.
For a set $\mathcal{S}$, the \emph{signal space} for a signal
$\xi$ on $\mathbb{T}$ is $\mathcal{S}^{\mathbb{T}} := \{\, \xi \colon \mathbb{T} \to \mathcal{S} \,\}$ denotes the set of all $\mathcal{S}$-valued trajectories indexed by $\mathbb{T}$.
A component $\Sigma$ is identified with its trajectory set (behavior)
$\mathcal{B} \subseteq \mathcal{S}^{\mathbb{T}}$.
For $x \in \mathbb{R}^n$, $\|x\|$ denotes the Euclidean norm; for a signal
$\xi$ on $\mathbb{T}$, $\|\xi\|_{\infty} := \operatorname*{ess\,sup}_{t \in \mathbb{T}} \|\xi(t)\|$.
A function $\alpha: \mathbb{R}_{\ge 0} \to \mathbb{R}_{\ge 0}$ is class $\mathcal{K}$ 
if continuous, strictly increasing, and $\alpha(0)=0$; class $\mathcal{K}_\infty$ 
if additionally $\alpha(r) \to \infty$ as $r \to \infty$. A function 
$\beta: \mathbb{R}_{\ge 0}^2 \to \mathbb{R}_{\ge 0}$ is class $\mathcal{KL}$ if 
$\beta(\cdot,t) \in \mathcal{K}$ for fixed $t$ and $\beta(r,\cdot)$ decreases 
to zero for fixed $r$. 
\section{Problem Formulation}
\label{sec:preliminaries}

\subsection{System Models}

We consider a two-layer control architecture comprising a discrete-time (DT) 
planning layer and a continuous-time (CT) safety layer (cf. Fig. \ref{fig:architecture}). The planning layer operates at a sampling period $T_s$, generating references 
based on an abstracted model. The input to the planning layer is the sampled state $y_k \in \mathbb{R}^q$, 
and the output is a reference $r_k \in \mathbb{R}^p$. The safety layer is implemented on the continuous-time model.

\begin{figure}[ht!]
    \centering \includegraphics[width=0.46\textwidth]{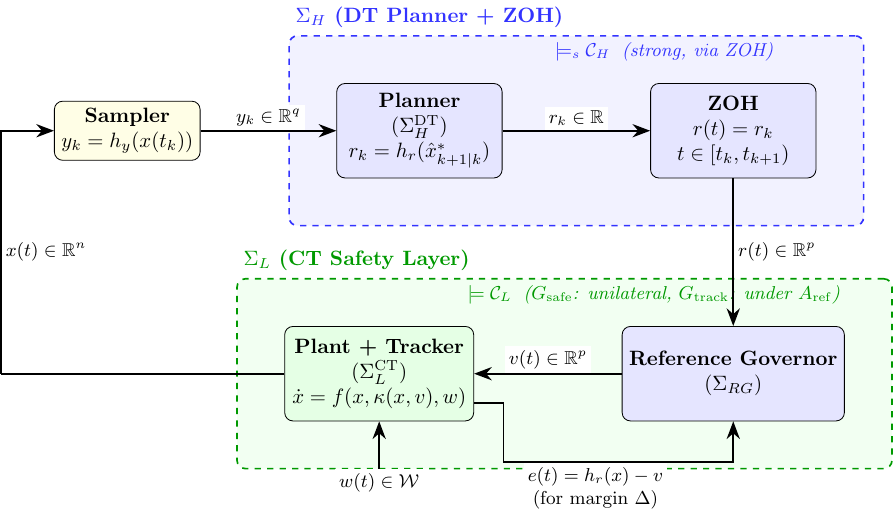}
    \caption{Layered control architecture. Signals: $r_k$ (reference), 
    $y_k$ (sampled state), $v(t)$ (filtered reference), $u(t)$ (control), 
    $x(t)$ (state), $w(t)$ (disturbance). Signal flow: $r \to v \to x$.}
    \label{fig:architecture}
\end{figure}

\begin{definition}[Low-Level CT System]
\label{def:ct_plant}
The low-level continuous-time system $\Sigma_L^{\mathrm{CT}}$ is defined by
\begin{align}\label{eq:ct_system}
    \dot{x}(t) = f(x(t),\, u(t),\, w(t)),
\end{align}
where $x(t) \in \mathcal{X} \subseteq \mathbb{R}^n$ 
is the state, $u(t) \in \mathcal{U} \subseteq 
\mathbb{R}^m$ is the control input, and 
$w(t) \in \mathcal{W} \subseteq \bigl\{w \in \mathbb{R}^d : \|w\| \leq W_{\max}\bigr\}$
represents external disturbances. The tracked 
output is $h_r(x) = P_r x \in \mathbb{R}^p$, 
where $P_r \in \mathbb{R}^{p \times n}$ is a 
selection matrix. The low-level tracking controller generates:
\begin{align}
    u(t) = \kappa(x(t),\, v(t)),
\label{eq:tracking_controller}
\end{align}
where $v(t) \in \mathbb{R}^p$ is the reference signal
applied to the tracker that follows a dynamics that will be described in Section \ref{sec:lca}. This filtered reference $v$ is a filtered version of the high-level system's output $r(t)$. The vector field $f$ 
is locally Lipschitz in $x$, $\kappa$ is locally 
Lipschitz, $r(t)$ is piecewise constant, and 
$w(t) \in \mathcal{W}$ is piecewise continuous. 
If every solution remains in a compact subset of 
$\mathcal{X}$, then a unique solution exists for 
all $t \geq t_0$~\cite{Khalil}. The input-output trajectory set of the low-level system $\Sigma_L$ is
{\scriptsize 
\begin{align}
    \mathcal{B}_L := \{(w, r, x) : 
    \mathbb{R}_{\geq 0} \to 
    \mathcal{W} 
    \times \mathbb{R}^p \times \mathcal{X} \mid 
    \text{\eqref{eq:ct_system}, 
    \eqref{eq:tracking_controller} hold a.e.}\}.
\end{align} 
}
\end{definition}
Note that the low-level layer $\Sigma_L$ receives the raw reference $r\in \mathbb{R}^p$ rather than the filtered reference $v\in \mathbb{R}^p$, since $\Sigma_L$ encompasses both the plant–tracker block and the reference governor (cf. Fig. \ref{fig:architecture})\footnote{When $v$ is absolutely continuous—as the ERG (Section \ref{sec:erg})  ensures—equations \eqref{eq:ct_system}–\eqref{eq:tracking_controller} hold everywhere; the 'a.e.' qualifier covers only $v=r$ (piecewise constant), where the ODE holds in the Carathéodory sense~\cite{AubinCellina2012, Goebel2012-hs}.}.

To define the input--output trajectory set of the high-level system, we first introduce the abstract discrete-time model of the plant on which the planner is based:
\begin{align}
    \yhat_{k+1} = \fhat(\yhat_k,\, \zhat_k,\, r_k),
    \label{eq:abstract_model}
\end{align}
where the map $\fhat$ is a (discrete-time) ``abstraction" of the original dynamics $f$. 
Whereas the dynamics $f$ governs the full state $x \in \mathbb{R}^n$,
the reduced dynamics $\fhat$ governs the abstract state $\yhat \in \mathbb{R}^q$
with $q \leq n$; the signal $\zhat \in \mathbb{R}^{n-q}$ collects the remaining components of the state that are not retained in the abstract model. 

Consider a finite receding horizon of length $N$ steps starting from
time~$k$. Let $\yhat_{k+j|k}$ denote the $j$-step-ahead predicted output, conditioned on information available at time $k$. The open-loop predicted trajectory is then generated by iterating the model in \eqref{eq:abstract_model}: for all $j \in [ 0, \dots, N{-}1],$
\begin{align}
    \yhat_{k+j+1|k} = \fhat(\yhat_{k+j|k},\,
    \zhat_{k+j|k},\, r_{k+j}),
    \label{eq:predicted_model}
\end{align}
with the initial condition $\yhat_{k|k} = y_k := h_y(x(t_k)) \in \mathbb{R}^q$ from the sampled measurement at each step.

The closed-loop high-level DT system $\Sigma_H^{\mathrm{DT}}$ solves a finite-horizon optimal control problem of length $N$ steps in a receding-horizon fashion, reapplied every sampling period $T_s$. When interconnected in a closed loop with the low-level CT system, the mismatch between the predicted and actual dynamics causes the high-level DT system's state to evolve as follows.

\begin{definition}[High-Level DT System]
\label{def:dt_planner}
Under the consistency condition $\yhat_{k|k} = y_k$, the high-level DT system $\Sigma_H^{\mathrm{DT}}$ is described by
\begin{align}
    y_{k+1} &= \hat{f}(y_k, \hat{z}_{k|k}, r_k) + \tilde{w}_{k+1},
    \label{eq:realised_dynamics}\\
    r_k &= \pi(y_k,\, \zhat_{k|k}) \in \mathbb{R}^p,
    \label{eq:reference_extraction}
\end{align}
where the one-step prediction error is indexed to the predicted step,
\begin{equation}
  \tilde{w}_{k} := y_{k}
    - \hat{f}\bigl(\hat{y}_{k-1|k-1}, \hat{z}_{k-1|k-1}, r_{k-1}\bigr),
  \label{eq:wtilde-def}
\end{equation}
so that $\tilde{w}_k \in \mathbb{R}^{q}$ bounds the mismatch in predicting
$y_k$ from the information available at step $k-1$. $\pi$ represents the policy that results from the planner's design. The input--output trajectory set of $\Sigma_H^{\mathrm{DT}}$ is therefore
\begin{align}
    \mathcal{B}_H &:= \Bigl\{\,
    (\tilde{w},\, y,\, r) \colon
    \mathbb{N} \to
\bigl(\mathbb{R}^q \times \mathbb{R}^q \times \mathbb{R}^p\bigr) \;|\; \nonumber\\
    &\text{\eqref{eq:realised_dynamics} and \eqref{eq:reference_extraction} hold with } \yhat_{k|k} = y_k \text{ for all } k \in \mathbb{N}
    \,\Bigr\}.
\end{align}
\end{definition}

Since $\fhat$ assumes perfect tracking, every real-world deviation, such as tracking imperfection, disturbances, and limited model fidelity on untracked channels, is absorbed into $\tilde{w}_k$. In this sense, $\tilde{w}$ enters $\Sigma_H$ as an abstracted disturbance inherited from the low-level CT system.

\begin{definition}[Sample-and-Hold]\label{def:sample_hold}
The interface between the DT and CT systems comprises a \emph{sampler}
$y_k = h_y(x(t_k))$ with $t_k = k T_s$, where $h_y: \mathbb{R}^n \to
\mathbb{R}^q$ extracts the measurable components, and a \emph{zero-order hold}
$r(t) = r_k$ for $t \in [t_k, t_{k+1})$.
\end{definition}

\begin{definition}[Layered Control System]
\label{def:layered_system}
A layered control system 
$\Sigma = \Sigma_H \triangleright \Sigma_L$ is 
the hierarchical interconnection of a high-level 
system $\Sigma_H$ 
(Definition~\ref{def:dt_planner}) and a low-level 
system $\Sigma_L$ (Definition~\ref{def:ct_plant}) 
through the sample-and-hold interface 
(Definition~\ref{def:sample_hold}):
The input-output trajectory set of the composite closed-loop system is
\begin{align}
    \mathcal{B} := \bigl\{(w, x) &:
    \mathbb{R}_{\geq 0} \to \mathcal{W} \times \mathcal{X}
    \;\big|\; \exists\, (r, y_k, \tilde{w}_k) \;\text{s.t.}\; \nonumber \\
    & (w, r, x) \in \mathcal{B}_L \text{ (low level)}, \nonumber \\
    & (\tilde{w}_k, y_k, r_k) \in \mathcal{B}_H \text{ (high level)}, \nonumber \\
    & y_k = h_y(x(t_k))\text{ (sampler)}, \nonumber \\
    & r(t) = r_k \;\;\forall\, t \in [t_k, t_{k+1}) \text{ (hold)},
    \bigr\}.
    \label{eq:layered_traj}
\end{align}
\end{definition}

The information flow is 
sequential: $r(t)$ over $[t_k, t_{k+1})$ depends 
only on $y_k$, so both layers do not require the other's output simultaneously, and the algebraic loop is avoided (see Theorem \ref{thm:layered_correctness}).
\subsection{Temporal Properties in Continuous Time}
\label{subsec:safety_liveness}
The classification of temporal properties into safety and liveness
originates with Lamport~\cite{Lamport1977}.

\begin{description}
    \item[Safety:] ``Something bad never happens'': a property violable
    by a finite prefix of a trajectory; once violated, no extension
    can repair it.

    \item[Liveness:] ``Something good eventually happens'': a property
    no finite prefix can violate; every prefix admits a satisfying
    extension.
\end{description}
Crucially, liveness encompasses not only discrete events ('reach the goal') but also continuous-time convergence, making the framework natural for cyber-physical systems; we treat continuous-time safety as invariance and liveness as convergence\footnote{In the formal methods literature, \emph{invariance} ($\Box \psi$ for state predicate $\psi$) is sometimes distinguished from general \emph{safety} (see Chapter 3.3 of \cite{Baier2008-up}).} and continuous-time liveness as convergence.

\begin{definition}[Safety Specification]
\label{def:safety_spec}
Given a safe set 
$\mathcal{X}_{\mathrm{safe}} \subseteq \mathbb{R}^n$ and a disturbance set $\mathcal{W} \subseteq \mathbb{R}^d$, 
the layered system $\Sigma$ satisfies 
$\varphi_{\mathrm{safe}}$ if for all trajectories $(w, x) \in \mathcal{B}$:
\begin{align}
    x(0) \in \mathcal{X}_{\mathrm{safe}} 
    \implies x(t) \in \mathcal{X}_{\mathrm{safe}} 
    \quad \forall t \geq 0.
    \label{eq:phisafe}
\end{align}
\end{definition}

For simplicity, this paper assumes that the safe set $\mathcal{X}_{\mathrm{safe}}$ is the polyhedron
\begin{equation}\label{eq:safe_set_compact}
    \mathcal{X}_{\mathrm{safe}} := \bigl\{ x \in \mathbb{R}^n \;\big|\; Cx \leq d \bigr\},
\end{equation}
where $C \in \mathbb{R}^{m \times n}$ and $d \in \mathbb{R}^m$. 

For continuous-time systems under persistent 
disturbances, exact convergence is generally 
infeasible. We formalize the liveness property $\varphi_{\mathrm{live}}$ in continuous-time as approximate convergence.

\begin{definition}[$\varepsilon$-Liveness Specification]
\label{def:liveness_spec}
Given a goal $y_{\mathrm{goal}} \in \mathbb{R}^q$ 
and tolerance $\varepsilon > 0$, the layered system $\Sigma$ satisfies the \emph{liveness specification} $\varphi_{\mathrm{live}}$ for all trajectories $(w,x) \in \mathcal{B}$, if 
\begin{align}
    \exists T < \infty, \quad \|h_y(x(t)) - y_{\mathrm{goal}}\|_{\mathrm{live}} 
    \leq \varepsilon \quad \forall\, t \geq T.
    \label{eq:philive}
\end{align}
where $\|\cdot\|_{\mathrm{live}}:=\|P_{\mathrm{live}}\,\cdot\|$ is a seminorm
selecting the regulated coordinate(s). The high-level convergence guarantees are
stated in this seminorm; for the abstract state $\hat y$ we write $\|\cdot\|$ for
$\|\cdot\|_{\mathrm{live}}$ below, the full Euclidean norm being retained for all
other quantities. In the HESS (Section~\ref{sec:HESS}), $P_{\mathrm{live}}$ picks
$E_B$, so $\|\hat y - y_{\mathrm{goal}}\| = |E_B - E_B^{\mathrm{goal}}|$.
\end{definition}

We now formulate the layered controller synthesis problem using the 
system models and specifications above.

\begin{tcolorbox}[colback=purple!5!white,colframe=purple!75, title=Problem 1: Layered Controller Synthesis]
Given the low-level system dynamics~\eqref{eq:ct_system},
constraint sets $\mathcal{X}$, $\mathcal{U}$, $\mathcal{W}$,
safety specification $\varphi_{\mathrm{safe}}$~\eqref{eq:phisafe},
and liveness specification $\varphi_{\mathrm{live}}$~\eqref{eq:philive},
co-design 
\begin{enumerate}[(i)]
    \item the high-level system $\Sigma_H$: a prediction model \eqref{eq:predicted_model}
and finite-horizon optimal control law
\eqref{eq:reference_extraction},
\item the low-level tracking controller
\eqref{eq:tracking_controller} for $\Sigma_L$,
\item a sampling period $T_s > 0$
\end{enumerate}
such that every trajectory
$(w, x) \in \mathcal{B}$ \eqref{eq:layered_traj}
of the layered system
$\Sigma = \Sigma_H \triangleright \Sigma_L$
(Definition~\ref{def:layered_system}) satisfies $\varphi_{\mathrm{safe}}
    \;\wedge\; \varphi_{\mathrm{live}}$.
\end{tcolorbox}


\section{Heterogeneous Contract Framework}
\label{sec:heterogeneous_contracts}

\subsection{Assume-Guarantee Contract}
To solve Problem 1, we adopt the assume-guarantee (AG) contract framework, which provides a compositional means of specifying and verifying the obligations of each layer independently. Rather than analyzing the closed-loop system monolithically, an AG contract assigns to every component a pair of behavioral specifications: an assumption on what the component may legitimately expect from its environment (from outside the network or other components, e.g., references or measurements), and a guarantee on what it must deliver in return. 

\begin{definition}[Assume-Guarantee Contract]
\label{def:ag_contract}
Let $\Sigma$ be a component with trajectory set
$(w,x) \in \mathcal{B} \subseteq \mathcal{S}^{\mathbb{T}}$,
where $w$ is $\Sigma$'s exogenous input, $x$ is the state, and $\mathcal{S} = \mathcal{W} \times \mathcal{X}$. An
AG contract is a pair
$\mathcal{C} = (A, G)$
of sets of trajectories, where $A \subseteq \mathcal{W}^{\mathbb{T}}$ encodes assumptions on the exogenous input and $G \subseteq (\mathcal{W} \times \mathcal{X})^{\mathbb{T}}$ encodes guarantees on the joint input--state behaviour. The component
\emph{satisfies} its contract, if for every trajectory $(w,x) \in \mathcal{B}$:
\begin{align}\label{eq:ag_sat}
    w \in A
    \;\implies\;
    (w,x) \in G,
\end{align}
which is written $\Sigma \models \mathcal{C}$ (we use predicate and set notation
interchangeably: a predicate $P$ defines the set $\{s : P(s)\}$ and vice versa,
favoring whichever is clearer).
\end{definition}

Let us consider co-design of local contracts $\mathcal{C}_L$ and $\mathcal{C}_H$ for each system $\Sigma_L$ and $\Sigma_H$ such that, if $\Sigma_L \models \mathcal{C}_L$ and $\Sigma_H  \models \mathcal{C}_H$ holds, the layered system (Definition~\ref{def:layered_system}) solves Problem 1. 
The challenge is to provide inter-layer interface conditions that ensure
contract correctness across the two time domains while reconciling
$\hat f$ with $f$.

\subsection{Safety-Liveness Decomposition Principle in Vertical Contract Design}

Alpern and Schneider~\cite{Alpern1985} showed that every temporal property decomposes into the intersection of a safety and a
liveness property. We organize the contract around this
decomposition, assigning qualitatively different roles to the two
layers, which is mentioned in the title \textbf{safety by invariance, liveness through refinement}: 

\begin{tcolorbox}[colback=blue!5!white,colframe=blue!75,
    title=Safety-Liveness Decomposition Principle]
\textbf{Safety is unilateral.}\; The low-level enforces
$\varphi_{\mathrm{safe}}$ in continuous time via robust forward
invariance, regardless of what $\Sigma_H$ commands.\\[4pt]
\textbf{Liveness is bilateral.}\; The high-level plans on an
abstract model; the low-level tracks its references; vertical refinement ensures the high-level system's convergence in discrete-time transfers to the physical system in continuous-time.
\end{tcolorbox}


Input-to-state stability can be a common language to state each layer's contribution, applying to both continuous- and discrete-time systems \cite{Khalil, JiangWang2001}.

\begin{definition}[Input-to-State Stability]
\label{def:iss}
A dynamical system with trajectory set
of $(w, \xi)$, state signal
$\xi : \mathbb{T} \to \mathbb{R}^n$, and
disturbance signal
$w : \mathbb{T} \to \mathbb{R}^d$
(where $\mathbb{T} = \mathbb{R}_{\geq 0}$ or
$\mathbb{T} = \mathbb{N}$) is
\emph{input-to-state stable} (ISS) if there
exist $\beta \in \mathcal{KL}$ and
$\gamma \in \mathcal{K}$ such that every
trajectory in $\mathcal{B}$ satisfies:
\begin{align}
    \|\xi(t)\| \leq \beta(\|\xi(0)\|,\, t)
    + \gamma\bigl(\|w\|_\infty\bigr)
    \quad \forall\, t \in \mathbb{T}.
    \label{eq:iss}
\end{align}
The function $\gamma$ is the \emph{ISS gain}
and $\gamma(\|w\|_\infty)$ the
\emph{ultimate bound}: for any $\delta > 0$
there exists finite $T \in \mathbb{T}$ such that
$\|\xi(t)\| \leq \gamma(\|w\|_\infty) + \delta$
for all $t \geq T$.
\end{definition}

\subsection{Contract Specification}
\label{subsec:contract_specification}

We now specify contracts for each layer. For the high–level, we introduce two concrete
performance parameters: a maximum reference gap $\bar{r} \geq 0$ that bounds $\|r_k - r_{k-1}\|$, and a tolerance on the abstraction error $\varepsilon_E \geq 0$.

\begin{definition}
\label{def:contract_high}
The high-level contract
\begin{align}
\mathcal{C}_H=
(\bigwedge_k A_{\mathrm{mis}}^k,\,
\bigwedge_k (G_{\mathrm{ref}}^k \wedge G_{\mathrm{ISS}}^k))\label{eq:highlevelc}
\end{align}
is defined on trajectory set $\mathcal{B}_H$, where each per-step assumptions and guarantees are defined at time step $t_k = kT_s$ as:
\scriptsize{
\begin{align}
A_{\mathrm{mis}}^k := \Bigl\{\, (\hat{y}_k,&\, \hat{z}_k,\, y_k) 
  \in \mathbb{R}^{q} \times \mathbb{R}^{n-q} \times \mathbb{R}^{q}
  \;\Big|\; \nonumber \\ 
 \|\tilde{w}_k\| = &\|y_{k} 
  - \hat{f}(\hat{y}_{k-1|k-1},\, \hat{z}_{k-1|k-1},\, r_{k-1})\| 
  \leq \varepsilon_E \Bigr\}
  \label{eq:A_mis} \\
G_{\mathrm{ref}}^k &:= \Bigl\{\, r_k \in \mathbb{R}^{p}
  \;\Big|\; \|r_k - r_{k-1}\| \leq \bar{r} \Bigr\}
  \label{eq:G_ref} \\
G_{\mathrm{ISS}}^k := \Bigl\{\, y_k \in \mathbb{R}^{q}&
  \;\Big|\; \|y_k - y_{\mathrm{goal}}\|
  \leq \beta(\|y_0 - y_{\mathrm{goal}}\|,\, k)
  + \varepsilon_T(\varepsilon_E) \Bigr\}
  \label{eq:G_iss} 
\end{align}
}
\normalsize{for some $\beta \in \mathcal{KL}$ and
$\varepsilon_T \in \mathcal{K}$
(Definition~\ref{def:iss} for high level DT system with $(\mathbb{T},\, \mathcal{B},\, \xi,\, d)=(\mathbb{N},\, \mathcal{B}_H,\,
\yhat_{k|k} - y_{\mathrm{goal}},\, \tilde{w}_k)$). $y_{\mathrm{goal}} \in\mathbb{R}^{q}$ is the liveness goal in Definition~\ref{def:liveness_spec}. The high-level system has no
external disturbance assumption
($A_{\mathrm{ext}}^H = \top$). The reference
feasibility guarantee $G_{\mathrm{ref}}^k$ does not require
$A_{\mathrm{mis}}^k$.}
\end{definition}

For the low-level, we introduce two concrete
performance parameters: a \emph{settling time}
$\tau_{LL} > 0$, after which the tracker has
converged to a neighborhood of the reference, and a per-coordinate tracking tolerance $\varepsilon_{L,i}>0$, the radius of that
neighborhood in coordinate $i$ (collected as $\varepsilon_L=[\varepsilon_{L,i}]_i$). Both depend on the
controller design and the disturbance bound
$\mathcal{W}$.

\begin{definition}
\label{def:contract_low}
The low-level contract
\begin{align}
\mathcal{C}_L =
( \bigwedge_k (A_{\mathrm{env}}^k \wedge A_{\mathrm{ref}}^k),\,
 \bigwedge_k (G_{\mathrm{safe}}^k \wedge G_{\mathrm{track}}^k))\label{eq:lowlevelc}
\end{align}
is defined on trajectory set $\mathcal{B}_L$, where each per-step assumptions and guarantees are defined on $k$-th sampling interval by
$\mathbb{I}_k := [kT_s,\, (k{+}1)T_s]$ as:
\scriptsize{
\begin{align}
A_{\mathrm{env}}^{k} &:= \Bigl\{\, w : \mathbb{I}_k \to \mathbb{R}^d \;\Big|\; w(t) \in \mathcal{W} ,\;\;\forall\, t \in \mathbb{I}_k \Bigr\}, \label{eq:A_env} \\
A_{\mathrm{ref}}^{k} &:= G_{\mathrm{ref}}^k =  \Bigl\{\, r_k \in \mathbb{R}^{p}
  \;\Big|\; \|r_k - r_{k-1}\| \leq \bar{r} \Bigr\}
  \label{eq:A_ref} \\
G_{\mathrm{safe}}^{k} &:= \Bigl\{\, x : \mathbb{I}_k \to \mathbb{R}^n 
  \;\Big|\; x(t) \in \mathcal{X}_{\mathrm{safe}}, 
  \;\;\forall\, t \in \mathbb{I}_k \Bigr\}, \label{eq:G_safe} \\
G_{\mathrm{track}}^{k} &:= \Bigl\{\, (x,\, r_k) \,:\, 
  x : \mathbb{I}_k \to \mathbb{R}^n,\; r_k \in \mathbb{R}^p
  \;\Big|\; \nonumber \\ 
  &\big| h_{r,i}(x((k{+}1)T_s)) - r_{k,i}\big| \le \varepsilon_{L,i},\ \forall i \Bigr\},\label{eq:G_track}
\end{align}
}
\normalsize{under timing compatibility condition $C_{\mathrm{tss}}$ defined below.}
\end{definition}

\begin{definition}[Timing Compatibility]\label{def:timing_compatibility}
The timing compatibility condition is $C_{\mathrm{tss}}: T_s \ge \tau_{LL}$. 
\end{definition}

It ensures the high-level samples only after the low-level has
settled, and is a constraint on the shared sampling
architecture, that is, $
    \|h_r(x(t)) - r_k\| \leq \varepsilon_L
    \quad \forall\, t \in
    [kT_s + \tau_{LL},\, (k{+}1)T_s]$, which results in $G_{\mathrm{track}}^k$ in \eqref{eq:G_track}.

\subsection{Heterogeneous Well-Posed Interconnection}

For the local contracts $\mathcal{C}_H$ and
$\mathcal{C}_L$ to compose into the global
specification, the interconnection between the two
layers must be well-posed across the two
heterogeneous time domains.
This inter-layer conditions require a
condition beyond the well-posedness of the system-level
solution (which follows from local Lipschitzness of
the closed-loop dynamics): a vertical refinement condition under $C_{\mathrm{tss}}$ (cf. Figure \ref{fig:contract_framework}).

\begin{figure}[ht!]
    \centering
    \includegraphics[width=0.46\textwidth]{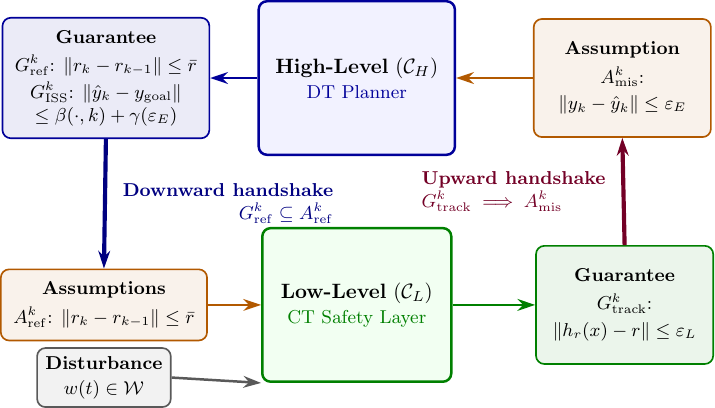}
    \caption{Vertical refinement for liveness specifications $\varphi_{\mathrm{live}}$: the ERG realizes the contract 
    hierarchy by ensuring that the low-level guarantees 
    $G_{\mathrm{safe}} \wedge G_{\mathrm{track}}$ under $A_{\mathrm{env}}$ 
    satisfy the high-level mismatch assumption $A_{\mathrm{mis}}$, 
    enabling the high-level system to guarantee $G_{\mathrm{live}}$.}
    \label{fig:contract_framework}
\end{figure}

\begin{definition}[Vertical Refinement]
\label{def:vertical_refinement}
The layered architecture satisfies the 
\emph{vertical refinement condition} $C_r$ for $\varphi_{\mathrm{safe}}$ and $\varphi_{\mathrm{live}}$ with tolerance 
$\varepsilon_H$ if the cross-domain handshakes 
are jointly feasible under $C_{\mathrm{tss}}$:
\begin{align}
 & \bigwedge_k (  G_{\mathrm{ref}}^k \implies  A_{\mathrm{ref}}^k )
    \label{eq:downward} \\
& \bigwedge_k (   G_{\mathrm{track}}^{k-1} 
    \implies A^k_{\mathrm{mis}}(\varepsilon_E) )
    \label{eq:upward}
\end{align}
for some $\varepsilon_E, \delta$ satisfying 
\begin{align}
    \varepsilon_T(\varepsilon_E)+\delta 
    < \varepsilon_H, \label{eq:verticalcomp}
\end{align}
where $\varepsilon_T$ is the 
ISS gain from $G_{\mathrm{ISS}}$.
\end{definition}

We define heterogeneous well-posedness for the layered interconnection.

\begin{definition}
\label{def:heterogeneous_wellposed}
The interconnection of contracts $\mathcal{C}_H$ and $\mathcal{C}_L$ for $\varphi_{\mathrm{safe}}$ and $\varphi_{\mathrm{live}}$ is 
\emph{recursively well-posed} by the layered system $\Sigma_H \triangleright\Sigma_L$ if it satisfies:
\begin{enumerate}[(i)]
\item \textbf{(initial conditions and environmental assumption)} $x(0),r(0)$, and $w(t)$ such that $A^0_\mathrm{ref}$ and $A_\mathrm{env}$ holds.
\item \textbf{(local contracts)} $\Sigma_H \models \mathcal{C}_H$, $\Sigma_L \models \mathcal{C}_L$ holds.
\item \textbf{(recursive feasibility)} The high-level layer remains feasible at every step. 
\item \textbf{(vertical refinement condition)} The time compatibility condition $C_{\mathrm{tss}}$ and the vertical refinement condition $C_r$ holds.
\end{enumerate}
\end{definition}

\begin{theorem}
\label{thm:layered_correctness}
Suppose the interconnection of contracts $\mathcal{C}_H$ and $\mathcal{C}_L$ is 
\emph{recursively well-posed} by layered system $\Sigma_H \triangleright\Sigma_L$ with tolerance 
$\varepsilon_H$ (Definition~\ref{def:heterogeneous_wellposed}).
Then Problem 1 is 
solved: the layered system satisfies 
$\varphi_{\mathrm{safe}}$ and 
$\varphi_{\mathrm{live}}$.
\end{theorem}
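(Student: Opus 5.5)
The plan is an induction over sampling intervals $\mathbb{I}_k$, which is possible because the zero-order hold makes the interconnection sequential. The claim to be carried forward is: for every $k$, the low-level assumptions $A_{\mathrm{env}}^k \wedge A_{\mathrm{ref}}^k$ hold on $\mathbb{I}_k$, and the high-level assumption $A_{\mathrm{mis}}^{k}$ holds at step $k$.

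\textbf{Base case.} Item (i) of Definition~\ref{def:heterogeneous_wellposed} gives $A_{\mathrm{ref}}^0$ and $A_{\mathrm{env}}$. For $k=0$ there is no prior prediction, so $A_{\mathrm{mis}}^0$ is vacuous; equivalently, we take $\tilde w_0=0$ by the consistency $\yhat_{0|0}=y_0$.

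\textbf{Inductive step.} Assume the claim up to $k$.
\begin{enumerate}[(a)]
\item By recursive feasibility (item (iii)) and $\Sigma_H\models\mathcal{C}_H$, the planner produces a well-defined $r_k$. Since $G_{\mathrm{ref}}^k$ does not require $A_{\mathrm{mis}}$, it delivers $G_{\mathrm{ref}}^k$.
\item The downward handshake \eqref{eq:downward} then yields $A_{\mathrm{ref}}^k$. The hold fixes $r(t)=r_k$ on $[t_k,t_{k+1})$, and $r_k$ depends only on $y_k=h_y(x(t_k))$, so there is no algebraic loop.
\item Together with $A_{\mathrm{env}}^k$, the contract $\Sigma_L\models\mathcal{C}_L$ gives $G_{\mathrm{safe}}^k$ and, under $C_{\mathrm{tss}}$, $G_{\mathrm{track}}^k$. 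Local Lipschitzness and the invariance of the compact safe set ensure that the solution exists on all of $\mathbb{I}_k$.
\item The upward handshake \eqref{eq:upward} converts $G_{\mathrm{track}}^k$ into $A_{\mathrm{mis}}^{k+1}(\varepsilon_E)$, closing the induction.
\end{enumerate}

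\textbf{Safety.} Since $\bigcup_k \mathbb{I}_k=\mathbb{R}_{\ge0}$ and $G_{\mathrm{safe}}^k$ holds for all $k$, we get $x(t)\in\mathcal{X}_{\mathrm{safe}}$ for all $t\ge0$, which is $\varphi_{\mathrm{safe}}$. This uses only $A_{\mathrm{env}}$ and the low-level guarantee, so it is unilateral, consistent with the decomposition principle.

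\textbf{Liveness.} Since $\bigwedge_k A_{\mathrm{mis}}^k$ holds, $\Sigma_H\models\mathcal{C}_H$ yields $G_{\mathrm{ISS}}^k$ for all $k$:
\[
\|y_k-\ygoal\|\le\beta(\|y_0-\ygoal\|,k)+\eT(\eE).
\]
Because $\beta\in\mathcal{KL}$, choose $K$ such that $\beta(\cdot,k)\le\delta$ for all $k\ge K$, with $\delta$ taken from \eqref{eq:verticalcomp}. Then $\|y_k-\ygoal\|_{\mathrm{live}}<\varepsilon_H$ for all $k\ge K$.

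The remaining step is to pass from samples to continuous time, and I expect this to be the main obstacle. The vertical refinement only controls the sampled values $y_k$, while \eqref{eq:philive} is a statement for all $t\ge T$.
\begin{itemize}
\item \emph{Endpoint-coincidence route.} Read $\varepsilon_H$ as the liveness tolerance $\varepsilon$ at the sampling instants. Then argue that $G_{\mathrm{track}}^k$, which holds on $[kT_s+\tau_{LL},(k+1)T_s]$ by the remark after Definition~\ref{def:timing_compatibility}, keeps the live coordinate within a bounded band between samples.
\item \emph{Alternative route.} Simply define $\varphi_{\mathrm{live}}$ satisfaction at the level of the contract via $\varepsilon_H\le\varepsilon$ with $T=KT_s$. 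State explicitly that intersample deviation is absorbed into the margin $\varepsilon-\eT(\eE)-\delta>0$ using the settled-tracking interval and continuity of $x$.
\end{itemize}
Stating this margin assumption cleanly, that is, identifying $\varepsilon_H$ with $\varepsilon$ plus an intersample term bounded through $\varepsilon_L$ and the tracking window, is the delicate point. I would make that identification explicit in the proof rather than leave it implicit.
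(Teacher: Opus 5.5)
Your chain of implications is the paper's own. Safety follows from $\Sigma_L\models\mathcal{C}_L$ under $A_{\mathrm{env}}$ alone. Liveness follows from (iii) $\Rightarrow G_{\mathrm{ref}}^k\Rightarrow A_{\mathrm{ref}}^k\Rightarrow G_{\mathrm{track}}^k\Rightarrow A_{\mathrm{mis}}^{k+1}$, then $G_{\mathrm{ISS}}$ and \eqref{eq:verticalcomp}. The paper states these implications for all $k$ at once and handles causality through the truncated contract $\mathcal{C}_H^{k_s}$. Because (iii) makes $G_{\mathrm{ref}}^k$ unconditional, your induction is not strictly needed. It is, however, the formulation that still works once feasibility itself depends on $A_{\mathrm{mis}}^k$, as in Assumption~\ref{asm:mismatch}. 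One small correction: $\mathcal{X}_{\mathrm{safe}}$ is a polyhedron, and in the HESS it is unbounded in $(E_S,E_B)$. Existence of the solution on $\mathbb{I}_k$ must therefore come from the assumed well-posedness, or from boundedness of the vector field on the safe set, not from compactness.

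The step you leave open is a genuine gap, and the paper's proof does not fill it either. It goes from $\|y_k-y_{\mathrm{goal}}\|\le\varepsilon_T(\varepsilon_E)+\delta<\varepsilon_H$ at sampling instants directly to ``$\varphi_{\mathrm{live}}$ holds with $T=KT_s$''. In doing so it tacitly identifies $\varepsilon_H$ with the tolerance $\varepsilon$ of \eqref{eq:philive} and checks \eqref{eq:philive} only at $t=t_k$; this is essentially your second route without the margin argument.

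Neither of your routes closes the gap as written. The first fails because $G_{\mathrm{track}}^k$ constrains $h_r(x)$ (in the HESS, $V_{\mathrm{gr}}$ and $I_B$), not the live coordinate $P_{\mathrm{live}}h_y(x)$ (there, $E_B$). The framework posits no link between $h_r$ and $h_y$, and $G_{\mathrm{track}}^k$ says nothing on $[kT_s,kT_s+\tau_{LL})$ in any case. The second fails because continuity of $x$ gives no bound that is uniform in $k$, so ``absorbed into the margin'' is an extra hypothesis, not a consequence.

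What does close the gap is a rate bound supplied by the safety guarantee rather than by tracking. Suppose $\|P_{\mathrm{live}}\tfrac{d}{dt}h_y(x(t))\|\le L_y$ along every trajectory remaining in $\mathcal{X}_{\mathrm{safe}}$, with $w\in\mathcal{W}$ and $u\in\mathcal{U}$. Then for $t\in[t_k,t_{k+1})$ and $k\ge K$, the seminorm triangle inequality gives $\|h_y(x(t))-y_{\mathrm{goal}}\|_{\mathrm{live}}\le\varepsilon_T(\varepsilon_E)+\delta+L_yT_s$. Strengthening \eqref{eq:verticalcomp} to $\varepsilon_T(\varepsilon_E)+\delta+L_yT_s\le\varepsilon$ then yields \eqref{eq:philive} with $T=KT_s$. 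In the HESS, $|\dot E_B|=c_B|V_{\mathrm{gr}}I_B|\le c_B V_{\max}\bar I_B$ on $\mathcal{X}_{\mathrm{safe}}$, which gives an explicit $L_y$. Without some hypothesis of this kind, the theorem's assumptions deliver only the sampled form of $\varphi_{\mathrm{live}}$.
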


\begin{proof}
\emph{(a) Safety $\varphi_{\mathrm{safe}}$.} Given initial condition $x(0) \in~\mathcal{X}_{\mathrm{safe}}$ and $A_{\mathrm{env}}: w(t) \in \mathcal{W}$ for all $t$, the local contract satisfaction $\Sigma_L \models \mathcal{C}_L$ means that $G_{\mathrm{safe}}$ holds for any $r(t)$ for all $t$, $\varphi_{\mathrm{safe}} = G_{\mathrm{safe}}$ holds.

\emph{(b) Liveness $\varphi_{\mathrm{live}}$.} To prove global liveness, we need to guarantee $G_{\mathrm{ISS}}^{k_s}$ at each time step ${k_s}$. First, $G_{\mathrm{ref}}^k$ in \eqref{eq:G_ref} unconditionally holds for all $k$ without $A_{\mathrm{mis}}^k$ only from recursive feasibility condition (iii) (as $A^k_{\mathrm{mis}}$ is needed for $G^{k_s}_{\mathrm{ISS}}$ at past steps $k \le k_s$). Note that the recursive feasibility is not related to the contract $\mathcal{C}_H$. With the downward vertical refinement condition, the reference assumption $A_{\mathrm{ref}}^k$ holds for all time steps. Using the upward condition of $C_r$ for all time steps, $A_{\mathrm{mis}} = \bigwedge_k A_{\mathrm{mis}}^k$ holds. The local contract $\Sigma_H \models \mathcal{C}_H$ holds at each time step ${k_s}$ as, $\Sigma_H \models \mathcal{C}_H^{k_s}$: 
\begin{align}
\mathcal{C}_{H}^{k_{s}}
=\Bigl(\textstyle\bigwedge_{k\le k_{s}}A_{\mathrm{mis}}^{k},\ \
G_{\mathrm{ref}}^{k_{s}}\wedge G_{\mathrm{ISS}}^{k_{s}}\Bigr).
\label{eq:C_H}
\end{align}
since $G_{\mathrm{ISS}}^{k_{s}}$ depends on the mismatch being bounded at all
past steps $k\le k_{s}$. By the upward handshake,
$A_{\mathrm{mis}}=\bigwedge_{k}A_{\mathrm{mis}}^{k}$ holds for all $k$, so the
conjunction is available at every $k_{s}$ regardless of direction.

Setting $(\mathbb{T},\mathcal{B},\xi,d)=(\mathbb{N},\mathcal{B}_H,\hat y-y_{\mathrm{goal}},\tilde w_k)$
in Definition~\ref{def:iss} under $G^k_{\mathrm{ISS}}$, and using $\hat y_{k|k}=y_k$,
for any $\delta>0$ there exists finite $K(\delta)$ such that for all $k\ge K$,
\begin{align}
 & \|h_y(x(t_k))-y_{\mathrm{goal}}\|=\|y_k-y_{\mathrm{goal}}\| \nonumber \\
 & \le \beta\!\big(\|y_0-y_{\mathrm{goal}}\|,k\big)+\varepsilon_T(\varepsilon_E)
  \le \varepsilon_T(\varepsilon_E)+\delta .
  \label{eq:e2e}
\end{align}
With equation \eqref{eq:verticalcomp}, i.e.\ $\varepsilon_T(\varepsilon_E)+\delta<\varepsilon_H$,
there exists $K$ with $\|h_y(x(t_K))-y_{\mathrm{goal}}\|\le\varepsilon_H$,
i.e.\ $\exists\,T=KT_s$ such that $\varphi_{\mathrm{live}}$ holds. 
\end{proof}

\section{Layered Control Architecture}
\label{sec:lca}

This section implements the layered contract introduced in
Section~\ref{sec:heterogeneous_contracts} using a reference governor
combined with an MPC planner.  We first concretize each layer's
implementation and state the required assumptions, then verify that
the local conditions of
Definition~\ref{def:heterogeneous_wellposed} (and
Theorem~\ref{thm:layered_correctness}) are satisfied, ensuring that the closed-loop system meets the global contract with admissible
inputs $u(t) \in \mathcal{U}$.
Section~\ref{subsec:planning_layer} addresses the high-level
contract~\eqref{eq:highlevelc} and its recursive feasibility. Section \ref{sec:lca}, verifying \eqref{eq:C_H}, which implies $\Sigma_H \models \mathcal{C}_H$, through the MPC implementation. 

\subsection{Planning Layer: Discrete-Time System $\Sigma_H$}
\label{subsec:planning_layer}

Define the stage cost $\ell:\Rset^{q}\times\mathbb{R}\to\Rset_{\ge 0}$
and terminal cost $V_{f}:\Rset^{q}\to\Rset_{\ge 0}$, both continuous
and positive definite.  The \emph{MPC optimal control problem} is as follows. At time~$t_k$ with measurement
$y_k = h_y(x(t_k))$:
\begin{subequations}
\label{eq:mpc_ocp}
\begin{align}
    \min_{r_{k:k+N-1}} \quad
    & \sum_{j=0}^{N-1}
    \ell(\yhat_{k+j|k},\, r_{k+j})
    + V_f(\yhat_{k+N|k}) \\
    \text{s.t.} \quad
    & \eqref{eq:predicted_model} \text{ with } \yhat_{k|k} = y_k \\
    &\yhat_{k+j|k} \in \Yhat,\qquad j = 0,\dots,N\!-\!1,
    \label{eq:mpc_stage_constraints}\\
    & \yhat_{k+N|k} \in \Yf,
    \label{eq:mpc_terminal}\\
    & \|r_{k+j} - r_{k+j-1}\| \leq \bar{r}
    \label{eq:mpc_rate}
\end{align}
\end{subequations}
where
$\Yhat\subseteq\mathbb{R}^{q}$ is the (tightened) state constraint
set,
$\bar{r}$~is the (compact) reference constraint set,
and $\Yf\subseteq\Yhat$ is the terminal constraint set.%
\footnote{The terminal constraint in~\eqref{eq:mpc_terminal} uses
the terminal set $\Yf$, which is distinct from the $N$-step feasible
set $\YN$ defined in~\eqref{eq:YN_def} below.  In particular,
$\Yf\subseteq\YN$.}

Let $r^{*} = \bigl(r_{0}^{*},\dots,r_{N-1}^{*}\bigr)$
denote any minimising sequence and let $V_N^* : \YN \to \mathbb{R}_{\geq 0}$ be the
optimal value function, where the \emph{$N$-step feasible set} of
\eqref{eq:mpc_ocp} is
\begin{align}
    \YN := \bigl\{\,
    y \in \mathbb{R}^q :
    \text{\eqref{eq:mpc_ocp} is feasible with }
    \yhat_{k|k} = y
    \,\bigr\}.
    \label{eq:YN_def}
\end{align}
By Proposition~2.10(c) in~\cite{Rawlings2017-vw}, $\YN$ is closed.
Moreover, if the terminal set $\Yf$ is control invariant under~\eqref{eq:predicted_model}, then by
Proposition~2.10(b) in~\cite{Rawlings2017-vw} the feasible sets are
nested, $\YN \supseteq \mathcal{Y}_{N-1} \supseteq \cdots \supseteq \Yf$, and $\YN$ is positive invariant under the nominal MPC closed-loop $\yhat \mapsto r_0^*(\yhat)$.
Assumption~\ref{asm:mismatch} extends this to the disturbed setting.

\begin{assumption}[Robust Recursive Feasibility]
\label{asm:mismatch}
Let $A_{\mathrm{mis}} := \{\tilde w \in \mathbb{R}^q : \|\tilde w\| \le \eE\}$.
The terminal set $\Yf$ is robust control invariant for the predicted
dynamics~\eqref{eq:predicted_model} under $A_{\mathrm{mis}}$, and
the feasible set $\YN$ is robustly positive invariant in the following sense:
for every $\yhat\in\YN$, every $\zhat\in\mathcal{Z}$, every minimiser
$r^*=(r_0^*,\dots,r_{N-1}^*)$ of~\eqref{eq:mpc_ocp} at $\yhat_{k|k}=\yhat$,
and every $\tilde w\in A_{\mathrm{mis}}$,
\begin{align}
    \fhat(\yhat,\,\zhat,\,r_0^*) + \tilde w \;\in\; \YN.
    \label{eq:feasibility_margin}
\end{align}
\end{assumption}
Concretely, Assumption~\ref{asm:mismatch} guarantees that whenever
\eqref{eq:mpc_ocp} is feasible at time~$k$, it remains feasible at
time~$k+1$ despite any bounded model--reality mismatch $\tilde w \in
A_{\mathrm{mis}}$.\footnote{Robust recursive feasibility
can be enforced by constraint tightening (e.g.,
\cite{ChisciEtAl2001,MayneSeronRakovic2005}). It is not required for safety, which the low-level layer maintains unilaterally. If
feasibility is lost at step~$k$, the fallback policy $r_k = r_{k-1}$
keeps the system safe, though liveness is suspended until a feasible solution is recovered.}

\begin{assumption}[Planner Stability]
\label{asm:mpc_stability}
The MPC~\eqref{eq:mpc_ocp} satisfies the following, uniformly in the exogenous data
$(\zhat_k, r_k) \in \mathcal{Z} \times \mathcal{R}$:
\begin{enumerate}[(i)]
    \item There exist
    $\alpha_1, \alpha_2 \in \mathcal{K}_\infty$
    such that, for all $\yhat \in \YN$,
    \begin{align}
        \alpha_1\bigl(\|\yhat - \ygoal\|\bigr)
        \;\leq\; V_N^*(\yhat)
        \;\leq\;
        \alpha_2\bigl(\|\yhat - \ygoal\|\bigr)
        \label{eq:sandwich_bounds}
    \end{align}

\item There exists $\alpha_{3}\in\mathcal{K}_{\infty}$ such that, for all
$\hat{y}_{k}\in\mathcal{Y}_{N}$, the MPC feedback $r_{k}=r_{0}^{*}(\hat{y}_{k})$
and the associated $\hat{z}_{k}$ satisfy
\begin{equation}\label{eq:descent_condition}
V_{N}^{*}\!\bigl(\hat{f}(\hat{y}_{k},\hat{z}_{k},r_{0}^{*}(\hat{y}_{k}))\bigr)
-V_{N}^{*}(\hat{y}_{k})
\le -\alpha_{3}\bigl(\|\hat{y}_{k}-y_{\mathrm{goal}}\|\bigr).
\end{equation}

    \item $V_N^*$ is
    Lipschitz on $\YN$ with constant $L_V > 0$:
    \begin{align}
        \bigl|V_N^*(\yhat)
        - V_N^*(\yhat')\bigr|
        \;\leq\; L_V\,\|\yhat - \yhat'\|
        \label{eq:lipschitz}
    \end{align}
\end{enumerate}
\end{assumption}

\begin{remark}\label{asm:mpcstandard}
Assumption~\ref{asm:mpc_stability} is satisfied
under standard MPC terminal conditions
(see Chapter 2 of \cite{Rawlings2017-vw}). 
\end{remark}

\begin{remark}
\label{rem:unconstrained}
When $\Yhat = \Yf = \mathbb{R}^q$, the feasible
set is $\YN = \mathbb{R}^q$ for any horizon~$N$,
and Assumption~\ref{asm:mismatch} is satisfied
trivially. The low-level controller then bears
sole responsibility for keeping the physical
state in the region where
Assumption~\ref{asm:mpc_stability} holds.
\end{remark}

The one-step error bound in \eqref{eq:descent_condition} can be extended to the case with disturbance $\tilde{w}_k$ from the mismatch between $f$ and $\fhat$ using \eqref{eq:lipschitz}.

\begin{lemma}
\label{lem:dissipation}
Under
Assumptions~\ref{asm:mismatch}
and~\ref{asm:mpc_stability}, set
$\hat{y}_{k} := \hat{f}(\hat{y}_{k-1|k-1}, \hat{z}_{k-1|k-1}, r_{k-1}) + \tilde{w}_{k}$
with $A_{\mathrm{mis}}: \|\tilde{w}_k\| \le \varepsilon_E$. Then for all
$\yhat_k \in \YN$:
\begin{align}
  V_N^*(\hat y_{k|k}) &- V_N^*(\hat y_{k-1|k-1})\nonumber \\
 & \le -\alpha_3\big(\|\hat y_{k-1|k-1}-y_{\mathrm{goal}}\|\big) + L_V\varepsilon_E.
  \label{eq:dissipation}
\end{align}
\end{lemma}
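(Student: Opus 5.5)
The plan is an add-and-subtract argument around the nominal successor. Write $\hat y^{\mathrm{nom}}_k := \fhat(\hat y_{k-1|k-1},\hat z_{k-1|k-1},r_{k-1})$, so that $\hat y_{k|k} = y_k = \hat y^{\mathrm{nom}}_k + \tilde w_k$ by \eqref{eq:realised_dynamics}--\eqref{eq:wtilde-def} and the consistency condition $\hat y_{k|k}=y_k$. Then decompose
\begin{align*}
V_N^*(\hat y_{k|k}) - V_N^*(\hat y_{k-1|k-1})
&= \bigl[V_N^*(\hat y_{k|k}) - V_N^*(\hat y^{\mathrm{nom}}_k)\bigr] \\
&\quad + \bigl[V_N^*(\hat y^{\mathrm{nom}}_k) - V_N^*(\hat y_{k-1|k-1})\bigr].
\end{align*}
The second bracket is the nominal one-step decrease. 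The first bracket is the perturbation caused by the mismatch, and I will bound it by Lipschitz continuity.

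For the second bracket, I will invoke Assumption~\ref{asm:mpc_stability}(ii) at $\hat y_{k-1|k-1}$. This requires $\hat y_{k-1|k-1}\in\YN$ and $r_{k-1}=r_0^*(\hat y_{k-1|k-1})$, which holds because the planner applies the first element of the minimiser via \eqref{eq:reference_extraction}. It gives $V_N^*(\hat y^{\mathrm{nom}}_k)-V_N^*(\hat y_{k-1|k-1})\le -\alpha_3(\|\hat y_{k-1|k-1}-\ygoal\|)$. The hypothesis that $\hat y_{k-1|k-1}\in\YN$ is inherited inductively from the statement's assumption $\hat y_k\in\YN$ at every step, together with Assumption~\ref{asm:mismatch}.

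For the first bracket, I will use Assumption~\ref{asm:mpc_stability}(iii) to get $|V_N^*(\hat y_{k|k})-V_N^*(\hat y^{\mathrm{nom}}_k)|\le L_V\|\tilde w_k\|\le L_V\eE$, using $A_{\mathrm{mis}}$. Summing the two bounds yields \eqref{eq:dissipation}.

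The main obstacle is that the Lipschitz bound \eqref{eq:lipschitz} is only stated on $\YN$, so both $\hat y_{k|k}$ and $\hat y^{\mathrm{nom}}_k$ must lie in $\YN$. The point $\hat y_{k|k}$ is in $\YN$ by hypothesis; equivalently, Assumption~\ref{asm:mismatch} with $\tilde w=\tilde w_k$ gives \eqref{eq:feasibility_margin}. For $\hat y^{\mathrm{nom}}_k$, I will apply Assumption~\ref{asm:mismatch} with $\tilde w=0\in A_{\mathrm{mis}}$, which places $\fhat(\hat y_{k-1|k-1},\hat z_{k-1|k-1},r_0^*)$ in $\YN$. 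This second point is also needed for (ii) to be meaningful, since (ii) evaluates $V_N^*$ there.

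A secondary subtlety is that $\hat z_{k-1|k-1}$ must lie in $\mathcal Z$, and that the uniformity in $(\hat z,r)$ in Assumption~\ref{asm:mpc_stability} covers the realised data. I will note both as standing hypotheses rather than prove them.
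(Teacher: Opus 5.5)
Your proposal is correct and follows essentially the same route as the paper, which uses exactly this add-and-subtract decomposition around the nominal successor $\hat f(\hat y_{k-1|k-1},\hat z_{k-1|k-1},r_{k-1})$. As in your plan, the paper bounds one bracket by the descent condition~\eqref{eq:descent_condition} and the other by the Lipschitz bound~\eqref{eq:lipschitz} with $\|\tilde w_k\|\le\varepsilon_E$, and your use of Assumption~\ref{asm:mismatch} with $\tilde w=0$ to place the nominal successor in $\mathcal{Y}_N$ correctly justifies a step the paper leaves implicit.
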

Using Lemma \ref{lem:dissipation}, the following proposition provides the conditions under which $G_{\mathrm{ISS}}^k$ holds.

\begin{proposition}[ISS of the Planning Loop]
\label{prop:mpc_iss}
Under
Assumptions~\ref{asm:mismatch}
and~\ref{asm:mpc_stability}, set
$\hat{y}_{k} := \hat{f}(\hat{y}_{k-1|k-1}, \hat{z}_{k-1|k-1}, r_{k-1}) + \tilde{w}_{k}$
with $A_{\mathrm{mis}}: \|\tilde{w}_k\| \le \varepsilon_E$. Then the state
$\hat{y}_{k|k}$ satisfies $G_{\mathrm{ISS}}^{k}$ in \eqref{eq:G_iss} for some $\beta \in \mathcal{KL}$, where
    $\eT = \alpha_1^{-1}(\alpha_2(\gamma(\eE)))$
    with $\gamma=\; \alpha_3^{-1}(L_V\, s)$.
\end{proposition}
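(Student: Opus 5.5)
Taking $V_N^*$ as an ISS-Lyapunov function, I will combine the dissipation inequality of Lemma~\ref{lem:dissipation} with the sandwich bounds \eqref{eq:sandwich_bounds} in the standard discrete-time ISS argument (in the spirit of Jiang--Wang). Write $e_k:=\|\hat y_{k|k}-y_{\mathrm{goal}}\|$ and $V_k:=V_N^*(\hat y_{k|k})$. First, Assumption~\ref{asm:mismatch} together with $A_{\mathrm{mis}}$ shows by induction that $\hat y_{k|k}\in\YN$ for all $k$ whenever $y_0\in\YN$. Hence Lemma~\ref{lem:dissipation} applies at every step and gives
\[
V_{k}-V_{k-1}\le -\alpha_3(e_{k-1})+L_V\varepsilon_E .
\]

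Second, I split into two regions. Let $\gamma(\varepsilon_E)=\alpha_3^{-1}(L_V\varepsilon_E)$ and define the sublevel set $\Omega:=\{\hat y: V_N^*(\hat y)\le \alpha_2(\gamma(\varepsilon_E))\}$.

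\emph{Outside $\Omega$.} Here $\alpha_2(e_{k-1})\ge V_{k-1}>\alpha_2(\gamma)$, so $e_{k-1}>\gamma$. This gives $\alpha_3(e_{k-1})>L_V\varepsilon_E$, and $V$ strictly decreases. To obtain a rate, I would use a fraction of the decrease: write
\[
-\alpha_3(e)+L_V\varepsilon_E\le -\rho\,\alpha_3(e)
\quad\text{whenever } e\ge \gamma_\rho(\varepsilon_E):=\alpha_3^{-1}\!\big(L_V\varepsilon_E/(1-\rho)\big).
\]
The resulting inequality $V_k\le V_{k-1}-\rho\,\alpha_3(\alpha_2^{-1}(V_{k-1}))$ yields, by the standard comparison lemma for discrete-time decrements, a $\mathcal{KL}$ bound $V_k\le\tilde\beta(V_0,k)$ while the trajectory remains outside the level set.

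\emph{Inside $\Omega$.} The set must be shown forward invariant. If $V_{k-1}\le\alpha_2(\gamma)$, then $V_k\le V_{k-1}-\alpha_3(e_{k-1})+L_V\varepsilon_E$, and I must prevent overshoot. The difficulty is that $V_k$ could exceed $\alpha_2(\gamma)$ by up to $L_V\varepsilon_E$ when $e_{k-1}$ is small. I expect this to be the main obstacle. My first attempt is to enlarge the invariant level to $\alpha_2(\gamma)+L_V\varepsilon_E$ and absorb the extra term into a redefined gain (still class $\mathcal K$ in $\varepsilon_E$). Alternatively, I would argue via the max-form bound $V_k\le\max\{\tilde\beta(V_0,k),\alpha_2(\gamma)\}$, as in Jiang--Wang's Lemma 3.5, accepting $\rho\to1$ to recover the stated gain.

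Finally, I convert back to the state. From $\alpha_1(e_k)\le V_k$ and $V_k\le \tilde\beta(\alpha_2(e_0),k)+\alpha_2(\gamma(\varepsilon_E))$, I would apply $\alpha_1^{-1}(a+b)\le\alpha_1^{-1}(2a)+\alpha_1^{-1}(2b)$ to get
\[
e_k\le \beta(e_0,k)+\alpha_1^{-1}(\alpha_2(\gamma(\varepsilon_E))),
\]
with $\beta(s,k):=\alpha_1^{-1}(2\tilde\beta(\alpha_2(s),k))$. The factor $2$ in the gain is either tolerated by a remark on constants or avoided through the max form, which gives exactly $\varepsilon_T=\alpha_1^{-1}(\alpha_2(\gamma(\varepsilon_E)))$. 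Using $\hat y_{k|k}=y_k$, this is precisely $G^k_{\mathrm{ISS}}$ in \eqref{eq:G_iss}.
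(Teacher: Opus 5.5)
Your route is the one the paper takes. The paper's proof invokes the Jiang--Wang ISS-Lyapunov theorem to get $V_N^*(\hat y_k)\le\tilde\beta(V_N^*(\hat y_0),k)+\alpha_2(\gamma(\varepsilon_E))$, pulls this back through the sandwich bounds, and lets $\beta$ absorb $\alpha_1^{-1}\circ\tilde\beta\circ\alpha_2$. However, the two places where you hedge are exactly where the stated gain $\varepsilon_T=\alpha_1^{-1}(\alpha_2(\gamma(\varepsilon_E)))$ is decided, and as written neither hedge delivers it.

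The overshoot obstacle comes from splitting on $e_{k-1}$ and discarding $-\alpha_3(e_{k-1})$ inside $\Omega$. Work at the level of $V$ instead. Set $\alpha_4:=\alpha_3\circ\alpha_2^{-1}$; the upper sandwich bound gives $\alpha_3(e_{k-1})\ge\alpha_4(V_{k-1})$, hence $V_k\le V_{k-1}-\alpha_4(V_{k-1})+L_V\varepsilon_E$. Suppose $s\mapsto s-\alpha_4(s)$ is nondecreasing. Then $V_{k-1}\le b:=\alpha_2(\gamma(\varepsilon_E))$ gives $V_k\le b-\alpha_4(b)+L_V\varepsilon_E=b$, because $\alpha_4(b)=\alpha_3(\gamma(\varepsilon_E))=L_V\varepsilon_E$. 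So $\Omega$ is invariant at exactly the stated level, with no enlargement. Your first fix, raising the level to $b+L_V\varepsilon_E$, yields a valid ISS estimate, but with a gain different from the claimed $\varepsilon_T$. Without the monotonicity, the $V$-level inequality by itself does allow overshoot, and the usual remedy (shrinking $\alpha_4$, as Jiang--Wang do) again inflates the gain.

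The max-form escape does not work either. In your parametrization, $\gamma_\rho\to\gamma$ as $\rho\to0$, not as $\rho\to1$. In that limit the decay rate $\rho\,\alpha_3$ vanishes, so $\tilde\beta$ degenerates and no single estimate survives. Moreover, the max form cannot be derived at the exact level $b$ at all. Take $\alpha_4(s)=cs$: the sequence $V_k-b=(1-c)^k(V_0-b)$ satisfies the $V$-level inequality with equality and stays above $b$ for all $k$ when $V_0>b$. Then $V_k\le\max\{\tilde\beta(V_0,k),b\}$ would require $\tilde\beta(V_0,k)>b$ for all $k$, contradicting $\tilde\beta(V_0,k)\to0$.

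The exact gain comes from the sum form instead. In the quadratic setting of Corollary~\ref{cor:high}, $\alpha_4(s)=cs$ with $c:=Q/\bar c_V\le1$ (since $\alpha_1\le\alpha_2$), so the monotonicity holds. This gives $\max(V_k-b,0)\le(1-c)\max(V_{k-1}-b,0)$, and thus $V_k\le(1-c)^kV_0+b$. Since $\alpha_1^{-1}(s)=\sqrt{s/Q}$ is subadditive, $e_k\le\alpha_1^{-1}\big((1-c)^k\alpha_2(e_0)\big)+\alpha_1^{-1}(b)$. This is $G^k_{\mathrm{ISS}}$ with exactly $\varepsilon_T$ and no factor $2$.

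For general $\alpha_i$ you need these structural properties: monotone $\mathrm{id}-\alpha_4$, plus either subadditive $\alpha_1^{-1}$ or a $\beta$ allowed to depend on the fixed $\varepsilon_E$. Otherwise you must settle for a larger class-$\mathcal K$ gain. The paper's one-line appeal to Jiang--Wang leaves these same points implicit.
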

\begin{proof}
With the bounds~\eqref{eq:sandwich_bounds}, \eqref{eq:descent_condition},
and \eqref{eq:dissipation} from Lemma~\ref{lem:dissipation}, $V_N^*$
is an ISS-Lyapunov function with disturbance input $L_V\,\eE$.
The standard ISS-Lyapunov theorem~\cite{JiangWang2001} gives
\begin{align}
    V_N^*(\yhat_k)
    \;\leq\;
    \tilde{\beta}(V_N^*(\yhat_0),\, k)
    + \alpha_2(\gamma(\eE))
\end{align}
for some $\tilde{\beta} \in \mathcal{KL}$ and the $\mathcal{K}$-class
ISS gain
\begin{align}
    \gamma(s) \;:=\; \alpha_3^{-1}(L_V\, s).
    \label{eq:iss_gain}
\end{align}
Applying the sandwich bounds~\eqref{eq:sandwich_bounds} to both
sides yields
\begin{align}
    \|\yhat_k - \ygoal\|
    \;&\leq\;
    \alpha_1^{-1}\!\bigl(
    \tilde{\beta}(\alpha_2(\|\yhat_0 - \ygoal\|),\, k)
    + \alpha_2(\gamma(\eE))\bigr) \nonumber \\
    &\leq\;
    \beta(\|\yhat_0 - \ygoal\|,\, k)
    + \eT,
    \label{eq:eH_def}
\end{align}
where $\beta \in \mathcal{KL}$ absorbs
$\alpha_1^{-1} \circ \tilde{\beta} \circ
\alpha_2$ and $\eT$:
\begin{align}
    \eT
    \;=\;
    \alpha_1^{-1}\!\bigl(
    \alpha_2\bigl(
    \alpha_3^{-1}(L_V\,\eE)
    \bigr)\bigr).
    \label{eq:eH_explicit}
\end{align}
\end{proof}
Under $A_{\mathrm{mis}}^{k} ,\forall k \geq k_s$,
Proposition~\ref{prop:mpc_iss} delivers
$G_{\mathrm{ISS}}^{k_s}$.
The reference guarantee $G_{\mathrm{ref}}^{k_s}$
holds by~\eqref{eq:mpc_rate}. 

\begin{corollary}[High-Level Contract]
\label{cor:high_level_contract}
Under Assumptions \ref{asm:mismatch}--\ref{asm:mpc_stability} on $\Sigma_H$, $\Sigma_H \models \mathcal{C}_H$ holds.
\end{corollary}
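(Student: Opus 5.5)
The plan is to verify the satisfaction relation of Definition~\ref{def:ag_contract} for $\mathcal{C}_H$ in \eqref{eq:highlevelc}. Fix an arbitrary trajectory $(\tilde w, y, r)\in\mathcal{B}_H$ and check the two guarantee families separately. The reference guarantee $G_{\mathrm{ref}}^k$ is unconditional. The ISS guarantee $G_{\mathrm{ISS}}^k$ may use the assumption $\bigwedge_{k} A_{\mathrm{mis}}^k$, and in fact only its prefix $\bigwedge_{j\le k}A_{\mathrm{mis}}^j$, as in the per-step contract \eqref{eq:C_H}.

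\emph{Step 1 (feasibility along the trajectory).} First I will show by induction on $k$ that $y_k\in\YN$ for all $k$, so that the MPC policy $\pi$ in \eqref{eq:reference_extraction} is well defined at every step.
\begin{itemize}
\item The base case is the feasibility of $y_0$, supplied by condition~(iii) of Definition~\ref{def:heterogeneous_wellposed}.
\item For the inductive step, \eqref{eq:realised_dynamics} gives $y_{k+1}=\fhat(y_k,\zhat_{k|k},r_0^*)+\tilde w_{k+1}$. Under $A_{\mathrm{mis}}^{k+1}$, Assumption~\ref{asm:mismatch} (inequality \eqref{eq:feasibility_margin}) gives $y_{k+1}\in\YN$.
\end{itemize}

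\emph{Step 2 ($G_{\mathrm{ref}}^k$).} Since the applied reference is $r_k=r^*_0$ from a feasible solution of \eqref{eq:mpc_ocp}, the rate constraint \eqref{eq:mpc_rate} with $j=0$ gives $\|r_k-r_{k-1}\|\le\bar r$ directly. This step uses no mismatch bound beyond the feasibility established in Step~1, which matches the remark in Definition~\ref{def:contract_high} that $G_{\mathrm{ref}}^k$ does not require $A_{\mathrm{mis}}^k$.

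\emph{Step 3 ($G_{\mathrm{ISS}}^k$).} Under $\bigwedge_{j\le k}A_{\mathrm{mis}}^j$, I will use $\yhat_{k|k}=y_k$ and the fact that Step~1 keeps every iterate in $\YN$. These place us exactly in the hypotheses of Lemma~\ref{lem:dissipation}, so the dissipation inequality \eqref{eq:dissipation} holds at every step up to $k$. Proposition~\ref{prop:mpc_iss} then yields
\[
\|y_k-\ygoal\|\le\beta(\|y_0-\ygoal\|,k)+\eT(\eE),
\qquad
\eT=\alpha_1^{-1}\circ\alpha_2\circ\alpha_3^{-1}(L_V\,\cdot).
\]
This map is class $\mathcal{K}$ as a composition of $\mathcal{K}$/$\mathcal{K}_\infty$ functions, so it is exactly \eqref{eq:G_iss}. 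Taking the conjunction over $k$ gives $\Sigma_H\models\mathcal{C}_H$.

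The main obstacle is the mismatch between the per-step structure of the contract and the global nature of the ISS estimate. $G_{\mathrm{ISS}}^k$ depends on the whole past of $\tilde w$, whereas Definition~\ref{def:ag_contract} is stated globally. I would handle this by proving the stronger prefix-wise implication $\bigwedge_{j\le k}A_{\mathrm{mis}}^j\Rightarrow G_{\mathrm{ISS}}^k$, which clearly implies the global version.

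A second, more delicate point is that Assumption~\ref{asm:mpc_stability}(ii) bounds the descent in terms of $\|\yhat-\ygoal\|$ in the $\|\cdot\|_{\mathrm{live}}$ seminorm rather than a full norm. I will check that Proposition~\ref{prop:mpc_iss}'s Lyapunov argument only needs the sandwich and descent bounds in that same seminorm, so no additional norm equivalence is needed. Finally, I must ensure that $\beta$ does not depend on the trajectory. This holds because the assumptions are uniform over $(\zhat_k,r_k)\in\mathcal{Z}\times\mathcal{R}$, so the same comparison function applies to every trajectory.
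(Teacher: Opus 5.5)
Your proposal is correct and follows essentially the paper's argument. As in the paper, $G_{\mathrm{ref}}^k$ comes from the rate constraint \eqref{eq:mpc_rate}, and $G_{\mathrm{ISS}}^k$ comes from Lemma~\ref{lem:dissipation} and Proposition~\ref{prop:mpc_iss} under the past mismatch bounds $\bigwedge_{j\le k}A_{\mathrm{mis}}^j$. Your feasibility induction just makes explicit what the paper leaves implicit, including the base case $y_0\in\YN$, which is not among the corollary's stated hypotheses and which you, like the paper, must import from condition~(iii).

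One small slip: your Step~1 derives feasibility at step $k$ from $A_{\mathrm{mis}}^{j}$, $j\le k$, so Step~2 does not by itself show that $G_{\mathrm{ref}}^k$ is independent of the mismatch assumption. This is harmless for $\Sigma_H\models\mathcal{C}_H$; the paper obtains that independence by taking feasibility at every step directly from condition~(iii).
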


\subsection{Safety Layer: Continuous-Time System $\Sigma_L$}
\label{subsec:safety_layer}

The low-level safety layer $\Sigma_L$ comprises a tracking controller and an
explicit reference governor (ERG). The tracker generates the control input
$u(t) = \kappa(x(t), v(t))$ (cf.~\eqref{eq:tracking_controller}) to regulate the plant state
toward a filtered reference $v(t)$ produced by the ERG; together they must
satisfy $\mathcal{C}_L$. We develop the layer directly in the linear--quadratic
setting that fits the HESS application of Section~\ref{sec:HESS}; the general
nonlinear formulation, of which this is the closed-form specialization, is given
in ~\appref{app:nonlinear}.

\subsubsection{Inner Loop}
\label{sec:inner_loop}
The tracker regulates the tracked element $h_r(x(t)) \in \mathbb{R}^p$ to follow
the ERG-filtered reference $v(t) \in \mathbb{R}^p$, with tracking error
$e(t) := h_r(x(t)) - v(t)$. Provided the controller $\kappa$ admits an
error-coordinate representation in which the disturbance enters additively, the
closed-loop error dynamics take the linear form
\begin{equation}
\dot e = A e + B w - B_v \psi_{\mathrm{dyn}}(\dot v, \ddot v, \ldots),
\label{eq:err_dyn_lin}
\end{equation}
where $A$ captures the autonomous error evolution under the tracking controller,
$B$ the disturbance channel, and $\psi_{\mathrm{dyn}}$ the coupling from the
ERG-induced reference motion. The \emph{frozen} dynamics, obtained with
$\dot v = 0$ (reference held constant over a sampling interval), are
\begin{equation}
\dot e = A e + B w, \quad \|e^{At}\| \le m\,e^{-\lambda_e t},
\label{eq:frozen_lin}
\end{equation}
with $A$ Hurwitz and $m \ge 1,\ \lambda_e > 0$. The disturbance bound $\|w\| \le W_{\max}$ yields
$\|B w\| \le H_{\max} := \|B\|\,W_{\max}$.

\begin{assumption}[Quadratic ISS]
\label{asm:quad_iss}
There exists $P \succ 0$ with $A^\top P + P A \prec 0$ such that
$V(e) = e^\top P e$ satisfies, along the frozen dynamics~\eqref{eq:frozen_lin},
\begin{align}
&\lambda_{\min}(P)\|e\|^2 \le V(e) \le \lambda_{\max}(P)\|e\|^2,\\
&\dot V(e)\le -\alpha_d(\|e\|)+\sigma(\|Bw\|),
\label{eq:quad_iss}
\end{align}
and $\|Bw\|\le H_{\max}:=\|B\|\,W_{\max}$ for some $\alpha_d, \sigma \in \mathcal{K}_\infty$.
\end{assumption}

Under Assumption~\ref{asm:quad_iss}, the ISS ultimate sublevel threshold has the
closed form
\begin{equation}
\bar V_h(H_{\max}) = \lambda_{\max}(P)\,\bigl(\alpha_d^{-1}(\sigma(H_{\max}))\bigr)^2,
\label{eq:Vbar_def}
\end{equation}
and the sublevel set $\Omega_h := \{e : V(e) \le \bar V_h(H_{\max})\}$ is forward
invariant, with every trajectory entering it in finite time. Since $A$ is
Hurwitz, the ISS bound~\eqref{eq:frozen_lin} gives the envelope in Definition~\ref{def:iss} with $\beta(s,t) = m\,e^{-\lambda_e t} s$ and
$\gamma(s) = \gamma_{\mathrm{ISS}} s$ with $\gamma_{\mathrm{ISS}}:={m}/\lambda_e$:
\begin{equation}
\|e(t)\|\le m\,e^{-\lambda_e t}\|e(0)\|+\gamma_{\mathrm{ISS}}\,H_{\max}.
\label{eq:iss_envelope}
\end{equation}

\subsubsection{Explicit Reference Governor}
\label{sec:erg}
We introduce the ERG~\cite{GaroneNicotraERG, NicotraUAVERG} to enforce the safety
constraints~\eqref{eq:safe_set_compact} on the physical plant. Writing the $i$-th row of
$Cx \le d$ in error coordinates ($x = e + v$, with $v$ zero-padded onto the
untracked coordinates) as $c_{e,i}^\top e \le d_i(v)$, $d_i(v) := d_i - c_i^\top v$, the Lyapunov threshold in~\cite{NicotraUAVERG} (cf. general form in \appref{app:subdefthre}) is 
\begin{equation}
\Gamma_i(v) = \frac{d_i(v)^2}{c_{e,i}^\top P^{-1} c_{e,i}},
\qquad \Gamma(v) := \min_i \Gamma_i(v).
\label{eq:Gamma_i}
\end{equation}
Geometrically, $\Gamma_i(v)$ is the largest $V$-sublevel set contained in the
$i$-th constraint half-space. The commanded reference $r(t)$ from the ZOH is
filtered to $v(t)$ by the ERG dynamics
\begin{equation}
\dot v = \kappa_{\mathrm{erg}}\cdot \Delta(e,v)\cdot \rho(r,v),
\quad \Delta(e,v) := \max\bigl(0,\ \Gamma(v) - V(e)\bigr),
\label{eq:erg_dynamics}
\end{equation}
with gain $\kappa_{\mathrm{erg}} > 0$ and navigation field
$\rho(r,v) = \rho_{\mathrm{att}}(r,v) + \sum_i \rho_{\mathrm{rep},i}(v)$
combining attraction toward $r$ and repulsion from constraint boundaries. The attraction field steers $v$ toward the target $r$,
\begin{equation}
\rho_{\mathrm{att}}(r,v) =
\begin{cases}
\dfrac{r-v}{\|r-v\|} & \|r-v\| \ge \eta,\\[6pt]
\dfrac{r-v}{\eta} & \|r-v\| < \eta,
\end{cases}
\label{eq:rho_att}
\end{equation}
where $\eta$ is the smoothing radius (Assumption~\ref{asm:smoothing}); for each
constraint boundary $i$ the repulsive field is
\begin{equation}
\rho_{\mathrm{rep},i}(v) = +\,\eta_i\,
   \frac{\nabla_v \Gamma_i(v)}{\|\nabla_v \Gamma_i(v)\|},
\label{eq:rho-rep}
\end{equation}
with $\eta_i$ the repulsion strength, so that the combined field is
\begin{equation}
\rho(r,v) = \rho_{\mathrm{att}}(r,v) + \sum_i \rho_{\mathrm{rep},i}(v).
\label{eq:rho_combined}
\end{equation}
The design condition
$\delta_{\mathrm{rep}} := \sum_i \eta_i < 1$ ensures forward progress. When
$\Delta(e,v) = 0$ the reference is frozen ($\dot v = 0$) and $V(e)$ decreases by
Assumption~\ref{asm:quad_iss} until $\Delta(e,v) > 0$, giving robust constraint
satisfaction regardless of $r$. The feasible reference set is
$\mathcal{V} := \{v \in \mathbb{R}^p : \Gamma(v) > 0\}$.

\begin{assumption}[Admissible Disturbance]
\label{asm:adm_dist}
The disturbance bound $H_{\max}$ is admissible:
$\bar V_h(H_{\max}) < \inf_{v \in \mathcal{V}} \Gamma(v)$, so that
$\Omega_h \times \mathcal{V} \subset \tilde K := \{(e,v) : V(e) \le \Gamma(v)\}$.
\end{assumption}

Figure~\ref{fig:rci_sets} illustrates $\tilde K$ in the augmented
state space.

\begin{theorem}[Robust Invariance of ERG]
\label{thm:erg_invariance}
Under Assumptions~\ref{asm:quad_iss}--\ref{asm:adm_dist}, the set
$\tilde K = \{(e,v) : e^\top P e \le \Gamma(v)\}$ is robustly forward invariant
for the composite dynamics~\eqref{eq:err_dyn_lin},~\eqref{eq:erg_dynamics} for
all $\|w\| \le W_{\max}$ and any commanded reference $r(\cdot)$.
\end{theorem}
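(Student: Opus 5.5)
We prove invariance of $\tilde K$ with a Nagumo-type boundary argument on the scalar barrier
\[
h(e,v) := \Gamma(v) - V(e), \qquad \tilde K=\{h\ge 0\}.
\]
The goal is to show that, along any solution of \eqref{eq:err_dyn_lin}, \eqref{eq:erg_dynamics} starting in $\tilde K$, the function $h$ cannot become negative, for every admissible $w$ and every piecewise-constant $r(\cdot)$. Since $\Gamma=\min_i\Gamma_i$ is only locally Lipschitz, we work with the upper-right Dini derivative, or equivalently with the active constraint indices at the boundary. We also need $d_i(v)\ge 0$ on $\mathcal V$, so that $\Gamma_i$ is smooth there with gradient $\nabla_v\Gamma_i = -2d_i(v)c_i/(c_{e,i}^\top P^{-1}c_{e,i})$. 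Invariance of $\mathcal V$ itself follows from the repulsion terms together with the factor $\Delta$ (see the third paragraph).

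\textbf{Key step: freezing on the boundary.} On $\partial\tilde K$ we have $V(e)=\Gamma(v)$, hence $\Delta(e,v)=0$, hence $\dot v=0$. The $v$-motion therefore contributes nothing to $\dot\Gamma$ there. The coupling term $\psi_{\mathrm{dyn}}(\dot v,\ddot v,\dots)$ in \eqref{eq:err_dyn_lin} also vanishes, because $v$ is frozen, so the error obeys the frozen dynamics \eqref{eq:frozen_lin}. Assumption~\ref{asm:quad_iss} then gives
\[
\dot h \ge \alpha_d(\|e\|)-\sigma(\|Bw\|)\ge \alpha_d(\|e\|)-\sigma(H_{\max}).
\]
Next we use Assumption~\ref{asm:adm_dist}. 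On the boundary, $V(e)=\Gamma(v)\ge\inf_{\mathcal V}\Gamma>\bar V_h(H_{\max})$. By \eqref{eq:Vbar_def} and $V\le\lambda_{\max}(P)\|e\|^2$, this yields $\|e\|>\alpha_d^{-1}(\sigma(H_{\max}))$, so $\dot h>0$ strictly on $\partial\tilde K$. This is a strict inward-pointing condition, uniform in $w$ and in $r$, because $r$ enters only through $\rho$, which is multiplied by $\Delta=0$. Since the field is locally Lipschitz and the vector field points strictly inward at the boundary, Nagumo's theorem (or a first-exit-time contradiction) gives forward invariance. The case where $r$ jumps at the sampling instants is harmless: $v$ is absolutely continuous and $r$ never enters the boundary dynamics.

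\textbf{Main obstacle.} The delicate point is the freezing claim: that $\psi_{\mathrm{dyn}}$ and higher derivatives of $v$ vanish at the boundary. $\Delta$ is only Lipschitz (it involves a $\max$), so $\ddot v$ may jump. We plan to handle this by a contradiction argument instead of evaluating derivatives pointwise. Suppose $h(t^*)<0$ for some $t^*$, and let $t_0=\sup\{t<t^*: h(t)\ge 0\}$. On $(t_0,t^*]$ we have $\Delta\equiv0$, so $v$ is constant and $\dot v\equiv 0$ identically on that interval; hence $\psi_{\mathrm{dyn}}\equiv0$ there, and the frozen dynamics hold exactly. Moreover $\Gamma(v)$ is constant on this interval. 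Meanwhile $V(e)>\Gamma(v)>\bar V_h$, so $\dot V<0$ there, which makes $h$ increasing on $(t_0,t^*]$. This contradicts $h(t_0)=0>h(t^*)$.

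\textbf{Invariance of $\mathcal V$.} We also check that $v$ stays in $\mathcal V$, so that $\inf_{\mathcal V}\Gamma$ applies. Near $d_i(v)=0$ we have $\Delta\le\Gamma_i\to0$, so $\dot v\to0$. In addition, the repulsion term with $\delta_{\mathrm{rep}}<1$ points along $\nabla\Gamma_i$, i.e.\ inward. Hence $\Gamma(v(t))>0$ is preserved by the same exit-time argument.
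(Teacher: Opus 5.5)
Your core argument is the paper's. On $\partial\tilde K$ one has $\Delta=0$, hence $\dot v=0$ and $\Gamma(v)$ is momentarily constant. Assumption~\ref{asm:adm_dist} gives $V(e)=\Gamma(v)>\bar V_h$, hence $\alpha_d(\|e\|)>\sigma(H_{\max})$ and $\dot V<0$.

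Where you go further is in closing the argument. The paper asserts that at a boundary point the error obeys the frozen dynamics, and concludes from the boundary sign condition (in the appendix, via the zeroing-barrier result of Ames et al.). This overlooks that $\psi_{\mathrm{dyn}}(\dot v,\ddot v,\ldots)$ involves higher derivatives of $v$, which need not vanish at an instant where only $\dot v$ does. Your first-exit argument fixes this cleanly. Since $\Delta\equiv 0$ on the entire exterior $\{V(e)>\Gamma(v)\}$, on any putative excursion interval $(t_0,t^*]$ the reference is constant and the dynamics are exactly frozen. Even $\dot V\le 0$ there would give the contradiction. This makes your Nagumo paragraph, with its pointwise claim that $\psi_{\mathrm{dyn}}$ vanishes on $\partial\tilde K$, unnecessary; you can drop it.

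Your paragraph on invariance of $\mathcal V$ contains a false step. With constant-strength repulsion and $\delta_{\mathrm{rep}}<1$, the repulsion does not make $\rho$ point inward. If $r$ lies beyond constraint $i$, $\rho$ has a component toward that boundary of at least $1-\delta_{\mathrm{rep}}>0$, which is precisely what $\delta_{\mathrm{rep}}<1$ is designed for, and here $r(\cdot)$ is arbitrary. Nor does ``$\dot v\to 0$'' by itself exclude reaching $\{d_i(v)=0\}$ in finite time.

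What works is the rate. Using $\Delta\le\Gamma\le\Gamma_i$ and $\|\rho\|\le 1+\delta_{\mathrm{rep}}$, you get $\|\dot v\|\le K_i\,d_i(v)^2$ with $K_i:=\kappa_{\mathrm{erg}}(1+\delta_{\mathrm{rep}})/(c_{e,i}^\top P^{-1}c_{e,i})$. Hence $\tfrac{d}{dt}d_i(v(t))\ge-\|c_i\|K_i\,d_i(v(t))^2$, and so $d_i(v(t))\ge d_i(v(0))/\bigl(1+\|c_i\|K_i\,d_i(v(0))\,t\bigr)>0$.

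Be aware, though, that the boundary step actually needs $\Gamma(v(t))>\bar V_h$, not merely $\Gamma(v(t))>0$. Because $\Gamma$ is continuous and vanishes on $\partial\mathcal V$, $\inf_{\mathcal V}\Gamma=0$ as soon as some constraint depends on $v$. So Assumption~\ref{asm:adm_dist} as written cannot hold there; you and the paper both use it at face value. Under any non-vacuous reading, such as an infimum over a tightened reference set, one must also show that $v$ stays in $\{\Gamma>\bar V_h\}$. The $\Delta$-factor argument does not give this, and it can fail for this navigation field when $r$ lies beyond a constraint. That gap is inherited from the paper, but your $\mathcal V$-paragraph is aimed at the wrong set.
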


\begin{proof}
On the boundary $\partial\tilde K$, where $V(e) = \Gamma(v)$, we have
$\Delta(e,v) = 0$, so the ERG law~\eqref{eq:erg_dynamics} gives $\dot v = 0$ and
the error obeys the frozen dynamics~\eqref{eq:frozen_lin}. Since
$V(e) = \Gamma(v) \ge \inf_{v\in\mathcal V}\Gamma(v) > \bar V_h$ by
Assumption~\ref{asm:adm_dist}, the state lies outside $\Omega_h$,
$\alpha_d(\|e\|) > \sigma(H_{\max})$ and $\dot V(e)<0$ by~\eqref{eq:quad_iss}.
With $\dot v = 0$ the threshold $\Gamma(v)$ is constant, so
$\tfrac{d}{dt}\bigl(V(e) - \Gamma(v)\bigr) = \dot V(e) < 0$ on
$\partial\tilde K$. Hence $\tilde K$ is robustly forward invariant.
\end{proof}

\begin{corollary}[Low-Level Contract]
\label{cor:lowlevel}
Under Assumptions~\ref{asm:quad_iss}--\ref{asm:adm_dist}, $C_{\mathrm{tss}}$, and
$(e(kT_s), v(kT_s)) \in \tilde K$ for all $k$, the low-level contract
$\Sigma_L \models \mathcal{C}_L$ holds:
\begin{enumerate}
\item[(a)] \emph{(Safety)} $G_{\mathrm{safe}}$ holds for any reference $r(t)$:
$x(t) \in \mathcal{X}_{\mathrm{safe}}$ and $u(t) = \kappa(x(t),v(t)) \in \mathcal{U}$
for all $t \in I_k$ (requires $A_{\mathrm{env}}$, not $A_{\mathrm{ref}}$).
\item[(b)] \emph{(Tracking)} Under $A_{\mathrm{ref}}$ and $C_{\mathrm{tss}}$,
$G^k_{\mathrm{track}}$ holds for all $k$ with the per-coordinate bound
\begin{equation}
\varepsilon_{L,i} := \sqrt{\bar V_h(H_{\max})\,[P^{-1}]_{ii}}.
\label{eq:eps_L}
\end{equation}
\end{enumerate}
\end{corollary}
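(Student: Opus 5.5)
Part (a) follows from Theorem~\ref{thm:erg_invariance} by a geometric argument, and part (b) from the ISS envelope~\eqref{eq:iss_envelope} combined with $C_{\mathrm{tss}}$.

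\textbf{Part (a).} Since $(e(kT_s),v(kT_s))\in\tilde K$ and $A_{\mathrm{env}}$ gives $\|w\|\le W_{\max}$ on $\mathbb{I}_k$, Theorem~\ref{thm:erg_invariance} yields $V(e(t))\le\Gamma(v(t))$ for all $t\in\mathbb{I}_k$. This holds for any commanded $r$, so $A_{\mathrm{ref}}$ is not used here.

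Next I convert invariance into constraint satisfaction, one row at a time. For each $i$, Cauchy--Schwarz in the $P$-metric gives
\[
c_{e,i}^\top e\le \sqrt{c_{e,i}^\top P^{-1}c_{e,i}}\,\sqrt{e^\top P e}\le \sqrt{c_{e,i}^\top P^{-1}c_{e,i}\,\Gamma_i(v)}=|d_i(v)|.
\]
Because $v\in\mathcal V$, $\Gamma(v)>0$. I will argue that $d_i(v)\ge 0$ throughout, since the repulsion field~\eqref{eq:rho-rep} and the freezing of $v$ on $\partial\tilde K$ keep $v$ on the admissible side. Then $c_{e,i}^\top e\le d_i(v)$, i.e.\ $Cx\le d$. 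Admissibility $u\in\mathcal U$ I would take as part of the design of $\kappa$ on the compact set $\tilde K$, since the paper defers this to the appendix.

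\textbf{Part (b).} I would combine three observations.
\begin{itemize}
\item[(i)] Under the ERG, $v$ converges toward $r_k$ within the interval. Here $A_{\mathrm{ref}}$ bounds the initial gap $\|r_k-v(kT_s)\|$ by roughly $\bar r+\varepsilon_L$, and the attraction field with $\delta_{\mathrm{rep}}<1$ ensures forward progress at a rate set by $\kappa_{\mathrm{erg}}$ and $\Delta$.
\item[(ii)] Once $v$ has settled, the dynamics are frozen. By Assumption~\ref{asm:quad_iss}, $e$ enters $\Omega_h$ in finite time and stays there, and this happens before $kT_s+\tau_{LL}\le (k+1)T_s$ by $C_{\mathrm{tss}}$.
\item[(iii)] Inside $\Omega_h$, the same Cauchy--Schwarz bound with $c=e_i$ gives $|e_i|\le\sqrt{\bar V_h[P^{-1}]_{ii}}=\varepsilon_{L,i}$.
\end{itemize}
Evaluating at $t=(k+1)T_s$ with $h_{r,i}(x)-r_{k,i}=e_i+(v_i-r_{k,i})$ then gives $G^k_{\mathrm{track}}$, provided $v((k+1)T_s)=r_k$ or the residual is absorbed.

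\textbf{Main obstacle.} The hard step is (i)--(ii): showing that $v$ actually reaches $r_k$, or an $\eta$-neighborhood of it, within $\tau_{LL}$, and that $e$ lands in $\Omega_h$ by the end of the interval. This requires a settling-time estimate combining the decay $m e^{-\lambda_e t}$ from~\eqref{eq:frozen_lin} with the ERG velocity lower bound $\kappa_{\mathrm{erg}}(1-\delta_{\mathrm{rep}})\Delta$.

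I would rely on Prop.~\ref{prop:settling} for this, reading the definition of $\tau_{LL}$ as exactly this estimate so that $C_{\mathrm{tss}}$ delivers it. I would also assume $r_k$ lies in the interior of $\mathcal V$, so that $v$ is not blocked by repulsion. If exact reachability fails, the fallback is to interpret $\varepsilon_{L,i}$ as also absorbing the $\eta$-smoothing residual.
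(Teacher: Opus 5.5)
Your proposal is correct and follows essentially the paper's route. For (a), you use robust invariance of $\tilde K$ from Theorem~\ref{thm:erg_invariance} and then the row-wise bound $V(e)\le\Gamma(v)\le\Gamma_i(v)\Rightarrow c_{e,i}^\top e\le d_i(v)$; for (b), you use entry into $\Omega_h$ within $\tau_{LL}$ once $v$ has reached $r_k$, followed by the Cauchy--Schwarz projection $|e_i|\le\sqrt{\bar V_h[P^{-1}]_{ii}}$. Your extra care about the sign of $d_i(v)$ and the residual $v-r_k$ covers points the paper passes over. For $u\in\mathcal U$, the paper needs no separate design assumption: it applies the same row-wise argument, with the input limits encoded as additional rows of $\Gamma$.
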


\begin{proof}
(a) By Theorem~\ref{thm:erg_invariance}, $\tilde K$ is robustly forward
invariant, so $V(e) \le \Gamma(v) \le \Gamma_i(v)$ for every $i$, giving
$c_{e,i}^\top e \le d_i(v)$ and hence $x(t)\in\mathcal{X}_{\mathrm{safe}}$,
$u(t)\in\mathcal{U}$. (b) Once $v(t) = r$ for $t \ge t_0$, the frozen
dynamics~\eqref{eq:frozen_lin} drive the error into $\Omega_h$ within time
$\tau_{LL}$; the Cauchy--Schwarz projection of the ellipsoid
$\{e : e^\top P e \le \bar V_h\}$ onto coordinate $i$ gives
$|e_i| \le \sqrt{\bar V_h\,[P^{-1}]_{ii}}$, i.e.~\eqref{eq:eps_L}.
\end{proof}

\begin{figure}[ht]
    \centering
    \includegraphics[width=0.9\linewidth]{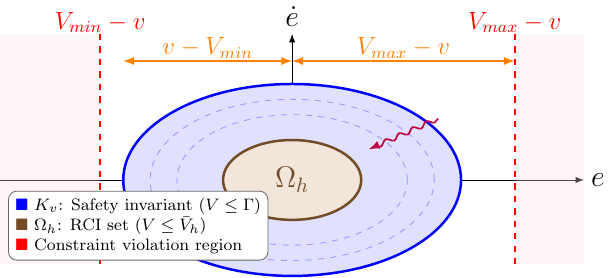}
	\caption{Invariance in the augmented $(e,v)$-space. The set
	$\tilde K = \{(e,v) : V(e) \le \Gamma(v)\}$ is safe; for fixed $v$ the slice
	$\mathcal{K}_v = \{e : e^\top P e \le \Gamma(v)\}$ (blue) shrinks as $v$
	approaches constraint boundaries where $\Gamma(v)$ decreases. The ISS
	ultimate bound $\Omega_h = \{e : e^\top P e \le \bar V_h\}$ (brown) satisfies
	$\Omega_h \subset \mathcal{K}_v$ for all admissible $v$
	(Assumption~\ref{asm:adm_dist}). On $\partial\tilde K$ the ERG enforces
	$\dot v = 0$, and the frozen quadratic ISS dynamics yield $\dot V(e) < 0$,
	ensuring invariance.}
    \label{fig:rci_sets}
\end{figure}


\subsubsection{Settling Time and Timing Condition}
\label{subsubsec:erg}
The tracking guarantee $G^k_{\mathrm{track}}$ requires timing compatibility
$C_{\mathrm{tss}} : T_s \ge \tau_{LL}$ (Definition~\ref{def:timing_compatibility}). To obtain an
explicit estimate of $\tau_{LL}$ we impose three operating conditions on the
ERG.

\begin{assumption}[Negligible Smoothing]
\label{asm:smoothing}
The smoothing radius $\eta$ in~\eqref{eq:rho_att} satisfies $\eta \ll \bar r$;
the reference is treated as arrived once $\|v - r\| < \eta$, so $\dot v \approx 0$
and the frozen dynamics~\eqref{eq:frozen_lin} govern the error.
\end{assumption}

\begin{assumption}[Reference Admissibility]
\label{asm:ref_adm}
Each target lies in the interior of the feasible set: there is $\delta_r > 0$
with $r_k \in \mathcal{V}_{\delta_r} := \{v\in\mathcal V : \mathrm{dist}(v,\partial\mathcal V)\ge\delta_r\}$
for all $k$. This is a liveness-only requirement; safety is unaffected.
\end{assumption}

\begin{assumption}[Guaranteed Progress]
\label{asm:progress}
There exist a uniform margin $\underline r > 0$ and $\tau_1 > 0$ such that
$\Gamma(v(t)) - V(e(t)) > \underline r$ for all $t \in [t_k, t_k + \tau_1]$,
implying $\underline\kappa\,\underline r \le \|\dot v\| \le \bar\kappa\,\bar\Gamma$
with $\underline\kappa = \kappa_{\mathrm{erg}}(1-\delta_{\mathrm{rep}})$,
$\bar\kappa = \kappa_{\mathrm{erg}}(1+\delta_{\mathrm{rep}})$, and
$\bar\Gamma := \sup_{v\in\mathcal V}\Gamma(v)$ the uniform threshold bound over
the feasible set.
\end{assumption}

\begin{proposition}[Settling Time]
\label{prop:settling}
Under Assumptions~\ref{asm:quad_iss}--\ref{asm:progress}, for any reference step
$\|r_{k+1} - r_k\| \le \bar{r}$ and any tolerance $\delta > 0$, the error is
bounded by $\|e(t)\| \le (1+\delta)\varepsilon = (1+\delta)\,\gamma_{\mathrm{ISS}} H_{\max}$
for all $t \ge t_k + \tau_{LL}$, where
\begin{equation}
\tau_{LL} :=
    \underbrace{\frac{\bar{r} + \varepsilon}{\underline{\kappa}\,\underline{r}}}_{\tau_1\ (\text{transit})}
    + \underbrace{\frac{1}{\lambda_e}\ln\!\Bigl(\frac{m\, e_{\mathrm{peak}}}{\delta\,\varepsilon}\Bigr)}_{\tau_2\ (\text{decay})},
\label{eq:tau_LL}
\end{equation}
with peak error $e_{\mathrm{peak}} := m\,e^{-\lambda_e \tau_1}(\bar{r}+\varepsilon)
  + \gamma_{\mathrm{ISS}}(H_{\max}+M)$ and $M := c_\psi \bar{\kappa}\bar{\Gamma}$,
where $c_\psi$ bounds the reference-coupling channel
$\|B_v \psi_{\mathrm{dyn}}\|$ per unit threshold rate.
\end{proposition}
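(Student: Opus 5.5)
The proof splits the interval after a reference step into a \emph{transit phase} $[t_k, t_k+\tau_1]$, during which the ERG-filtered reference $v$ travels from its previous value to within $\eta$ of the new target $r_{k+1}$, and a \emph{decay phase} $[t_k+\tau_1, t_k+\tau_1+\tau_2]$, during which $v$ is frozen (Assumption~\ref{asm:smoothing}) and the error contracts under the frozen dynamics~\eqref{eq:frozen_lin}. Throughout, Theorem~\ref{thm:erg_invariance} keeps $(e,v)\in\tilde K$, so the trajectory stays in the region where the quadratic ISS estimates and the bounds on $\Gamma$ are valid.

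\emph{Transit.} At the switching instant, $v(t_k)$ lies within $\varepsilon$ of the previous target $r_k$: either the previous interval ended in the tracking regime, or we start inside $\Omega_h$. Hence the distance to be covered satisfies $\|r_{k+1}-v(t_k)\|\le \bar r+\varepsilon$, using $\|r_{k+1}-r_k\|\le\bar r$ from $A_{\mathrm{ref}}$. By Assumption~\ref{asm:progress}, on $[t_k,t_k+\tau_1]$ the ERG moves $v$ with speed at least $\underline\kappa\,\underline r$. Moreover, the projection of $\dot v$ onto the direction $(r_{k+1}-v)/\|r_{k+1}-v\|$ is at least $\kappa_{\mathrm{erg}}\Delta(1-\delta_{\mathrm{rep}})$, because each repulsive term has norm $\eta_i$ and $\sum_i\eta_i<1$. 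Therefore $\|r_{k+1}-v\|$ decreases at rate at least $\underline\kappa\,\underline r$, and it reaches $\eta$ (effectively zero) no later than $\tau_1=(\bar r+\varepsilon)/(\underline\kappa\,\underline r)$. Assumption~\ref{asm:ref_adm} keeps $r_{k+1}$ in the interior of $\mathcal V$, so the repulsion does not stall $v$ before arrival.

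\emph{Peak error.} During transit, write the error by variation of constants for~\eqref{eq:err_dyn_lin}:
\[
e(t)=e^{A(t-t_k)}e(t_k)+\int_{t_k}^t e^{A(t-s)}\bigl(Bw(s)-B_v\psi_{\mathrm{dyn}}(s)\bigr)\,ds .
\]
The forcing is bounded by $\|Bw\|\le H_{\max}$ and $\|B_v\psi_{\mathrm{dyn}}\|\le c_\psi\|\dot v\|\le c_\psi\bar\kappa\bar\Gamma=M$. With $\|e^{At}\|\le me^{-\lambda_e t}$, the convolution term is bounded by $(m/\lambda_e)(H_{\max}+M)=\gamma_{\mathrm{ISS}}(H_{\max}+M)$. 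The error measured against the new target is at most $\bar r+\varepsilon$, and the homogeneous part decays by the factor $me^{-\lambda_e\tau_1}$. Together these give $\|e(t_k+\tau_1)\|\le e_{\mathrm{peak}}$.

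\emph{Decay.} After $t_k+\tau_1$ we have $\dot v\approx 0$, and~\eqref{eq:iss_envelope} restarted at $t_k+\tau_1$ yields
\[
\|e(t)\|\le me^{-\lambda_e(t-t_k-\tau_1)}e_{\mathrm{peak}}+\varepsilon .
\]
Requiring the transient term to be at most $\delta\varepsilon$ and solving for the elapsed time gives exactly $\tau_2=\lambda_e^{-1}\ln\bigl(me_{\mathrm{peak}}/(\delta\varepsilon)\bigr)$, and hence $\|e(t)\|\le(1+\delta)\varepsilon$ for all $t\ge t_k+\tau_{LL}$. Note that the constant $m$ enters the transient twice: once inside $e_{\mathrm{peak}}$ and once in the decay bound. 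This accounts for the factor $m\,e_{\mathrm{peak}}$ in the logarithm, which is a conservative but valid choice.

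The main obstacle is the transit step. Assumption~\ref{asm:progress} provides a uniform margin only on $[t_k,t_k+\tau_1]$, while $\tau_1$ is itself defined by the arrival time, so there is a circularity between the margin and the interval on which it holds. To break it, I would argue by contradiction: if $v$ has not arrived by $t_k+\tau_1$, then the margin has held on the whole interval, and integrating the radial speed bound shows that $v$ must have covered distance $\bar r+\varepsilon$ and therefore arrived. I would also state explicitly that the decomposition of $e$ relative to the moving target versus the fixed target $r_{k+1}$ is exactly where the $\bar r+\varepsilon$ term in $e_{\mathrm{peak}}$ comes from.
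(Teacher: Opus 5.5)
Your proposal follows the paper's appendix proof essentially step for step. It uses the same transit/decay split, the same ISS envelope evaluated at $t_k+\tau_1$ with forcing $H_{\max}+M$ and initial error $\bar r+\varepsilon$ to produce $e_{\mathrm{peak}}$, and the same restart of the envelope to solve $m\,e^{-\lambda_e\tau_2}e_{\mathrm{peak}}=\delta\varepsilon$ for $\tau_2$. Your radial-projection bound (using $\delta_{\mathrm{rep}}<1$) and the arrival-by-contradiction argument usefully tighten the paper's bare claim that $\|\dot v\|\ge\underline\kappa\,\underline r$ forces arrival within $\tau_1$, since speed alone does not guarantee approach to the target.
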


\begin{proof}
The error envelope is bounded across the two sequential ERG phases---transit ($v\rightarrow r_k$) and decay ($\dot v = 0$), illustrated in Fig.~\ref{fig:tracking_decomp}; the phase-wise envelope bounds yielding \eqref{eq:tau_LL} are derived in \appref{app:settling}.
\end{proof}
Combining Corollary~\ref{cor:lowlevel} (low-level contract) with the timing
compatibility condition $C_{\mathrm{tss}}:T_s\ge\tau_{LL}$ (Prop.~\ref{prop:settling})
and the vertical refinement conditions of Definition~\ref{def:vertical_refinement},
the interconnection is recursively well-posed, so Theorem~\ref{thm:layered_correctness} yields
$\varphi_{\mathrm{safe}}\wedge\varphi_{\mathrm{live}}$ for $\Sigma_H\triangleright\Sigma_L$.

\begin{figure}[t!]
    \centering
    \includegraphics[width=1\linewidth]{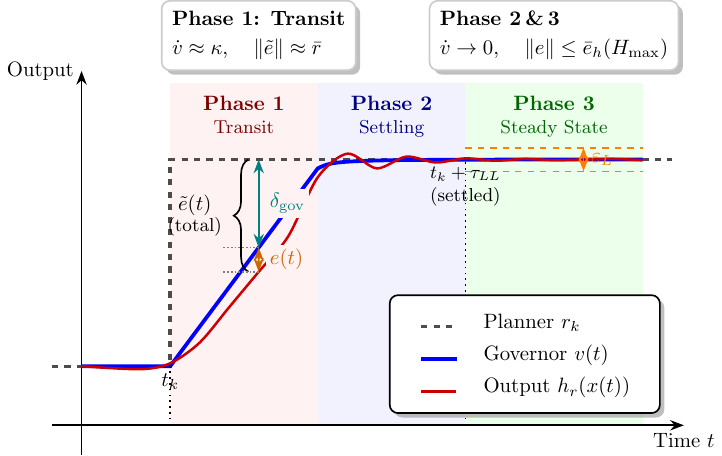}
\caption{\textbf{Hierarchical decomposition of tracking errors.} 
The Total Planner Deviation aggregates the Governor Lag $\|v(t) - r_k\|$ and the Inner-Loop Control Error $e(t)$.
The state $V_{\mathrm{gr}}(t)$ may transiently exceed the tracking tube $\pm\varepsilon$ but settles before the next sample.}
    \label{fig:tracking_decomp}
\end{figure}


\section{Case Study: Hybrid Energy Storage System}
\label{sec:HESS}

We instantiate the heterogeneous contract framework on a Hybrid Energy 
Storage System (HESS) comprising a battery and a supercapacitor connected to a DC bus,
which exhibit a natural slow--fast split: the battery supplies sustained energy
while the supercapacitor handles fast transients. The HESS is an instance of the
continuous-time plant with state $x := [V_{\mathrm{gr}},\, I_S,\, I_B,$ $\, E_S,\, E_B]^\top
\in \mathcal{X} \subseteq \mathbb{R}^5$,
control input $u := [u_S,\, u_B]^\top \in \mathcal{U}
\subseteq \mathbb{R}^2$,
disturbance $w(t) \in \mathcal{W} \subseteq \mathbb{R}$,
and known load $d(t) \in \mathbb{R}$, treated as an explicit function of $t$ rather than an input or disturbance since its profile is known a priori. The resulting dynamics, with substituted time-dependent term $d$, are assumed to satisfy the continuity condition in Definition~\ref{def:ct_plant}.
The dynamics $\dot{x} = f(x,\, u,\, w)$
in~\eqref{eq:ct_system} take the form:
\begin{subequations}
\label{eq:HESS_dynamics}
\begin{align}
    \dot{V}_{\mathrm{gr}}
        &= \frac{1}{C_{\mathrm{bus}}}
           \bigl(I_S + I_B + d\bigr),
        \label{eq:HESS_V} \\
    \dot{I}_S &= u_S + w(t),
        \label{eq:HESS_IS} \\
    \dot{I}_B &= u_B,
        \label{eq:HESS_IB} \\
    \dot{E}_S &= c_S\, V_{\mathrm{gr}}\, I_S,
        \label{eq:HESS_ES} \\
    \dot{E}_B &= c_B\, V_{\mathrm{gr}}\, I_B,
        \label{eq:HESS_EB}
\end{align}
\end{subequations}
where $V_{\mathrm{gr}}$ is the DC bus voltage,
$C_{\mathrm{bus}}$ the bus capacitance,
$I_S$ and $I_B$ the supercapacitor and battery currents,
$E_S$ and $E_B$ their states of charge, and
$c_S, c_B > 0$ the voltage-to-energy
conversion coefficients.
The load $d(t)$ is known, continuously differentiable,
and bounded ($|d(t)| \leq \rho_d$);
the MPC incorporates forecasts $\hat{d}[k]$,
while forecast errors contribute to the one-step
prediction error $\tilde{w}_k$
in~\eqref{eq:realised_dynamics}.
The disturbance satisfies $w(t) \in \mathcal{W} := [-W_{\max}, W_{\max}]$
for all $t \ge 0$, corresponding to the environmental assumption $A_{\mathrm{env}}$~\eqref{eq:A_env}. The dynamics exhibit a three-tier slow--fast cascade: the fast currents $I_S, I_B$
drive the intermediate bus voltage $V_{\mathrm{gr}}$ through~\eqref{eq:HESS_V},
which in turn feeds the slow energy states via the conversion couplings
$\dot E_S = c_S V_{\mathrm{gr}} I_S$ and $\dot E_B = c_B V_{\mathrm{gr}} I_B$,
with the load $d(t)$ entering at the voltage node.

\textbf{Output maps.}
The tracked output
$h_r(x) = P_r\, x \in \mathbb{R}^2$
in Definition~\ref{def:ct_plant} selects the
fast states that the low-level controller
$u = \kappa(x,\, v)$ as in~\eqref{eq:tracking_controller}
regulates the fast states (the currents, whose dynamics are not modeled in the high-level), i.e.,
\begin{align}
    h_r(x)
    = P_r\, x
    = [V_{\mathrm{gr}},\, I_B]^\top,
    \qquad
    P_r \in \mathbb{R}^{2 \times 5}.
    \label{eq:HESS_hr}
\end{align}
The planner measurement
$y_k = h_y(x(t_k)) \in \mathbb{R}^2$
in Definition~\ref{def:dt_planner} extracts
the slow observable states:
\begin{align}
    y_k
    = h_y(x(t_k))
    = [E_B(t_k)\, E_S(t_k)]^\top.
    \label{eq:HESS_hy}
\end{align}
\textbf{Specifications.}
As in Problem 1, the global requirement decomposes as
$\varphi_{\mathrm{global}}
= \varphi_{\mathrm{safe}} \land \varphi_{\mathrm{live}}$, where the safety specifications are imposed as physical limits on currents and the bus voltage (hard constraints):
\begin{align}
   \varphi_{\mathrm{safe}}:\quad
    &\forall t \geq 0: (V_{\mathrm{gr}},\, I_S,\, I_B)(t)
    \in \mathcal{X}_{\mathrm{safe}},
    \label{eq:HESS_safety_spec}\\
    &\mathcal{X}_{\mathrm{safe}} := \mathcal{R}_V \times \mathcal{R}_S
       \times \mathcal{R}_B,
    \label{eq:HESS_safe_set}
\end{align}
with $\mathcal{R}_V :=[V_{\min},V_{\max}]=[V_{\mathrm{nom}}-\Delta V,\;V_{\mathrm{nom}}+\Delta V]$,
$\mathcal{R}_S := [-\bar{I}_S,\, \bar{I}_S]$, and
$\mathcal{R}_B := [-\bar{I}_B,\, \bar{I}_B]$. Moreover,
\begin{align}
    \varphi_{\mathrm{live}}:\quad
    \exists\, T:\;
    |E_B(t) - E_B^{\mathrm{goal}}|
    \leq \varepsilon_H
    \quad \forall\, t \geq T,
    \label{eq:HESS_liveness_spec}
\end{align}
where the battery's SOC must achieve its goal in a finite time. This is the instance of Definition~\ref{def:liveness_spec} obtained with the seminorm $\|\hat y\|_{\mathrm{live}}=|E_B-E_B^{\mathrm{goal}}|$ selecting the regulated (battery) coordinate; the supercapacitor energy carries no target.
We also include input constraints.
\begin{align}
    \mathcal{U}
    := \bigl\{(u_S,\, u_B)
       : |u_S| \leq U_{S,\max},\;
         |u_B| \leq \bar{U}_B\bigr\}.
    \label{eq:HESS_input_constraints}
\end{align}
In addition to the hard constraints on current and voltage levels above, we introduce a soft constraint on the states of charge (SOC) below. This soft constraint is imposed only on the abstracted model; whether it should be regarded as part of the specification is therefore context-dependent.

\emph{Soft constraints (SOC bounds)}\;
$\varphi_{\mathrm{safe}}^{\mathrm{soft}}$:
\begin{align}
    \varphi_{\mathrm{safe}}^{\mathrm{soft}}:\quad
    (E_S,\, E_B)(t)
    \in [\underline{E}_S,\, \bar{E}_S]
       \times [\underline{E}_B,\, \bar{E}_B].
    \label{eq:HESS_soft_safety}
\end{align}

\begin{remark}\label{rem:superbatt}
The supercapacitor has no explicit target; it acts as a safety-layer buffer absorbing
high-frequency disturbances the slower battery cannot track. This is a component-level
safety--liveness decomposition: supercapacitor for safety (disturbance rejection),
battery for liveness (energy target).
\end{remark}

\textbf{Abstract model.} For regulated voltage operation,
the high-level system model assumes that $V_{\mathrm{gr}}^{\mathrm{ref}}[k] = V_{\mathrm{nom}}$
is constant; only $I_B^{\mathrm{ref}}[k]$ varies. This and equation~\eqref{eq:HESS_V} gives us an equation $I_S[k] = - I_B[k] - \hat{d}[k]$. Using this equation to \eqref{eq:HESS_EB}--\eqref{eq:HESS_ES}, the following is the dynamics of the planner that operates on slow states $\hat{y}_k := [E_B[k], E_S[k]]^\top$, 
abstracting fast voltage/current dynamics
\begin{align}
    \hat{y}_{k+1} &= \fhat(\yhat_{k|k},\,
                  \zhat_{k|k},\, r_{k}) \nonumber\\
                  &:= 
    \begin{bmatrix}
        E_B[k] + T_s c_B V_{\mathrm{nom}} I_B[k] \\
        E_S[k] + T_s c_S V_{\mathrm{nom}} (-I_B[k] - \hat{d}[k])
    \end{bmatrix} .\label{eq:bmatrixf}
\end{align} 
The prediction model uses the sampled disturbance $\hat d_k = d(t_k)$
held over the interval, with the inter-sample variation $\tilde d(t) = d(t) -
d(t_k)$ bounded by the load's Lipschitz constant and subsumed into $W_{\max}$.

With these settings, the local contracts follow from Corollary~\ref{cor:high_level_contract}
(high level) and Corollary~\ref{cor:lowlevel} (low-level tracker), and
Theorem~\ref{thm:layered_correctness} gives $\varphi_{\mathrm{safe}}\wedge\varphi_{\mathrm{live}}$;
the input constraints~\eqref{eq:HESS_input_constraints} are verified in
Proposition~\ref{cor:low}. The SOC bounds are operational—a violation signals an
energy-management failure, not immediate damage—so they are treated as soft: the
low-level controller~\eqref{eq:tracking_controller} does not enforce them, while the
high-level system folds them into reference generation to steer back within bounds
over time.

\subsection{High-Level Subsystem $\Sigma_H$.}
\label{subsec:HESS_high_level}

The MPC problem \eqref{eq:mpc_ocp} is instantiated as
\begin{subequations}
\label{eq:HESS_MPC}
\begin{align}
    \min_{r_{k:k+N-1}} \quad & \sum_{j=0}^{N-1} \|E_B[k+j+1] - E_B^{\mathrm{goal}}\|_Q^2 \\
\text{s.t.}\quad &\hat{y}_{k+j+1} = \hat{f}(\hat{y}_{k+j}, \hat{z}_{k+j}, r_{k+j}),
\\
& r_{k+j} = I_B^{\mathrm{ref}}[k+j],\quad I_B[k+j] \in \mathcal{R}_B,\label{eq:battbound} \\
    &\hspace{-1.5cm}|I_B[k+j] - I_B[k+j-1]| \le \frac{\bar{U}_B}{\lambda_B}\left(1 - e^{-\lambda_B T_s}\right), \label{eq:MPC_slew}\\
& -\overline{I}_{S} \leq -I_B[k+j]-\hat{d}[k+j] \leq \overline{I}_{S},\label{eq:MPC_assump_constraints}\\
& E_B[k+j] \in [\underline{E}_{B}+ \delta_B, \overline{E}_{B}- \delta_B], \\
& E_S[k+j] \in [\underline{E}_{S}+ \delta_S, \overline{E}_{S}- \delta_S],
\end{align}
\end{subequations}
Here the reference is the battery current, $r_{k+j} = I_B^{\mathrm{ref}}[k+j]$, and
$\hat{z}_{k+j}$ collects the untracked components; the explicit form of $\hat{f}$ is
given in~\eqref{eq:bmatrixf}.
The margins $\delta_B, \delta_S \ge 0$ in the SOC constraints absorb the accumulated prediction error $\eta_{k+j} := y_{k+j} - \hat y_{k+j|k}$ (each
one-step increment bounded by $\varepsilon_E$), so that
feasibility of the tightened (nominal) problem implies satisfaction of the
original limits for every admissible disturbance. Two standard choices
apply~\cite{Rawlings2017-vw}: open-loop tightening, $\delta_B(j) = \delta_S(j)
= j\,\varepsilon_E$, whose margins grow linearly with the horizon and may become
infeasible for long $N$; or tube MPC, which uses an ancillary feedback
$K \in \mathbb{R}^{1\times 2}$ to confine the error to the minimal robust
positive invariant set, giving horizon-independent constant
margins at the cost of one additional gain to tune.

\begin{corollary}[High-level verification]\label{cor:high}
The MPC satisfies Assumption \ref{asm:mismatch} as robust SOC constraints
    are tightened in~\eqref{eq:HESS_MPC}, ensuring
    $\YN$ absorbs the worst-case
    mismatch. Assumption~\ref{asm:mpc_stability} holds as the quadratic cost 
yields descent~\eqref{eq:descent_condition} with $\alpha_3(s) = Q\,s^2$ (scalar weight on $E_B$), and Lipschitz regularity~\eqref{eq:lipschitz} holds for quadratic $V_N^*$. Thus, from Corollary \ref{cor:high_level_contract}, $\Sigma_H \models \mathcal{C}_H = \bigwedge_k \mathcal{C}_H^k$ holds if equation \eqref{eq:verticalcomp} condition holds with the following $\varepsilon_T$:
\begin{align}
  \varepsilon_T=\sqrt{\tfrac{\bar c_V}{Q}}\cdot\sqrt{\tfrac{L_V}{Q}}\cdot\sqrt{\varepsilon_E}
  =\frac{\sqrt{\bar c_V\,L_V\,\varepsilon_E}}{Q}. 
    \label{eq:eT_quadratic}
\end{align}
\end{corollary}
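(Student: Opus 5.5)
The plan is to verify the hypotheses of Corollary~\ref{cor:high_level_contract} for the HESS instance (Assumptions~\ref{asm:mismatch} and~\ref{asm:mpc_stability}). Then I would specialize the gain formula of Proposition~\ref{prop:mpc_iss}, $\eT=\alpha_1^{-1}(\alpha_2(\alpha_3^{-1}(L_V\eE)))$, to the quadratic comparison functions of~\eqref{eq:HESS_MPC}.

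\emph{Step 1: robust recursive feasibility.} The abstract model~\eqref{eq:bmatrixf} is affine in $(\yhat,r)$. The input constraints~\eqref{eq:battbound}, \eqref{eq:MPC_slew}, \eqref{eq:MPC_assump_constraints} do not involve $\yhat$. Only the SOC constraints couple to the mismatch. I would use the tube choice of margins: $\delta_B,\delta_S$ at least the radius of the minimal RPI set of the error $\eta_{k+j}$, whose increments are bounded by $\eE$ under $A_{\mathrm{mis}}$. The standard tube-MPC argument~\cite{Rawlings2017-vw} then shows the shifted sequence $(r_1^*,\dots,r_{N-1}^*,r_{\mathrm f})$ is feasible at $k+1$ for every $\|\tilde w\|\le\eE$. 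This gives~\eqref{eq:feasibility_margin}. With terminal set $\Yf$ taken as a tightened SOC box around $E_B^{\mathrm{goal}}$ with $I_B=0$, which is robust invariant, Assumption~\ref{asm:mismatch} holds.

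\emph{Step 2: stability constants.} The stage cost is $Q\|E_B-E_B^{\mathrm{goal}}\|^2$ in the live seminorm. Since $V_N^*$ includes the first stage cost, and $E_B[k+1]$ differs from $E_B[k]$ by a bounded feasible increment, I would take $\alpha_1(s)=Qs^2$ (up to a constant absorbed here). Because the system is linear and the constraints polyhedral, $V_N^*$ is piecewise quadratic. On the compact set $\YN$ this gives $\alpha_2(s)=\bar c_V s^2$ for some $\bar c_V$, and a Lipschitz constant $L_V$ as the maximum gradient norm. The descent~\eqref{eq:descent_condition} with $\alpha_3(s)=Qs^2$ follows from the usual shifted-sequence argument: $V_N^*(\yhat^+)\le V_N^*(\yhat)-\ell(\yhat,r_0^*)$, with terminal cost decrease on $\Yf$.

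\emph{Step 3: composition.} With these constants, $\alpha_3^{-1}(L_V\eE)=\sqrt{L_V\eE/Q}$. Next, $\alpha_2(\cdot)=\bar c_V L_V\eE/Q$, and applying $\alpha_1^{-1}(t)=\sqrt{t/Q}$ yields $\sqrt{\bar c_V L_V\eE}/Q$, which is~\eqref{eq:eT_quadratic}. Corollary~\ref{cor:high_level_contract} then gives $\Sigma_H\models\mathcal C_H$, with $G_{\mathrm{ref}}$ supplied by~\eqref{eq:MPC_slew}, which is of the form~\eqref{eq:mpc_rate}. The requirement~\eqref{eq:verticalcomp} is just appended as the hypothesis linking $\eT$ to $\varepsilon_H$.

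\emph{Main obstacle.} I expect Step 2 to be the main difficulty. The cost penalizes only $E_B$, so $V_N^*$ is merely positive semidefinite in the full $\yhat$. The sandwich bounds therefore have to be read in the seminorm $\|\cdot\|_{\mathrm{live}}$, as Definition~\ref{def:liveness_spec} allows. The $E_S$ coordinate must be kept in $\YN$ by Step 1 alone, not by the Lyapunov argument. A further subtlety is that the descent must measure the stage cost at $\yhat_k$ while the cost is written at $k+j+1$. I would handle this by reindexing, using the bounded one-step increment of $E_B$.
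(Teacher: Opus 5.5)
Your Steps 2 and 3 are the paper's proof. The paper also reads the bounds in the $E_B$-seminorm with $\alpha_1(s)=\alpha_3(s)=Qs^2$ and $\alpha_2(s)=\bar c_V s^2$, then composes $\alpha_1^{-1}\circ\alpha_2\circ\alpha_3^{-1}$. Your arithmetic is correct. The gap is Step 1, which the paper's proof does not carry out and which fails as you wrote it.

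In \eqref{eq:bmatrixf} both energies are integrators driven by the single input $I_B$ with opposite signs, $B=T_sV_{\mathrm{nom}}[c_B,\,-c_S]^\top$. So $\nu=[c_S,\ c_B]$ satisfies $\nu B=0$: the weighted total $E_B/c_B+E_S/c_S$ moves only with $\hat d$ and $\tilde w$. Hence, for every ancillary gain $K\in\mathbb{R}^{1\times 2}$, the error map $I+BK$ keeps an eigenvalue at $1$, with $\nu\eta^+=\nu\eta+\nu\tilde w$. The mismatch can therefore grow linearly along $\nu$, so there is no finite minimal RPI set and no constant $\delta_S$.

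Your terminal box with $I_B=0$ is not robustly invariant either. $E_B^+=E_B+\tilde w_{E_B}$ exits through the boundary under an outward $\tilde w_{E_B}$, and $E_S$ drifts by $-T_sc_SV_{\mathrm{nom}}\hat d$ per step even nominally.

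There is also a mismatch with the MPC law itself. Since \eqref{eq:mpc_ocp} is re-initialized at the measured $y_{k+1}$ and $\hat f$ has identity state matrix, the unmodified shifted sequence reproduces the old prediction offset by $\tilde w$ at every stage. That requires margins growing by $\varepsilon_E$ per stage (the paper's $\delta(j)=j\varepsilon_E$), not a constant tube margin. Constant margins belong to a tube law containing $K(y-\bar y)$, which is not $r_k=r_0^*(y_k)$.

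The paper avoids this by hard-tightening only the controllable $E_B$ channel and declaring the $E_S$ bounds soft. The $E_B$ margin is $(N-1)\varepsilon_{E_B}$, and step-to-step feasibility is confirmed by an LP check in Section~\ref{sec:simulation}. As a result, neither Assumption~\ref{asm:mismatch} nor the Lyapunov argument has to govern $E_S$. Keeping $E_S$ in $\YN$ ``by Step 1 alone'' is precisely what cannot be done.

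To repair Step 1, restrict the hard constraints and the terminal set to $E_B$, and soften $E_S$. Use a terminal law $I_B=-\kappa_f\,(E_B-E_B^{\mathrm{goal}})$ that corrects at least $\varepsilon_{E_B}$ per step within the current and slew limits. This also protects your seminorm bound $\alpha_2$: a hard $E_S$ limit can force $I_B\neq 0$ at $E_B=E_B^{\mathrm{goal}}$, giving $V_N^*>0$ where the seminorm vanishes.

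Finally, your indexing fix via the bounded one-step increment $\Delta$ only yields $V_N^*\ge Q\,(\max\{0,s-\Delta\})^2$, which is not of class $\mathcal{K}_\infty$. Instead, add the decision-independent term $Q|E_B[k]-E_B^{\mathrm{goal}}|^2$ to the cost in \eqref{eq:HESS_MPC}. This leaves $r^*$ unchanged and restores the form of \eqref{eq:mpc_ocp}, so $\alpha_1(s)=Qs^2$ holds exactly. The paper simply asserts $V_N^*(\hat y)\ge\ell(\hat y,r_0^*)$.
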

\begin{proof}
With scalar weight $Q>0$ on $E_B$, write the stage cost as
$\ell(\hat y,r)=Q\,|E_B-E_B^{\mathrm{goal}}|^2$ and read the
sandwich~\eqref{eq:sandwich_bounds} and descent~\eqref{eq:descent_condition} bounds in the scalar
$E_B$-error: since $V_N^*(\hat y)\ge \ell(\hat y,r_0^*)\ge Q\,|E_B-E_B^{\mathrm{goal}}|^2$, the
lower bound is rigorous with $\alpha_1(s)=Q\,s^2$; the descent~\eqref{eq:descent_condition}
gives $\alpha_3(s)=Q\,s^2$, and the upper bound $\alpha_2(s)=\bar c_V s^2$ of Assumption~2 (i), with
$\bar c_V\le\lambda_{\max}(Q)/(1-\rho)$ under standard terminal conditions
(exponential controllability of rate $\rho\in(0,1)$). The supercapacitor
energy $E_S$ is not regulated (Remark~\ref{rem:superbatt}); its boundedness is
enforced by the soft constraints~\eqref{eq:HESS_soft_safety}, not by this Lyapunov argument.
Composing $\varepsilon_T=\alpha_1^{-1}(\alpha_2(\alpha_3^{-1}(L_V\varepsilon_E)))$
yields~\eqref{eq:eT_quadratic}.
\end{proof}

\subsection{Low-Level Subsystem Design ($\Sigma_L$)}
\label{sec:HESS_low_level}

\textbf{Tracking controller $\Sigma_{\mathrm{track}}$.}
The inner-loop stabilizes voltage and current dynamics to track the 
reference $r = [V_{\mathrm{nom}}, I_B^{\mathrm{ref}}]^\top$.

\emph{Battery current regulation.}
The proportional controller
\begin{align}
    u_B(t) = -\lambda_B (I_B(t) - I_B^{\mathrm{ref}}), \quad \lambda_B > 0
    \label{eq:HESS_uB}
\end{align}
yields exponential convergence:
\begin{align}
    I_B(t) = I_B^{\mathrm{ref}} + (I_B(t_0) - I_B^{\mathrm{ref}}) e^{-\lambda_B(t-t_0)}
    \label{eq:HESS_IB_solution}
\end{align}
and guarantees the tracking of $I_B^{\mathrm{ref}}$.

\emph{Voltage regulation.}
The voltage-current subsystem reduces to:
\begin{align}
    \dot{V}_{\mathrm{gr}} = \frac{1}{C_{\mathrm{bus}}}(I_S + \bar{d}(t)), \quad
    \dot{I}_S = u_S + w(t),
    \label{eq:HESS_reduced}
\end{align}
where $\bar{d}(t) := d(t) + I_B(t)$ is known (load plus settling battery 
current) and $w(t)$ is a bounded unknown disturbance. The supercapacitor 
regulates voltage via disturbance-canceling PD control
\begin{align}
    u_S = -C_{\mathrm{bus}}\,k_1\,(V_{\mathrm{gr}} - v) - k_2\,(I_S + \bar{d}) - \dot{\bar{d}},
    \label{eq:uS}
\end{align}
where $v(t)$ is the filtered reference from the ERG, and $k_1, k_2 > 0$ are design parameters. The three terms in~\eqref{eq:uS} admit a physical interpretation:
$-C_{\mathrm{bus}}\,k_1\,(V_{\mathrm{gr}} - v)$ is proportional feedback on the voltage error,
$-k_2\,(I_S + \bar{d})$ is derivative (damping) feedback since
$(I_S + \bar{d})/C_{\mathrm{bus}} = \dot{V}_{\mathrm{gr}}$,
and $-\dot{\bar{d}}$ cancels the disturbance rate
(requiring an estimate of $\dot{\bar{d}}$).
The uncompensated reference accelerations in $\psi_{\mathrm{dyn}}$
are handled by the ERG.

Define the tracking error $e(t) := [e_V,\ e_{\dot V}]^\top$ where $e_V := V_{gr}-v, e_{\dot V} := \dot V_{gr}-\dot v$.
The closed-loop error dynamics is
\begin{align}
    \dot{e} = A\,e + B\,w(t) - B_v\,\psi_{\mathrm{dyn}}(t),
    \nonumber
\end{align}
where
\begin{align}
    A = \begin{bmatrix} 0 & 1 \\ -k_1 & -k_2 \end{bmatrix},
    \quad
    B = \begin{bmatrix} 0 \\ \tfrac{1}{C_{\mathrm{bus}}} \end{bmatrix},
    \quad
    B_v = \begin{bmatrix} 0 \\ 1 \end{bmatrix},
    \nonumber
\end{align}
and the feedforward residual is
\begin{align}
    \psi_{\mathrm{dyn}}(t) := k_2\,\dot{v}(t) + \ddot{v}(t),
    \nonumber
\end{align}
with $c_{\psi}$ chosen to bound both the $k_{2}\dot v$ and $\ddot v$
contributions of $\psi_{\mathrm{dyn}}$, i.e.\
$\|B_{v}\psi_{\mathrm{dyn}}\|\le c_{\psi}\cdot(\text{threshold rate})$.
With the plant-level disturbance bound
$\|w\| \leq W_{\max}$, the disturbance channel
$B$ contributes
\begin{align}
    \|B\,w\| \leq \|B\|\,|w| \leq \tfrac{W_{\max}}{C_{\mathrm{bus}}}
    =: H_{\max}.
    \nonumber
\end{align}
\emph{Gain selection.}
Choose $k_1, k_2 > 0$ such that $A$ is Hurwitz with decay rate: $\lambda_e := \min|\mathrm{Re}(\mathrm{eig}(A))| > 0$.
We assume $\lambda_B > \lambda_e$: the battery current loop settles within a sampling interval (see \appref{app:mismatch}), while the supercapacitor voltage loop, carrying the slower residual regulation, decays at the smaller rate $\lambda_e$.

\textit{ERG design.}
Each constraint in~\eqref{eq:HESS_safety_spec} is expressed in terms of 
error state $e$ and reference $v$:
\begin{align}
    c_{a,i}^\top e + c_{b,i}^\top \dot{e} \leq d_i(v),
    \label{eq:HESS_constraint_form}
\end{align}
where vectors $c_{a,i}, c_{b,i}$ and bound $d_i(v)$ are derived for each 
constraint. $\Gamma(v)$ incorporates input constraints, decreasing near boundaries 
to ensure feasible control effort (see~\appref{app:constraints}).

\begin{proposition}[Low-level verification.]\label{cor:low}
With the low-level design $\Sigma_L$ in Section \ref{sec:HESS_low_level} and low-level contract $\mathcal{C}_L$ in \eqref{eq:lowlevelc}
with following $\bar{r}$ and $\varepsilon_L$, under timing compatibility condition $C_{tss}$ in Proposition \ref{prop:settling}, $\Sigma_L \models \mathcal{C}_L$ holds. Moreover, the input constraints hold $|u_B(t)| \leq \bar{U}_B$ and $|u_S(t)| \leq \bar{U}_S$ for all $t \in [t_k, t_{k+1})$:
\begin{align}
\bar{r} &:= [0,\bar{r}_B] \\
\bar{r}_B &:= \frac{\bar{U}_B}{\lambda_B}(1 - e^{-\lambda_B T_s})\\
\varepsilon_L &:= [\varepsilon_{L,V} := \varepsilon_V ,\varepsilon_{L,I_B} := \frac{\bar{U}_B}{\lambda_B}e^{-\lambda_B T_s}]\label{eq:varepsilon1}
\end{align}  
where $\varepsilon_{L,V} := \varepsilon_V$ in \eqref{eq:eps_L}, i.e., $\sqrt{\Vbar(H_{\max}) \cdot [P^{-1}]_{11}}$. The matrix $P$ is from quadratic Lyapunov function $V(e) = e^\top Pe$ that satisfies the Lyapunov equation: $A^\top P + PA = -R$, $R$ is a symmetric positive definite matrix. 
\end{proposition}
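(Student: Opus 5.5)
The plan is to verify the two coordinates of the low-level contract separately, since the HESS low-level design decouples into the battery current loop \eqref{eq:HESS_uB} and the supercapacitor voltage loop \eqref{eq:uS}. The safety guarantee $G_{\mathrm{safe}}^k$ and the voltage part of $G_{\mathrm{track}}^k$ will come from Corollary~\ref{cor:lowlevel}. The battery current part, and the input bound on $u_B$, will come from the closed-form solution \eqref{eq:HESS_IB_solution}.

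\emph{Voltage channel.} The error dynamics $\dot e = Ae + Bw - B_v\psi_{\mathrm{dyn}}$ are exactly of the form \eqref{eq:err_dyn_lin}, with $A$ Hurwitz by the gain selection. Solving $A^\top P + PA = -R$ with $R\succ 0$ gives $\dot V = -e^\top R e + 2e^\top P B w \le -\lambda_{\min}(R)\|e\|^2 + 2\|PB\|\,\|e\|\,H_{\max}C_{\mathrm{bus}}\|B\|$. By Young's inequality this becomes Assumption~\ref{asm:quad_iss} with quadratic $\alpha_d$ and $\sigma$, so $\bar V_h$ in \eqref{eq:Vbar_def} is explicit.

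The remaining hypotheses are handled as follows. Assumption~\ref{asm:adm_dist} is taken as a design requirement on $W_{\max}$ relative to $\inf_v\Gamma(v)$. The rows \eqref{eq:HESS_constraint_form}, including the input rows for $u_S$ described in the appendix, enter $\Gamma$ as in \eqref{eq:Gamma_i}. Theorem~\ref{thm:erg_invariance} and Corollary~\ref{cor:lowlevel}(a) then give $V_{\mathrm{gr}}\in\mathcal R_V$, $I_S\in\mathcal R_S$ and $|u_S|\le \bar U_S$ for every commanded $r$. Corollary~\ref{cor:lowlevel}(b) with $C_{\mathrm{tss}}$ gives $|V_{\mathrm{gr}}(t_{k+1}) - V_{\mathrm{nom}}| \le \sqrt{\bar V_h[P^{-1}]_{11}}=\varepsilon_{L,V}$. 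Since the voltage reference is constant at $V_{\mathrm{nom}}$, we have $\bar r_V = 0$, so $A_{\mathrm{ref}}$ is trivial in this coordinate.

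\emph{Battery channel.} From \eqref{eq:HESS_IB_solution} on $\mathbb I_k$, $I_B$ moves monotonically between $I_B(t_k)$ and $I_B^{\mathrm{ref}}=r_k$. Hence $|I_B(t)|\le\max(|I_B(t_k)|,|r_k|)\le \bar I_B$, using \eqref{eq:battbound} and induction from $I_B(0)\in\mathcal R_B$; this gives the $\mathcal R_B$ part of $G_{\mathrm{safe}}$.

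For the input, $|u_B(t)|=\lambda_B|I_B(t)-r_k|e^{-\lambda_B(t-t_k)}\le \lambda_B|I_B(t_k)-r_k|$. I would bound $|I_B(t_k)-r_k|\le |I_B(t_k)-r_{k-1}|+|r_k-r_{k-1}|$ and use the inductive hypothesis $|I_B(t_k)-r_{k-1}|\le \varepsilon_{L,I_B}$ together with $A_{\mathrm{ref}}$, $|r_k-r_{k-1}|\le\bar r_B$. Since $\varepsilon_{L,I_B}+\bar r_B = \bar U_B/\lambda_B$ exactly, we get $|u_B|\le \bar U_B$ on $\mathbb I_k$. The inductive hypothesis then propagates: $|I_B(t_{k+1})-r_k| = |I_B(t_k)-r_k|e^{-\lambda_B T_s}\le (\bar U_B/\lambda_B)e^{-\lambda_B T_s}=\varepsilon_{L,I_B}$, which is $G_{\mathrm{track}}^k$ in the current coordinate. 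The base case requires $|I_B(0)-r_{-1}|\le\varepsilon_{L,I_B}$, which is supplied by condition (i) of Definition~\ref{def:heterogeneous_wellposed}.

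\emph{Main obstacle.} The delicate step is the coupling between the loops. The voltage loop sees $\bar d = d + I_B$, and the control law \eqref{eq:uS} uses $\dot{\bar d}$, which includes the transient $\dot I_B = u_B$. I would argue that the feedforward cancels this exactly, because $\bar d$ is known, so the error dynamics remain as stated. The residual magnitude enters only through $|u_S|$; that is bounded inside $\Gamma$ via the input-constraint rows, using $|\dot I_B|\le\bar U_B$ from the step above.

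Finally, the ERG reference motion $\psi_{\mathrm{dyn}}$ is absorbed by Proposition~\ref{prop:settling}, and $C_{\mathrm{tss}}$ with $\lambda_B>\lambda_e$ ensures both channels have settled by $t_{k+1}$. Combining the two channels over all $k$ yields $\Sigma_L\models\mathcal C_L$ together with the input constraints.
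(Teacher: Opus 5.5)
Your proposal is correct and is essentially the paper's proof: it uses the same split into the voltage--current loop (Corollary~\ref{cor:lowlevel} plus the input rows of the appendix for $u_S$) and the battery loop, where your step $|I_B(t_k)-r_k|\le \varepsilon_{L,I_B}+\bar r_B=\bar U_B/\lambda_B$ followed by exponential decay is exactly the induction the paper gives in the downward handshake of Proposition~\ref{prop:mismatch_decomposed}, with $G^{k-1}_{\mathrm{track}}$ instead of a recast $A^{k}_{\mathrm{ref}}$ as the inductive hypothesis. What you add (the explicit Lyapunov/Young check of Assumption~\ref{asm:quad_iss}, the explicit base case, and the monotonicity argument for $I_B\in\mathcal{R}_B$, which the paper omits and which rests on the planner constraint~\eqref{eq:battbound} rather than on $A_{\mathrm{ref}}$, so that piece of safety is not unilateral) only makes explicit what the paper leaves implicit.
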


\begin{proof}
We separately consider the voltage-current linear subsystem \eqref{eq:HESS_reduced} and the battery subsystem \eqref{eq:HESS_IB_solution}.

\textbf{(The voltage-current subsystem \eqref{eq:HESS_reduced})} Assumption \ref{asm:quad_iss} holds as we choose $k_1, k_2 > 0$ such that $A$ is Hurwitz. 
Let $H_{\max}$ satisfy Assumption~\ref{asm:adm_dist}. Assumptions~\ref{asm:smoothing}--\ref{asm:progress} (negligible smoothing, 
reference admissibility, guaranteed progress) hold by design: 
$\eta \ll \bar{r}$, MPC incorporates constraint margins, and 
$\delta_{\mathrm{rep}} < 1$.
Then, from Corollary \ref{cor:lowlevel}, the ultimate bound in \eqref{eq:eps_L} is instantiated 
to the tracking bound for the voltage (1st element) of the voltage-current subsystem as \eqref{eq:varepsilon1}.

\textit{Input constraint $|u_S(t)| \leq \bar{U}_S$.} Refer to \appref{app:constraints}. 

\textbf{(The battery subsystem \eqref{eq:HESS_IB_solution})} \textit{Input constraint $|u_B(t)| \leq \bar{U}_B$.} During $\forall t \in [t_k, t_{k+1})$, 
\begin{align}
    |u_B(t)| = \lambda_B\,|I_B(t) - I_B^{\mathrm{ref}}[k]| \leq \lambda_B \cdot \frac{\bar{U}_B}{\lambda_B} = \bar{U}_B.
\end{align}
Since the $A_{\mathrm{ref}}^k$ holds for all $k$, the input constraint is satisfied for all $t \geq 0$.
\end{proof}

\begin{remark}
The timing compatibility condition $C_{tss}$ can be satisfied either by aggressive tuning of the low-level gains ($k_1, k_2$) to reduce the settling time $\tau_{LL}$, or by selecting a sufficiently long high-level sampling period $T_s$.
\end{remark}

Now, we verify the upward refinement condition in Definition~\ref{def:vertical_refinement} by deriving a bound on the discrete energy mismatch. 

\begin{proposition}[Vertical refinement condition]
\label{prop:mismatch_decomposed}
Let $\Sigma_H \triangleright\Sigma_L$ designed as in Corollaries \ref{cor:lowlevel} and \ref{cor:high}. Then, the refinement condition in Definition~\ref{def:vertical_refinement} holds at each $k$. In particular, the upward condition is derived as a discrete energy mismatch:
\begin{align}
\label{eq:affine_bound}
\varepsilon_{E}(\varepsilon_{L})
:=\bigl\|\,\underbrace{\Delta_{\mathrm{tr}}}_{\text{Transit Cost}}
+\underbrace{d_{\mathrm{ss}}\cdot\tau_{2}}_{\text{Settling Cost}}\,\bigr\|,
\end{align}
where $\Delta_{\mathrm{tr}},\,d_{\mathrm{ss}}\in\mathbb{R}^{2}$ and $\|\cdot\|$ is the same norm used in $A_{\mathrm{mis}}^{k}$~\eqref{eq:A_mis},
aggregating the per-channel (battery, supercapacitor) energy-mismatch contributions. $\Delta_{\mathrm{tr}} \in \mathbb{R}^2$ represents the fixed energy cost of the worst-case maneuver (transit and acceleration), and $\mathbf{d}_{\mathrm{ss}} \in \mathbb{R}^2$ represents the steady-state drift rate due to persistent environmental disturbances.\footnote{The explicit derivation of these coefficients is provided in \appref{app:mismatch}.}
\end{proposition}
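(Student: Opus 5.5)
The plan is to verify the two handshakes of Definition~\ref{def:vertical_refinement} separately, the downward one being immediate. The MPC rate constraint~\eqref{eq:MPC_slew} is exactly $|r_k-r_{k-1}|\le \bar r_B$ with $\bar r_B$ as in Proposition~\ref{cor:low}. The voltage reference is the constant $V_{\mathrm{nom}}$, which explains the zero entry of $\bar r$. Hence $G^k_{\mathrm{ref}}\subseteq A^k_{\mathrm{ref}}$ for every $k$, and \eqref{eq:downward} holds.

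The substance is the upward condition \eqref{eq:upward}. Fix $k$ and assume $G^{k-1}_{\mathrm{track}}$ together with $A_{\mathrm{env}}$. I would write the true one-step energy increments by integrating \eqref{eq:HESS_EB}--\eqref{eq:HESS_ES} over $\mathbb{I}_{k-1}$. Subtracting the prediction \eqref{eq:bmatrixf} then gives
\begin{align*}
\tilde w_k=\int_{t_{k-1}}^{t_k}\begin{bmatrix} c_B\,(V_{\mathrm{gr}}I_B-V_{\mathrm{nom}}I_B^{\mathrm{ref}}[k-1])\\ c_S\,(V_{\mathrm{gr}}I_S-V_{\mathrm{nom}}(-I_B^{\mathrm{ref}}[k-1]-\hat d[k-1]))\end{bmatrix}dt .
\end{align*}
Each integrand splits via $V_{\mathrm{gr}}I-V_{\mathrm{nom}}I^{\mathrm{ref}}=(V_{\mathrm{gr}}-V_{\mathrm{nom}})I+V_{\mathrm{nom}}(I-I^{\mathrm{ref}})$.

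I would then cut the interval into the two phases used in Proposition~\ref{prop:settling}: transit $[t_{k-1},t_{k-1}+\tau_1]$ and settling $[t_{k-1}+\tau_1,t_k]$. On the transit phase the errors are bounded uniformly using the following ingredients:
\begin{itemize}
\item safety (Corollary~\ref{cor:lowlevel}(a)) keeps $V_{\mathrm{gr}}\in\mathcal R_V$ and $|I_S|\le\bar I_S$, $|I_B|\le\bar I_B$;
\item the explicit solution \eqref{eq:HESS_IB_solution}, started from $|I_B(t_{k-1})-I_B^{\mathrm{ref}}[k-1]|\le \varepsilon_{L,I_B}+\bar r_B$, bounds the battery-current error;
\item the peak bound $e_{\mathrm{peak}}$ bounds the voltage error.
\end{itemize}
Integrating these bounded quantities over the finite transit time $\tau_1$ yields a constant vector, and I would collect it as $\Delta_{\mathrm{tr}}$, which is independent of $k$.

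On the settling phase, Proposition~\ref{prop:settling} gives $|e_V|\le(1+\delta)\gamma_{\mathrm{ISS}}H_{\max}$. The battery-current error is exponentially small there, since $\lambda_B>\lambda_e$. The supercapacitor current is $I_S=C_{\mathrm{bus}}\dot V_{\mathrm{gr}}-I_B-d$, so $I_S+I_B^{\mathrm{ref}}+\hat d$ is bounded by the tracking errors plus the load variation $|d(t)-d(t_{k-1})|$, which was folded into $W_{\max}$. These give pointwise integrand bounds that are constant in time. I would call the resulting rate vector $d_{\mathrm{ss}}$; integrating over a duration of at most $\tau_2$ (or $T_s-\tau_1$) gives the term $d_{\mathrm{ss}}\tau_2$. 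Summing the two phases and taking the norm used in $A^k_{\mathrm{mis}}$ then gives $\|\tilde w_k\|\le\varepsilon_E$ with $\varepsilon_E$ as in \eqref{eq:affine_bound}, which is \eqref{eq:upward}.

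Finally, \eqref{eq:verticalcomp} is a design inequality. I would check it by inserting this $\varepsilon_E$ into \eqref{eq:eT_quadratic} and choosing $\delta$ small.

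The main obstacle is the settling-phase length. If it is taken as $T_s-\tau_1$, the "settling cost" grows with $T_s$. That is the honest bound, and it conflicts with presenting the term as $d_{\mathrm{ss}}\tau_2$. I expect to have to either redefine $\tau_2$ as the residual interval length or argue that the steady-state integrand is averaged away by feedforward. I would first try the conservative version, with duration $T_s-\tau_1$, and note that $C_{\mathrm{tss}}$ makes it well defined.
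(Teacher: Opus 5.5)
Your route is the paper's. Appendix~\ref{app:mismatch} writes $\tilde w_k$ as the same per-channel energy integral and expands the integrand as $\tilde V I^{\mathrm{ref}}+V_{\mathrm{nom}}\tilde I+\tilde V\tilde I$, which is your two-term split regrouped. It then bounds this pointwise on a transit window and a settling window, and stacks the two channels into $\Delta_{\mathrm{tr}}$ and $d_{\mathrm{ss}}$.

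The structural difference is the downward handshake. You discharge \eqref{eq:downward} by matching $\bar r_B$ with \eqref{eq:MPC_slew} and noting $A^k_{\mathrm{ref}}:=G^k_{\mathrm{ref}}$, which is correct for the literal definition. The paper instead uses \eqref{eq:MPC_slew} in an induction showing $|I_B(t_k)-I_B^{\mathrm{ref}}[k]|\le\bar U_B/\lambda_B$ for all $k$. Your upward step needs exactly this bound, $\varepsilon_{L,I_B}+\bar r_B=\bar U_B/\lambda_B$. You can only obtain it from $G^{k-2}_{\mathrm{track}}$ plus the rate bound, and the battery component of $G_{\mathrm{track}}$ in Proposition~\ref{cor:low} itself rests on that induction. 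So run the induction explicitly, with base case $|I_B(0)-I_B^{\mathrm{ref}}[0]|\le\bar U_B/\lambda_B$ (the paper's reading of $A^0_{\mathrm{ref}}$ in condition (i) of Definition~\ref{def:heterogeneous_wellposed}).

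The obstacle you flag is real, and it is a defect of the stated formula rather than of your plan. The paper applies settled bounds on $[t_k+\tau_1,t_{k+1}]$ yet multiplies the drift rate by $\tau_2$. Since $C_{\mathrm{tss}}$ gives $T_s-\tau_1\ge\tau_2$, this under-counts whenever $T_s>\tau_{LL}$. Drop the feedforward-averaging alternative: terms like $\tilde V I_B^{\mathrm{ref}}$ have no sign structure, so a worst-case bound cannot cancel them.

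Your conservative version still has a step that fails as written, and the paper's appendix makes the same slip. Proposition~\ref{prop:settling} gives $\|e\|\le(1+\delta)\gamma_{\mathrm{ISS}}H_{\max}$ only for $t\ge t_{k-1}+\tau_{LL}$, not from $t_{k-1}+\tau_1$ onward.
\begin{itemize}
\item On the decay window $[t_{k-1}+\tau_1,\,t_{k-1}+\tau_{LL}]$ you have only the envelope $m e^{-\lambda_e(t-t_{k-1}-\tau_1)}e_{\mathrm{peak}}+\varepsilon$. Its integral over that window is at most $m e_{\mathrm{peak}}/\lambda_e+\varepsilon\tau_2$.
\item On the transit window the envelope is largest at $t_{k-1}$, where it equals $m(\bar r+\varepsilon)+\gamma_{\mathrm{ISS}}(H_{\max}+M)$. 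The quantity $e_{\mathrm{peak}}$ is the envelope at the end of transit, not a bound over the whole window.
\end{itemize}
With three windows (transit, decay, settled) you obtain a valid $\varepsilon_E=\|\Delta'_{\mathrm{tr}}+d_{\mathrm{ss}}(T_s-\tau_{LL})\|$ for \eqref{eq:upward}, where $\Delta'_{\mathrm{tr}}$ collects the transit and decay windows. This has the affine shape of \eqref{eq:affine_bound} but not its multiplier $\tau_2$. It is this $\varepsilon_E$ that must enter \eqref{eq:eT_quadratic} and \eqref{eq:verticalcomp}.

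Finally, the paper's $d_{\mathrm{ss}}$ uses the $\Omega_h$ projections $\varepsilon_V,\varepsilon_{\dot V}$ of Corollary~\ref{cor:lowlevel}(b), which is what makes $\varepsilon_E$ depend on $\varepsilon_L$. Your ISS floor is what Proposition~\ref{prop:settling} actually certifies. It is, however, several hundred times looser in the case study: $4.82$~V against $0.013$~V.
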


\begin{proof} We prove the downward and upward handshake for both battery and supercapacitor-related terms.

\textbf{(Downward handshake)} We establish the reference-increment assumption $A_{\mathrm{ref}}^{k}$~\eqref{eq:A_ref}
for both channels. For the voltage channel, the grid target is held constant at
$r_V = V_{\mathrm{nom}}$, so $r_k - r_{k-1} = 0$ and $A_{\mathrm{ref}}^{k}$ holds
trivially. For the battery channel, $A_{\mathrm{ref}}^{k}$ holds for all $k \in \mathbb{N}$
by induction on $k$.

Assume the $A_{\mathrm{ref}}^k$ holds at $t_k$:
\begin{equation}\label{eq:inductive_hyp}
    |I_B(t_k) - I_B^{\mathrm{ref}}[k]| \;\leq\; \frac{\bar{U}_B}{\lambda_B}.
\end{equation}
On the interval $[t_k, t_{k+1})$, the reference is constant at $I_B^{\mathrm{ref}}[k]$ and the closed-loop error decays exponentially. At $t_{k+1} = t_k + T_s$:
\begin{align}\label{eq:decay}
    |I_B(t_{k+1}) - I_B^{\mathrm{ref}}[k]| \;&=\; |I_B(t_k) - I_B^{\mathrm{ref}}[k]|\,e^{-\lambda_B T_s} \nonumber\\
    &\leq\; \frac{\bar{U}_B}{\lambda_B}\,e^{-\lambda_B T_s}. \nonumber
\end{align}
At $t_{k+1}$ the reference jumps to $I_B^{\mathrm{ref}}[k+1]$. By the triangle inequality, $A_{\mathrm{ref}}^{k+1}$ holds:
\begin{align}
    &|I_B(t_{k+1}) - I_B^{\mathrm{ref}}[k+1]|\nonumber \\
    &\;\leq\; |I_B(t_{k+1}) - I_B^{\mathrm{ref}}[k]| + |I_B^{\mathrm{ref}}[k] - I_B^{\mathrm{ref}}[k+1]| \nonumber\\
    &\;\leq\; \frac{\bar{U}_B}{\lambda_B}\,e^{-\lambda_B T_s} + \frac{\bar{U}_B}{\lambda_B}\left(1 - e^{-\lambda_B T_s}\right) =\; \frac{\bar{U}_B}{\lambda_B}
\end{align}

\textbf{(Upward handshake)} 
The voltage tracking guarantee $G_{\mathrm{track}}^{k-1}$, i.e.
$| V_{\mathrm{gr}}(t_k) - V_{\mathrm{nom}} \bigr| \le \varepsilon_V \text{ at } t_k$, follows from settling over the interval $k-1$ (Corollary~\ref{cor:lowlevel}
under $C_{\mathrm{tss}}$) and supplies the voltage component of the mismatch
bound $\tilde{w}_k$ feeding $A_{\mathrm{mis}}^{k}$. The element-wise bounds are
derived in \appref{app:mismatch}.
\end{proof}

\begin{corollary}
Under the same condition as in Proposition \ref{prop:mismatch_decomposed}, if the compatibility conditions hold, i.e., $\varepsilon_H$ satisfies $\varepsilon_H \ge \varepsilon_T+\delta$ in \eqref{eq:verticalcomp} with $\varepsilon_L$ in Corollary \ref{cor:lowlevel} and $\varepsilon_E$ in \eqref{eq:affine_bound}.
Then, from Theorem \ref{thm:layered_correctness}, the HESS specifications are satisfied by $\Sigma_H \triangleright \Sigma_L$.

\end{corollary}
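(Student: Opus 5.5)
The plan is to verify the four items of Definition~\ref{def:heterogeneous_wellposed} for the HESS design and then invoke Theorem~\ref{thm:layered_correctness}. The corollary is essentially an assembly result, so the proof reduces to checking that each hypothesis has already been discharged by an earlier result, and that the tolerances chain consistently.

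\emph{Item (i).} The initial conditions and the environmental assumption are taken as given. We pick $x(0)$ with $(e(0),v(0))\in\tilde K$. We pick $r(0)$ such that $|I_B(0)-I_B^{\mathrm{ref}}[0]|\le \bar U_B/\lambda_B$, which gives $A^0_{\mathrm{ref}}$. Finally, $|w(t)|\le W_{\max}$ for all $t$ gives $A_{\mathrm{env}}$.

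\emph{Item (ii).} For the local contracts, $\Sigma_H\models\mathcal C_H$ follows from Corollary~\ref{cor:high}, through Corollary~\ref{cor:high_level_contract} with $\varepsilon_T$ as in \eqref{eq:eT_quadratic}. $\Sigma_L\models\mathcal C_L$ follows from Proposition~\ref{cor:low}, with $\bar r$ and $\varepsilon_L$ as in \eqref{eq:varepsilon1}. Proposition~\ref{cor:low} relies on Corollary~\ref{cor:lowlevel}, whose hypothesis $(e(kT_s),v(kT_s))\in\tilde K$ for all $k$ holds by induction: it holds at $k=0$, and Theorem~\ref{thm:erg_invariance} propagates it for any $r$.

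\emph{Item (iii).} Recursive feasibility is Assumption~\ref{asm:mismatch}, which Corollary~\ref{cor:high} asserts for the tightened MPC \eqref{eq:HESS_MPC}. This holds provided the tightening margins $\delta_B,\delta_S$ are computed with the same $\varepsilon_E$ as in \eqref{eq:affine_bound}. I will state this consistency explicitly, since $\varepsilon_E$ is fixed once $\varepsilon_L$ and $\tau_2$ are fixed.

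\emph{Item (iv).} $C_{\mathrm{tss}}$ is assumed via Proposition~\ref{prop:settling}. The downward and upward handshakes of $C_r$ are exactly Proposition~\ref{prop:mismatch_decomposed}. The numerical condition \eqref{eq:verticalcomp} is the stated hypothesis $\varepsilon_H\ge\varepsilon_T(\varepsilon_E)+\delta$. There is a mismatch with the strict inequality in \eqref{eq:verticalcomp}, which I would resolve by replacing $\delta$ with $\delta/2$. This is legitimate because the proof of Theorem~\ref{thm:layered_correctness} only needs some $\delta>0$.

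With all four items checked, Theorem~\ref{thm:layered_correctness} yields $\varphi_{\mathrm{safe}}$ and $\varphi_{\mathrm{live}}$ for $\Sigma_H\triangleright\Sigma_L$. Two instantiation steps remain:
\begin{itemize}
\item[$\bullet$] For safety, $G_{\mathrm{safe}}$ gives the bounds \eqref{eq:HESS_safety_spec}, and Proposition~\ref{cor:low} gives the input bounds.
\item[$\bullet$] For liveness, the seminorm $\|\cdot\|_{\mathrm{live}}$ picks out $E_B$, which gives \eqref{eq:HESS_liveness_spec}.
\end{itemize}

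\textbf{Main obstacle.} The delicate step is making the dependency chain non-circular. The quantities are linked as $T_s\to(\bar r_B,\varepsilon_L)\to\varepsilon_E\to$ (tightening margins, $\varepsilon_T$) $\to\varepsilon_H$. However, $\tau_{LL}$ itself depends on $\bar r$, and $\bar r_B$ depends on $T_s$. I would handle this by fixing $T_s$ first. Then $\bar r_B\le\bar U_B/\lambda_B$ gives a $T_s$-independent upper bound on $\tau_{LL}$, and the condition $T_s\ge\tau_{LL}$ can be checked against that bound. After this, all downstream constants are determined sequentially.

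A second subtlety concerns liveness. Theorem~\ref{thm:layered_correctness} only certifies the bound at the sample times $t_k$. Extending it to all $t\ge T$ requires adding the intersample drift of $E_B$. I would bound that drift by $T_s c_B V_{\max}\bar I_B$ and argue that it is absorbed into $\varepsilon_E$, as in the discrete-mismatch derivation.
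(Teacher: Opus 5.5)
Your assembly matches the paper's route. You discharge items (i)--(iv) of Definition~\ref{def:heterogeneous_wellposed} via Corollary~\ref{cor:high}, Proposition~\ref{cor:low}, Proposition~\ref{prop:mismatch_decomposed} and $C_{\mathrm{tss}}$, then cite Theorem~\ref{thm:layered_correctness}; the paper offers nothing beyond that citation. The $\delta/2$ adjustment and the induction keeping $(e(kT_s),v(kT_s))\in\tilde K$ are fine.

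The step that fails is your repair of intersample liveness. You are right that the proof of Theorem~\ref{thm:layered_correctness} only controls $E_B$ at the instants $t_k$. But the drift bound $T_s c_B V_{\max}\bar I_B$ cannot be absorbed into $\varepsilon_E$. $\varepsilon_E$ bounds only the deviation of $E_B(t_{k+1})-E_B(t_k)$ from the Euler prediction $T_s c_B V_{\mathrm{nom}} I_B^{\mathrm{ref}}[k]$. Your drift bound covers the whole energy transfer, including that nominal commanded part, which does not shrink with $\varepsilon_E$ and is typically much larger.

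What does work is interpolation. Set $\theta:=(t-t_k)/T_s$ and $m_k(t):=c_B\int_{t_k}^{t}\bigl(V_{\mathrm{gr}}I_B-V_{\mathrm{nom}}I_B^{\mathrm{ref}}[k]\bigr)\,ds$, so that $m_k(t_{k+1})$ is the battery component of the one-step mismatch. Then
\begin{align*}
E_B(t)-E_B^{\mathrm{goal}}
&=(1-\theta)\bigl(E_B(t_k)-E_B^{\mathrm{goal}}\bigr)\\
&\quad+\theta\bigl(E_B(t_{k+1})-E_B^{\mathrm{goal}}\bigr)\\
&\quad-\theta\,m_k(t_{k+1})+m_k(t).
\end{align*}
If $\varepsilon_E$ is obtained, as in the paper, by integrating nonnegative pointwise bounds on the integrand, it bounds $|m_k(t)|$ for every $t\in[t_k,t_{k+1}]$. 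This gives $|E_B(t)-E_B^{\mathrm{goal}}|\le\max_{j\in\{k,k+1\}}|E_B(t_j)-E_B^{\mathrm{goal}}|+2\varepsilon_E$.

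Continuous-time liveness therefore needs $\varepsilon_T(\varepsilon_E)+2\varepsilon_E+\delta'<\varepsilon_H$ for some $\delta'>0$. The corollary's hypothesis $\varepsilon_H\ge\varepsilon_T+\delta$ yields this only if $\delta>2\varepsilon_E$. So either add that margin to the compatibility condition, or do what the paper does: use Theorem~\ref{thm:layered_correctness} as a black box and inherit its sample-time-only argument.

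A smaller point on safety. Traced down to Theorem~\ref{thm:erg_invariance}, $G_{\mathrm{safe}}$ covers only the constraints the ERG encodes: bus voltage, supercapacitor current and $u_S$. The bound $|I_B|\le\bar I_B$ comes instead from the planner constraint \eqref{eq:battbound} together with \eqref{eq:HESS_IB_solution}. The latter keeps $I_B(t)$ between $I_B(t_k)$ and $I_B^{\mathrm{ref}}[k]$, and the fallback $r_k=r_{k-1}$ preserves this. You should state this separately rather than attribute it to $G_{\mathrm{safe}}$.
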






\section{Numerical Validation}
\label{sec:simulation}

We validate the proposed hierarchical architecture through a numerical
simulation of the HESS case study. The system parameters represent a
grid-connected storage unit subject to physical constraints: voltage limits
$\mathcal{R}_V = [380, 420]\,$V, supercapacitor current limits
$\mathcal{R}_{I_S} = [-12, 12]\,$A, input saturation $U_{S,\max} = 720\,$A/s,
and bus capacitance $C_{\mathrm{bus}} = 1$. The simulation activates the full
hierarchy---MPC planner, ERG, and low-level controller---and verifies both
safety ($G_{\text{safe}}$) and liveness ($G_{\text{live}}$) under a
time-varying load. As a preliminary check, the low-level guarantees in
isolation---robust control invariance of $\Omega_h$ and the ISS decay bound
$\tau_2$---were verified separately through a stationary-reference test; the
phase portrait and ISS-envelope verification are reported in
\appref{app:rci}. The methodology used below to compute $\bar V_h$,
$\varepsilon_{L,V}$, and $\tau_2$ is the one developed there.

\emph{Controller and certificates.} The controller gains are $k_1 = 560$,
$k_2 = 48$, with closed-loop eigenvalues $\lambda_{1,2} = -20.0, -28.0$ and
decay rate $\lambda_e = 20.0~\mathrm{s}^{-1}$---a four-fold speed-up of the
low-level loop chosen so that the settling time clears the sampling period
(see \emph{Load and sampling} below). The ISS-Lyapunov function
$V(e) = e^\top P e$ is obtained from $A^\top P + P A = -R_L$ with
$R_L = \mathrm{diag}(100, 10)$, yielding
\begin{equation}
P = \begin{bmatrix} 63.7 & 0.089 \\ 0.089 & 0.106 \end{bmatrix},
\quad \kappa(P) = \frac{\lambda_{\max}(P)}{\lambda_{\min}(P)} = 601 .
\label{eq:P_scenarioB}
\end{equation}
Under the residual disturbance bound $\|w\| \le W_{\max} = 3.0~\mathrm{A/s}$
(so that $H_{\max} = W_{\max}/C_{\mathrm{bus}}$), the ISS ultimate sublevel
threshold $\bar V_h$ in~\eqref{eq:Vbar_def}---computed as in
\cite[Lemma~3]{NicotraUAVERG} (see \appref{app:rci})---is
$\bar V_h = 0.011~\mathrm{V}^2$, with induced per-coordinate voltage-error
bound $\varepsilon_{L,V} = \sqrt{\bar V_h\,[P^{-1}]_{11}} = 0.013~\mathrm{V}$
(cf.~\eqref{eq:eps_L}). The faster loop tightens this tracking tolerance by an
order of magnitude, but it also raises the actuator demand: the input
bottleneck $\Gamma_u = U_{S,\max}^2/(K^\top P^{-1}K)$
(with $K=[C_{\mathrm{bus}}k_1, C_{\mathrm{bus}}k_2]^\top$,
cf.~\appref{app:constraints}) scales as $\|K\|^2$, so restoring headroom
for the off-target start requires scaling the supercapacitor slew authority
with the loop bandwidth, here $U_{S,\max} = 720~\mathrm{A/s}$. With this choice
$\Gamma_u = 20.0$ and the binding constraint shifts to the supercapacitor
current limit, giving $\Gamma_{\min} = 12.1$.

\emph{Load and sampling.} The feedforward term $-C_{\mathrm{bus}}^{-1}\dot{\bar d}$
requires $\dot{\bar d}$ to be bounded, so the external load ramps from $0$ to
$-5$~A over $[0.5, 0.8]$~s (with a superimposed $2$~Hz oscillation) rather than
stepping. This decomposition is central to the hierarchical contract: the MPC
plans the battery current $I_B$ to shape $\bar d$, the feedforward in $u_S$
cancels $\bar d$ from the error dynamics, and the ERG together with the robust
control invariant (RCI) set $\Omega_h = \{e : V(e) \le \bar V_h\}$ handle only
the unknown residual $w$. The MPC samples at $T_s = 0.5$~s with a receding
horizon of $N = 20$ (a $10$~s prediction horizon). Under the feedforward
cancellation in \eqref{eq:uS} the transit phase is negligible
($\tau_1 \approx 0$), so $\tau_{LL} \approx \tau_2$, and the settling-time
certificate of \appref{app:rci} (Fig.~\ref{fig:app_settling}) gives
$\tau_{LL} = 0.43$~s. The timing-compatibility condition
$C_{\mathrm{tss}}: T_s \ge \tau_{LL}$ is therefore \emph{satisfied} with a
$13\%$ margin. 

\emph{Recursive feasibility.} At $T_s = 0.5$~s the one-step prediction errors
of the abstract model~\eqref{eq:abstract_model} are strongly asymmetric across
channels: $\varepsilon_{E_B} \approx 0.06~\mathrm{A\cdot s}$ for the battery
versus $\varepsilon_{E_S} \approx 1.75~\mathrm{A\cdot s}$ for the
supercapacitor. 
This is structural (cf.~Remark~\ref{rem:superbatt}): the tracked
battery channel ($\lambda_B \gg \lambda_e$) renders $\hat E_B[k+1]$
accurate, while the untracked supercapacitor buffer absorbs the
residual $w$ and intra-sample load variation and is poorly predicted
(the gap is robust across $T_s$).
Consequently
the supercapacitor SOC bound cannot be enforced as a hard constraint---its
nominal trajectory drifts over the horizon---which is exactly why the SOC
bounds~\eqref{eq:HESS_soft_safety} are treated as soft. On the battery channel, where
the SOC bound is tightened (Assumption~\ref{asm:mismatch}), the small
$\varepsilon_{E_B}$ yields a terminal open-loop margin
$\delta_B(N{-}1) = (N{-}1)\varepsilon_{E_B} = 1.2~\mathrm{A\cdot s}$, leaving the
tightened band open around the target; an LP feasibility check confirms the
problem~\eqref{eq:mpc_ocp} stays feasible step-to-step under worst-case
$\|\tilde w\| \le \varepsilon_E$. Robust recursive feasibility therefore holds
at $N=20$ without resort to tube tightening, the large supercapacitor mismatch
being confined to the soft channel.

\emph{Safety.} Figure~\ref{fig:scenario_B}(a) shows the bus voltage zoomed to
the actual fluctuation range. With feedforward active,
$V \in [400.0, 400.3]$~V---the known disturbance is almost perfectly cancelled
and the tracking error is dominated by the residual $w$. Panel~(c) confirms the
ERG safety invariant: $V(e(t)) < \Gamma(v(t))$ at every instant, with
\begin{equation}
V(e)_{\max} = 5.73 \;<\; \Gamma_{\min} = 12.1 ,
\end{equation}
so the safe set is never violated during convergence from the off-target
initial condition ($V(0) = 400.3$~V). After the transient, the error settles
into $\Omega_h$ ($V(e)\le\bar V_h=0.011\ll\Gamma$): once the feedforward has
canceled the known load, the error dynamics see essentially only the residual
$w(t)$, and the ERG's role reduces to certifying $V(e)<\Gamma(v)$ continuously even though the MPC updates only every $T_s$.
The realized $\Omega_h$-entry at $t\approx0.23$~s is faster than the ISS-envelope
certificate $\tau_{LL}=0.43$~s, so the certificate is met rather than exceeded.

\emph{Liveness.} Panel~(b) of Figure~\ref{fig:scenario_B} shows the current
allocation: the MPC commands $I_B$ to accumulate charge while the
supercapacitor $I_S$ absorbs the residual fast dynamics within
$\pm 12$~A. Panel~(d) shows the battery charge $E_B(t)$ converging to the
$5$\,A$\cdot$s target by $t\approx 3.8$\,s, verifying the liveness guarantee
$G_{\mathrm{live}}$.

\begin{figure}[htbp]
\vspace{-0.5cm}
	\centering
    \includegraphics[width=\linewidth]{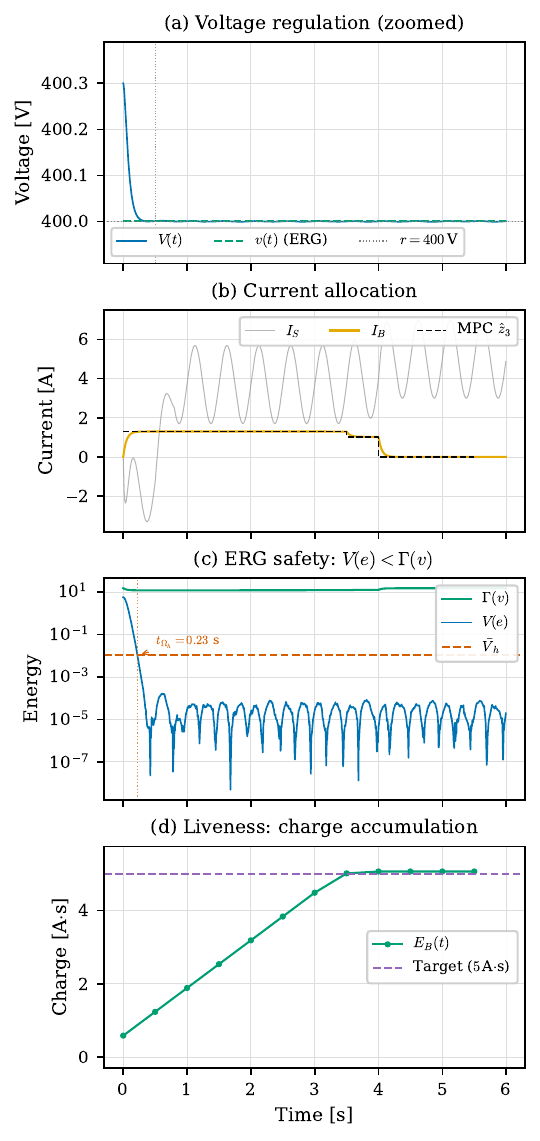}
	\caption{Full hierarchical operation under time-varying load with feedforward
cancellation, at the fast-loop tuning ($\lambda_e=20$, $T_s=0.5$~s).
(a)~Voltage regulation (zoomed): $V\in[400.0,400.3]\,$V.
(b)~Current allocation: MPC battery command $\hat z_3$ vs.\ supercapacitor fast
response. (c)~ERG safety margin: $V(e)<\Gamma(v)$ throughout, with
$\bar V_h\ll\Gamma(v)$; the trajectory enters $\Omega_h$ at $t\approx0.23$\,s,
well inside the certificate $\tau_{LL}=0.43$\,s. (d)~Battery charge $E_B(t)$
converges to the $5$\,A$\cdot$s target (within tolerance $\varepsilon_H$),
verifying $G_{\mathrm{live}}$.}
	\label{fig:scenario_B}
\end{figure}


\section{Conclusion}
\label{sec:conclusion}

This paper presented a heterogeneous assume--guarantee contract framework for layered control, formalizing the interface between a discrete-time planning layer and a continuous-time safety layer by introducing vertical refinement and timing compatibility conditions. We then provide the implementation of this method using an ERG-Tracker combination in the safety layer. Applied to a HESS case study, the framework yields a clean separation of concerns: the MPC plans slow energy states, the feedforward controller cancels known disturbances from the error dynamics, and the ERG certifies constraint satisfaction in continuous time. Numerical validation confirmed that safety is maintained and liveness is achieved. A gap remains between the ISS-based settling time bound and the actual convergence rate under feedforward cancellation; deriving a tighter certificate is a direction for future work.

\bibliographystyle{unsrt}
\bibliography{library}

\newpage

\appendix

\textbf{Suplemental Materials}

\section{Proof of Lemma \ref{lem:dissipation}}
By Assumption~\ref{asm:mismatch}, the perturbed
successor satisfies $\yhat_{k} \in \YN$, so
the MPC problem remains feasible, and all bounds in
Assumption~\ref{asm:mpc_stability} apply at
$t_{k}$.  Decompose:
\begin{align}
  &V_N^*(\hat y_{k|k}) - V_N^*(\hat y_{k-1|k-1}) \notag \\
    &= V_N^*\!\big(\hat f(\hat y_{k-1|k-1},\hat z_{k-1|k-1},r_{k-1})\big)
       - V_N^*(\hat y_{k-1|k-1}) \notag\\
    &\quad + V_N^*(\hat y_{k|k})
       - V_N^*\!\big(\hat f(\hat y_{k-1|k-1},\hat z_{k-1|k-1},r_{k-1})\big).
    \label{eq:decomposition}
\end{align}
Adding the two bounds~\eqref{eq:descent_condition} and~\eqref{eq:lipschitz} yields~\eqref{eq:dissipation}.

\section{HESS Constraint Reformulation}
\label{app:constraints}

\textbf{(Voltage state constraints)} The bus voltage must satisfy $V_{\min} \leq V_{\text{gr}} \leq V_{\max}$. Substituting $V_{\text{gr}} = e_V + v$:
\begin{align}
V_{\min} \leq e_V + v \leq V_{\max}.\nonumber
\end{align}
This yields two half-space constraints:
\begin{align}
\label{eq:constraint_vmin}
-e_V \le v - V_{min}, \\
\label{eq:constraint_vmax}
e_V &\leq V_{\max} - v. 
\end{align}

\textbf{(Supercap state constraint $|I_S| \leq \overline{I}_{S}$)} The supercapacitor current must satisfy $|I_S| \leq \overline{I}_{S}$. From the bus voltage dynamics \eqref{eq:HESS_reduced}:
\begin{align}
I_S = C_{\mathrm{bus}} \dot{V}_{\text{gr}} - \bar{d} = C_{\mathrm{bus}} (\dot e_V + \dot{v}) - \bar{d},
\end{align}
where $\bar{d} = I_B + d$. The constraint becomes:
\begin{align}
|C_{\mathrm{bus}} (\dot e_V + \dot{v}) - \bar{d}| \leq \overline{I}_{S}.
\end{align}
A sufficient condition using the triangle inequality is:
\begin{align}
|C_{\mathrm{bus}} \dot e_V| + |C_{\mathrm{bus}} \dot{v}| + |\bar{d}| \leq \overline{I}_{S}.
\end{align}
Rearranging for the error-dependent term:
\begin{align}
|C_{\mathrm{bus}} \dot e_V| \leq \overline{I}_{S} - |\bar{d}| - |C_{\mathrm{bus}} \dot{v}|.
\end{align}
Substituting worst-case bounds $|\dot{v}| \leq \bar{\kappa} \Gamma(v)$ and $|\bar{d}| \leq \bar{d}_{\max}$:
\begin{align}
|C_{\mathrm{bus}} \dot e_V| \leq \overline{I}_{S} - \bar{d}_{\max} - C_{\mathrm{bus}} \bar{\kappa} \Gamma(v).
\end{align}
This yields two half-space constraints:
\begin{align}
\label{eq:constraint_Is_upper}
C_{\mathrm{bus}} \dot e_V &\leq \bar{I}_S - C_{\mathrm{bus}} \bar{\kappa} \Gamma(v), \\
\label{eq:constraint_Is_lower}
-C_{\mathrm{bus}} \dot e_V &\leq \bar{I}_S - C_{\mathrm{bus}} \bar{\kappa} \Gamma(v),
\end{align}
where $\bar{I}_S := \overline{I}_{S} - \bar{d}_{\max}$ is the effective current margin after reserving capacity for the disturbance. 

\textbf{(Input constraint)} The supercapacitor control input $u_S$ must satisfy $|u_S| \leq U_{S,\max}$. From the control law \eqref{eq:uS}:
\begin{equation}
  u_S = -C_{\mathrm{bus}}k_1 e_V - k_2 C_{\mathrm{bus}}(\dot e_V +\dot v) - \dot{\bar d}(t).
  \label{eq:uS-appB}
\end{equation}
Applying the triangle inequality with worst-case bounds
$|\dot{v}| \leq \bar{\kappa}\Gamma(v)$ and $|\dot{\bar{d}}| \leq |\dot{\bar{d}}|_{\max}$, a sufficient condition is:
\begin{equation}
  |\tilde k_1 e_V + \tilde k_2 \dot e_V| \le \bar U_S - \tilde k_2 \bar\kappa\,\Gamma(v),
\end{equation}
where $\tilde k_1 := C_{\mathrm{bus}}k_1$, $\tilde k_2 := k_2 C_{\mathrm{bus}}$,
and $\bar U_S := U_{S,\max} - |\dot{\bar d}|_{\max}$. This yields
\begin{align}
  \tilde k_1 e_V + \tilde k_2 \dot e_V &\le \bar U_S - \tilde k_2 \bar\kappa\,\Gamma(v),
  \label{eq:constraint_sc_upper}\\
  -\tilde k_1 e_V - \tilde k_2 \dot e_V &\le \bar U_S - \tilde k_2 \bar\kappa\,\Gamma(v).
  \label{eq:constraint_sc_lower}
\end{align}
Note that the battery's input constraint \eqref{eq:HESS_input_constraints} is handled by the MPC slew rate \eqref{eq:MPC_slew}.  

The derived constraints are all in the form \eqref{eq:HESS_constraint_form}.

\section{Mismatch Bound Derivation}
\label{app:mismatch}

We derive the prediction mismatch bound in Proposition~\ref{prop:mismatch_decomposed}.

\textbf{Battery energy prediction error.}
The abstract model predicts:
\begin{align}
    \hat{E}_B[k+1] = E_B[k] + T_s c_B V_{\mathrm{nom}} I_B^{\mathrm{ref}}[k]
\end{align}

The physical evolution is:
\begin{align}
    E_B(t_{k+1}) = E_B(t_k) + \int_{t_k}^{t_{k+1}} c_B V_{\mathrm{gr}}(t) I_B(t) \, dt
\end{align}

The mismatch is:
\begin{align}
    \tilde{w}_{k,E_B} &= E_B(t_{k+1}) - \hat{E}_B[k+1] \\
    &= c_B \int_{t_k}^{t_{k+1}} \left( V_{\mathrm{gr}}(t) I_B(t) - V_{\mathrm{nom}} I_B^{\mathrm{ref}}[k] \right) dt
    \label{eq:mismatch_EB}
\end{align}

\textbf{Decomposition.}
Define deviations:
\begin{align}
    \tilde{V}(t) &:= V_{\mathrm{gr}}(t) - V_{\mathrm{nom}} \\
    \tilde{I}_B(t) &:= I_B(t) - I_B^{\mathrm{ref}}[k]
\end{align}

Expanding:
\begin{align}
    V_{\mathrm{gr}} I_B - V_{\mathrm{nom}} I_B^{\mathrm{ref}} 
    = \tilde{V} I_B^{\mathrm{ref}} + V_{\mathrm{nom}} \tilde{I}_B + \tilde{V} \tilde{I}_B
\end{align}

\textbf{Phase-wise bounds.}

\emph{Transit phase} ($t \in [t_k, t_k + \tau_1]$):
\begin{align}
    |\tilde{V}(t)| &\leq e_{\mathrm{peak}} + \eta && \text{(error + ERG lag)} \\
    |\tilde{I}_B(t)| &\leq \frac{\bar{U}_B}{\lambda_B} e^{-\lambda_B(t-t_k)} && \text{(exponential settling)}
\end{align}
Integrating:
{\small
\begin{align}
\Delta_{\mathrm{tr},B} := c_B\Big[\,\bar I_B\,e_{\mathrm{peak}}\,\tau_1
\;+\;(V_{\mathrm{nom}}+e_{\mathrm{peak}}+\eta)\,\tfrac{\bar U_B}{\lambda_B}\big(1-e^{-\lambda_B\tau_1}\big)\Big].
\end{align}
}
\emph{Settling phase} ($t \in [t_k + \tau_1, t_{k+1}]$):
\begin{align}
    |\tilde{V}(t)| &\leq \varepsilon_V + \delta + \eta && \text{(settled error)} \\
    |\tilde{I}_B(t)| &\approx 0 && \text{(battery settled)}
\end{align}
The bound on $|\tilde{I}_B(t)|$ is obtained as using $\lambda_B >> \lambda_e$. The battery is dedicated to the energy target and thus as soon as it observes a new target at each time step, it tracks that target (cf. Remark \ref{rem:superbatt}).
The drift rate:
\begin{align}
    d_{\mathrm{ss},B} := \lambda_B \bar{I}_B ((1+\delta)\varepsilon_V + \eta)
\end{align}

\textbf{Total bound.}
\begin{align}
    |\tilde{w}_{k,E_B}| \leq \Delta_{\mathrm{tr},B} + \mathbf{d}_{\mathrm{ss},B} \cdot \tau_2
\end{align}

\textbf{Supercapacitor energy prediction error.}
The planning model assumes an algebraic bus ($\dot{V}_{\mathrm{gr}} = 0$,\; $V_{\mathrm{gr}} = V_{\mathrm{nom}}$),
so that $I_S^{\mathrm{ref}}[k] = -\bar{d}[k] = -I_B[k] - \hat{d}[k]$. The abstract model predicts:
\begin{align}
    \hat{E}_S[k+1] = E_S[k] + T_s\, c_S\, V_{\mathrm{nom}}\, I_S^{\mathrm{ref}}[k].
\end{align}
The physical evolution is:
\begin{align}
    E_S(t_{k+1}) = E_S(t_k) + \int_{t_k}^{t_{k+1}} c_S\,V_{\mathrm{gr}}(t)\,I_S(t)\,dt.
\end{align}
The mismatch is:
\begin{align}
    \tilde{w}_{k,E_S} &= E_S(t_{k+1}) - \hat{E}_S[k+1] \\
    &= c_S \int_{t_k}^{t_{k+1}}
   \bigl(V_{\mathrm{gr}}(t) I_S(t) - V_{\mathrm{nom}} I_S^{\mathrm{ref}}[k]\bigr)\,dt.
    \label{eq:mismatch_ES}
\end{align}

\textbf{Decomposition.}
Define deviations:
\begin{align}
    \tilde{V}(t) &:= V_{\mathrm{gr}}(t) - V_{\mathrm{nom}} = e_V(t) + (v(t) - V_{\mathrm{nom}}), \\
    \tilde{I}_S(t) &:= I_S(t) - I_S^{\mathrm{ref}}[k].
\end{align}
From the bus dynamics $I_S = C_{\mathrm{bus}}\dot{V}_{\mathrm{gr}} - \bar{d}$
and the planning-model Definition~$I_S^{\mathrm{ref}}[k] = -\bar{d}[k]$,
the current deviation (neglecting disturbance drift) is:
\begin{align}
    \tilde{I}_S(t) = C_{\mathrm{bus}}\,\dot{V}_{\mathrm{gr}}(t)
    = C_{\mathrm{bus}}\bigl(\dot{v}(t) + e_{\dot V}(t)\bigr).
    \label{eq:IS_deviation}
\end{align}
The integrand expands as:
\begin{align}
    V_{\mathrm{gr}}\,I_S - V_{\mathrm{nom}}\,I_S^{\mathrm{ref}}
    = \tilde{V}\,I_S^{\mathrm{ref}} + V_{\mathrm{nom}}\,\tilde{I}_S + \tilde{V}\,\tilde{I}_S.
\end{align}

\textbf{Phase-wise bounds.}

\emph{Transit phase} ($t \in [t_k,\, t_k + \tau_1]$):
The reference $v(t)$ moves from $r_{k-1}$ toward $r_k = V_{\mathrm{nom}}$,
so $\dot{v} \neq 0$ and the capacitor carries current
even in the ideal case.
The ISS bound on the error state
$e = [e_V,\, e_{\dot V}]^\top$ in the $\ell_\infty$ norm gives:
\begin{align}
    |\tilde{V}(t)| &\leq e_{\mathrm{peak}} + \eta\\
    |\tilde{I}_S(t)| &\leq C_{\mathrm{bus}}\bigl(\kappa_{\max} + e_{\mathrm{peak}}\bigr)
\end{align}
where $\kappa_{\max} = \bar{\kappa}\Gamma(v)$
is the peak ERG reference rate in Assumption \ref{asm:progress}.
Since each bound is pointwise, integration gives:
\begin{align}
    \Delta_{\mathrm{tr},S} &:= c_S \bar{I}_S (e_{\mathrm{peak}} + \eta)\,\tau_1
   \\
     &+ (V_{\mathrm{nom}} + e_{\mathrm{peak}} + \eta)\bigl(C_{\mathrm{bus}}\kappa_{\max} + e_{\mathrm{peak}}\bigr)\tau_1.
\end{align}

\emph{Settling phase} ($t \in [t_k + \tau_1,\, t_{k+1}]$):
The reference has reached $v = V_{\mathrm{nom}}$, so $\dot{v} = 0$.
The current deviation reduces to:
\begin{align}
    \tilde{I}_S(t) = C_{\mathrm{bus}}\,e_{\dot V}(t),
\end{align}
and the bounds become:
\begin{align}
    |\tilde{V}(t)| &\leq \varepsilon_V + \delta + \eta
    && \text{(settled voltage error)}, \\
    |\tilde{I}_S(t)| &\leq C_{\mathrm{bus}}\,\varepsilon_{\dot V}
    && \text{(settled velocity error)},
\end{align}
where $\varepsilon_V, \varepsilon_{\dot V}$ are the ultimate bounds on $|e_V|, |e_{\dot V}|$
from the ISS gain.
The drift rate:
\begin{align}
d_{\mathrm{ss},S} &:= c_S \bar{I}_S\bigl((1+\delta)\varepsilon_V + \eta\bigr)\\
   &+ (V_{\mathrm{nom}} + \varepsilon_V + \eta)\, C_{\mathrm{bus}} \varepsilon_{\dot V}.
\end{align}

\textbf{Total bound.}
\begin{align}
    |\tilde{w}_{k,E_S}| \leq \Delta_{\mathrm{tr},S}
    + \mathbf{d}_{\mathrm{ss},S} \cdot \tau_2.
\end{align}
Stacking the channel bounds,
\begin{align}
&\|\tilde{w}_{k}\|\le
\bigl\|[\,\tilde{w}_{k,E_{B}},\ \tilde{w}_{k,E_{S}}\,]^{\top}\bigr\|=\varepsilon_{E},
\\
&\Delta_{\mathrm{tr}}:=[\Delta_{\mathrm{tr},B},\,\Delta_{\mathrm{tr},S}]^{\top},
\quad
d_{\mathrm{ss}}:=[d_{\mathrm{ss},B},\,d_{\mathrm{ss},S}]^{\top}.
\end{align}

\section{Additional Simulation on Invariance}
\label{app:rci}

This appendix validates, in isolation, the two low-level guarantees that the
hierarchical contract relies on: (i) the robust control invariance (RCI) of
$\Omega_h = \{e : V(e) \le \bar V_h\}$, which the ERG requires for real-time
constraint enforcement; and (ii) the ISS decay bound, which certifies the
finite settling time $\tau_2$. These mechanisms serve different layers of the
contract and are verified separately below; the disturbance profile used here
isolates the low-level loop with the ERG reference held stationary
($\dot v = 0$). The controller gains are those of the deployed design in
Section~\ref{sec:simulation} ($k_1=560$, $k_2=48$, i.e. closed-loop eigenvalues
$-20,\,-28$ and decay rate $\lambda_e = 20~\mathrm{s}^{-1}$). To isolate the
low-level loop under a stationary ERG reference ($\dot v=0$), we replace the
time-varying load of Section~\ref{sec:simulation} with a lumped adversarial
disturbance capturing unmodeled dynamics, measurement noise, and load
fluctuations:
\begin{align}
	\label{eq:disturbance}
	w(t) = W_{\max}\bigl[0.7\sin(15t) + 0.3\,\xi(t)\bigr],
\end{align}
where $|\xi(t)|\le 1$ and $W_{\max}=3.0~\mathrm{A/s}$.

\subsubsection{RCI Set and ERG Safety Mechanism}
\label{sec:scenario_A_rci}

Following Lemma~3 of~\cite{NicotraUAVERG}, $\bar V_h$ is obtained by
maximizing $V(e)$ over the surface $\dot V(e,w) = 0$ under the worst-case
$|w| \le W_{\max}$. Parametrizing $e = a\,[\cos\theta, \sin\theta]^\top$, the
constraint $\dot V = 0$ fixes
\begin{equation}
a(\theta) = \frac{2\,|b^\top P B|\,H_{\max}}{b^\top R_L\,b},
\qquad b = [\cos\theta, \sin\theta]^\top,
\label{eq:app_atheta}
\end{equation}
and the invariant level is
\begin{equation}
\bar V_h = \max_{\theta \in [0, 2\pi)} a(\theta)^2 \, (b^\top P\,b)
= 0.011~\mathrm{V}^2 .
\label{eq:app_Vbar}
\end{equation}

The maximum is attained at $\theta^\ast = 248.0^\circ$ (equivalently $68.0^\circ$
by the $\pm$ symmetry of the $\dot V=0$ surface).
Figure~\ref{fig:phase_portrait} shows the phase portrait under
\eqref{eq:disturbance} with initial condition $e(0) = [3,0]^\top$: the trajectory
enters $\Omega_h$ at $t = 0.34$~s and remains inside thereafter, with zero
violations over the remaining simulation horizon.


\begin{figure}[htbp]
	\centering
	\includegraphics[width=\linewidth]{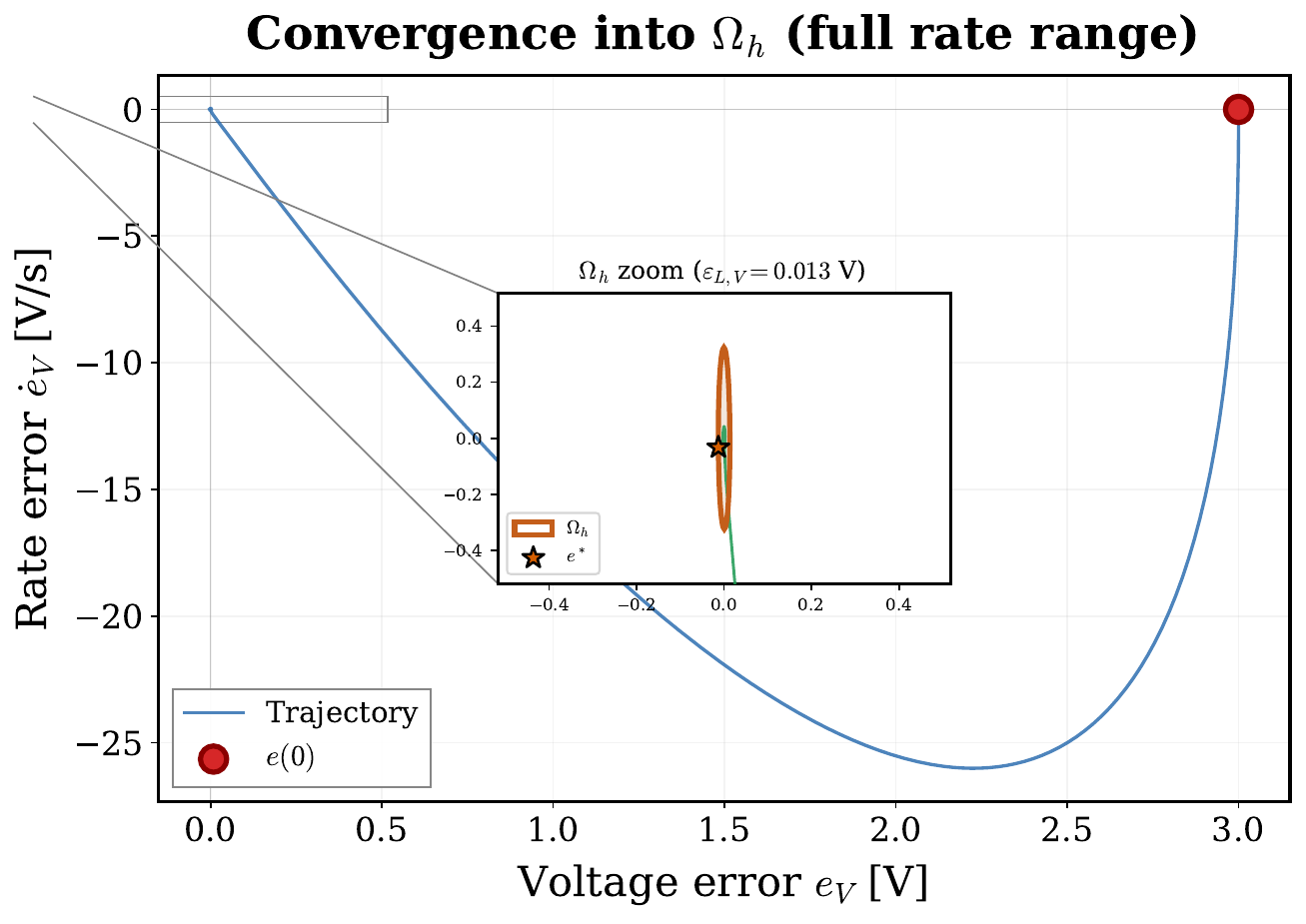}
	\caption{Phase portrait under mixed disturbance, full rate range. The trajectory
	converges from $e(0)=[3,\,0]^\top$ through a transient excursion in the rate
	coordinate $\dot e_V$ and settles into $\Omega_h$. \emph{Inset:} zoom on the
	invariant set $\Omega_h$ (shaded ellipsoid, $\varepsilon_{L,V}=0.013\,$V), with the
	worst-case point~$e^*$ on the $\dot V=0$ surface marked. Note the $\sim\!2000\times$
	scale separation between the transient and $\Omega_h$, a direct consequence of the
	fast low-level loop.}
	\label{fig:phase_portrait}
\end{figure}

\subsubsection{ISS Envelope and Decay Time}
\label{sec:scenario_A_iss}

For the settling-time analysis we require an ISS envelope bounding $\|e(t)\|$
during transients. Since $A$ is Hurwitz, $\|e^{At}\| \le m\,e^{-\lambda_e t}$
for some $m \ge 1$, giving
\begin{equation}
\|e(t)\|\le m\,e^{-\lambda_e t}\|e(0)\|+\varepsilon\,(1-e^{-\lambda_e t}),
\label{eq:app_iss_env}
\end{equation}
with ISS noise floor $\varepsilon = \gamma_{\mathrm{ISS}} H_{\max}$ and
$\gamma_{\mathrm{ISS}} = m/\lambda_e$. The overshoot factor $m$ is computed
in hindsight as the tightest value for which \eqref{eq:app_iss_env} holds over
the simulated trajectory, yielding $m = 29.2$, hence
$\gamma_{\mathrm{ISS}} = 29.2/20.0 = 1.46$ and
$\varepsilon = 1.46 \times 3.0 = 4.38~\mathrm{V}$. The larger $m$ relative to the
slower design of the original tuning reflects the non-normal overshoot of the
rate coordinate, which scales with $\lambda_e$. With tolerance $\delta = 0.1$
and peak excursion $\|e\|_{\mathrm{peak}} = m\|e(0)\| = 87.6~\mathrm{V}$,
\begin{align}
	\tau_2 = \frac{1}{\lambda_e}\ln\!\frac{m\,\|e\|_{\mathrm{peak}}}{\delta\,\varepsilon}
	= \frac{1}{20.0}\ln\!\frac{29.2\times 87.6}{0.1\times 4.38}
	= 0.43\,\mathrm{s}. \nonumber
\end{align}
Since the reference is held stationary ($\dot v=0$), the transit phase vanishes
($\tau_1=0$) and $\tau_{LL}=\tau_2=0.43~\mathrm{s}$. With the
Section~\ref{sec:simulation} sampling period $T_s = 0.5~\mathrm{s}$, the
timing-compatibility condition $C_{\mathrm{tss}}: T_s \ge \tau_{LL}$ is
\emph{satisfied} with a $13\%$ margin. The optimized invariant level
$\bar V_h = 0.011~\mathrm{V}^2$ yields component-wise bounds
$\varepsilon_{L,V} = 0.013~\mathrm{V}$ and $\varepsilon_{\dot V} = 0.32~\mathrm{V/s}$,
which serve as the ERG's constraint-tightening margins; the adversarial
simulation confirms forward invariance with substantial margin, the realized
worst case reaching only $V_{\max}^{\mathrm{adv}}/\bar V_h = 0.16$ of the
certified level.

Figure~\ref{fig:app_settling} confirms that the ISS envelope
\eqref{eq:app_iss_env} bounds $\|e(t)\|$ at every instant, and that
$\|e(t)\| \le (1+\delta)\varepsilon = 4.82~\mathrm{V}$ for all $t \ge \tau_{LL}$.
The settled error in fact lies far below this floor (within $\Omega_h$,
$\|e\|\lesssim 0.04~\mathrm{V}$): the $m$-based noise floor $\varepsilon$ is a
conservative envelope, not a tight steady-state bound, consistent with the gap
between the certificate and the realized convergence discussed in
Section~\ref{sec:conclusion}.

\begin{figure}[htbp]
	\centering
	\includegraphics[width=0.9\linewidth]{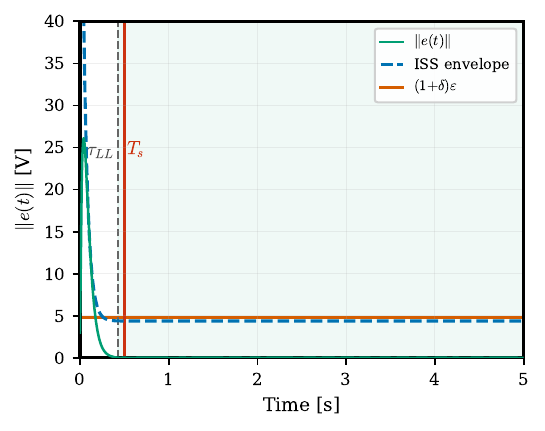}
	\caption{Error norm $\|e(t)\|$ (green), ISS envelope \eqref{eq:app_iss_env}
	(blue dashed), and the $(1{+}\delta)\varepsilon$ threshold (orange). The vertical
	markers locate the settling time $\tau_{LL}=0.43\,$s and the sampling period
	$T_s=0.5\,$s; since $\tau_{LL} < T_s$, the timing-compatibility condition
	$C_{\mathrm{tss}}$ holds. After $\tau_{LL}$ (shaded region) the bound is met with
	zero violations.}
	\label{fig:app_settling}
\end{figure}

Thus the isolated low-level loop meets both guarantees—RCI of
$\Omega_h$ and the settling certificate $\tau_{LL}\le T_s$—that the hierarchical contract relies on.
\section{Proof of Proposition~\ref{prop:settling}}\label{app:settling}

We bound the ISS envelope across the two sequential
phases of ERG operation
(cf.\ Fig.~\ref{fig:tracking_decomp}). Assumptions \ref{asm:smoothing}--\ref{asm:ref_adm} ensure the \emph{existence} of a feasible path to each target~$r_k$, which is naturally satisfied when the high-level MPC incorporates constraint margins.

\medskip
\emph{Phase 1: Transit
($t \in [t_k,\, t_k + \tau_1]$).}\;
By Assumption~\ref{asm:progress}, the ERG velocity
satisfies $\|\dot{v}\| \geq \underline{\kappa}\,
\underline{r}$ while
$v(t) \neq r_k$, so a step of size $\bar{r}$ is
traversed in at most
\begin{align}
    \tau_1
    \;\leq\;
    \tau_1^M := \frac{\bar{r}+\varepsilon}
         {\underline{\kappa}\,\underline{r}}.
    \label{eq:tau1}
\end{align}
Let $M$ be a uniform bound on
$\|g_e(\dot v, \ddot v, \ldots)\|$ along the ERG
flow; existence of such an $M$ follows from
standard smoothness arguments for navigation-based
reference governors~\cite{GaroneNicotraERG,NicotraUAVERG}. The total disturbance
during transit satisfies
\begin{align}
    D_{\mathrm{transit}}
    \;:=\;
    H_{\max} + M.
\end{align}
The ISS envelope gives
\begin{align}
    \|e(t)\|
    \;\leq\;
    m\,e^{-\lambda_e (t - t_k)}\,\|e(t_k)\|
    \;+\;
    \gamma_{\mathrm{ISS}}\,D_{\mathrm{transit}}.
    \label{eq:iss_envelope_transit}
\end{align}
Evaluating at $t = t_k + \tau_1$ with
$\|e(t_k)\| \leq \bar{r} + \varepsilon$
yields peak expression~\eqref{eq:tau_LL}.

\emph{Phase 2: Decay
($t \in [t_k + \tau_1,\, t_k + \tau_{LL}]$).}\;
After transit, the reference is stationary ($\dot v \approx 0$), so the
disturbance reduces to $H_{\max}$ and the envelope becomes
\begin{align}
    \|e(t)\|
    \;\leq\;
    m\,e^{-\lambda_e(t - t_k - \tau_1)}\,
    e_{\mathrm{peak}}
    \;+\; \varepsilon.
    \label{eq:decay_envelope}
\end{align}
Setting the transient term equal to $\delta\varepsilon$ and solving:
\begin{align}
  m\,e^{-\lambda_e \tau_2}\, e_{\mathrm{peak}} = \delta\,\varepsilon
  \quad\Longrightarrow\quad
  \tau_2 := \frac{1}{\lambda_e}
    \ln\!\Bigl(\frac{m\, e_{\mathrm{peak}}}{\delta\,\varepsilon}\Bigr).
\end{align}
For $t \geq t_k + \tau_1 + \tau_2 = t_k + \tau_{LL}$, the settling bound
\eqref{eq:tau_LL} follows, giving $\|e(t)\| \le (1+\delta)\varepsilon$.

\section{General Nonlinear Formulation of the Safety Layer}
\label{app:nonlinear}

This appendix develops the safety layer of Section~\ref{subsec:safety_layer} in its
general nonlinear form, of which the linear--quadratic body is the closed-form
specialization. The structural argument---unilateral safety via robust forward
invariance under a frozen-ISS reference governor---is identical; only the
$\mathcal{K}_\infty$ machinery is more general.

\subsection{Inner-Loop Error Dynamics}
Differentiating $e(t) := h_r(x(t)) - v(t)$ along~\eqref{eq:ct_system} and
substituting the tracker~\eqref{eq:tracking_controller} gives
$\dot e = P_r f(x,\kappa(x,v),w) - \dot v$. When $\kappa$ admits an
error-coordinate representation with additively-entering disturbance, this
becomes
\begin{equation}
\dot e = f_e(e) + B_w w + g_e(\dot v, \ddot v, \ldots),
\label{eq:app_err_dyn}
\end{equation}
where $f_e$ is the autonomous error evolution and $g_e$ the ERG-induced
reference coupling. The frozen dynamics ($\dot v = 0$) are
$\dot e = f_e(e) + B_w w$, with $\|B_w w\| \le H_{\max} := \|B_w\|\,W_{\max}$.

\subsection{General Frozen ISS}
\begin{assumption}[Frozen ISS]
\label{app:asm_frozen}
The frozen dynamics admit an ISS-Lyapunov function $V:\mathbb{R}^p\to\mathbb{R}_{\ge0}$
with $\underline\alpha, \overline\alpha, \alpha_d, \sigma \in \mathcal{K}_\infty$:
\begin{align}
\underline\alpha(\|e\|) \le V(e) \le \overline\alpha(\|e\|), \quad
\dot V(e) \le -\alpha_d(\|e\|) + \sigma(\|w\|).
\label{eq:app_iss_lyap}
\end{align}
\end{assumption}
Given $H_{\max} \ge 0$, define the ISS ultimate sublevel threshold
\begin{equation}
\bar V_h(H_{\max}) := \overline\alpha\big(\alpha_d^{-1}(\sigma(H_{\max}))\big).
\label{eq:app_Vbar_general}
\end{equation}
The sandwich~\eqref{eq:app_iss_lyap} ensures $V$ decreases strictly outside this
level: for $\|w\| \le W_{\max}$,
\begin{equation}
V(e) > \bar V_h
\Rightarrow \|e\| > \alpha_d^{-1}(\sigma(H_{\max}))
\Rightarrow \dot V(e) < 0,
\label{eq:app_strict_decrease}
\end{equation}
so $\Omega_h := \{e : V(e) \le \bar V_h\}$ is forward invariant and entered in
finite time, with ultimate bound
$\bar e_h(H_{\max}) = \underline\alpha^{-1}(\bar V_h)$. This is
Definition~\ref{def:iss} with $(\mathbb{T},\mathcal{B},\xi,w) = (\mathbb{R}_{\ge0},\mathcal{B}_L,e,w)$.

\subsection{General Safety Threshold}\label{app:subdefthre}
\begin{definition}[Safety Threshold Function]
\label{app:def_threshold}
For each constraint $i$ written $c_{e,i}^\top e \le d_i(v)$, define the Lyapunov
threshold
\begin{equation}
\Gamma_i(v) := \sup\{V(e) : c_{e,i}^\top e \le d_i(v)\},
\quad \Gamma(v) := \min_i \Gamma_i(v).
\label{eq:app_threshold_sup}
\end{equation}
\end{definition}
The linear--quadratic specialization $V(e) = e^\top P e$ reduces the supremum to
the closed form $\Gamma_i(v) = d_i(v)^2/(c_{e,i}^\top P^{-1} c_{e,i})$
of~\eqref{eq:Gamma_i}.

\subsection{General Robust Invariance}
\begin{theorem}[Robust Invariance of ERG, general]
\label{app:thm_invariance}
Under Assumptions~\ref{app:asm_frozen} and~\ref{asm:adm_dist}, the set
$\tilde K = \{(e,v) : V(e) \le \Gamma(v)\}$ is robustly forward invariant for the
composite dynamics~\eqref{eq:app_err_dyn},~\eqref{eq:erg_dynamics} for all
$\|w\| \le W_{\max}$ and any commanded reference $r(\cdot)$.
\end{theorem}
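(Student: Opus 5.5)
The plan mirrors the proof of Theorem~\ref{thm:erg_invariance}, replacing the quadratic machinery by the $\mathcal{K}_\infty$ bounds of Assumption~\ref{app:asm_frozen}. We work with the barrier-type quantity $\phi(e,v) := V(e) - \Gamma(v)$. Invariance of $\tilde K=\{\phi\le 0\}$ will follow from a Nagumo-type argument: it suffices to show that, on $\partial\tilde K$, $\phi$ cannot increase for any admissible $w$ and any $r(\cdot)$. Since $r$ is piecewise constant and $w$ piecewise continuous, we argue on each interval of continuity. Solutions are absolutely continuous there, and the boundary analysis applies to Dini derivatives along solutions.

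\textbf{Step 1: reference freezing on the boundary.} At any $(e,v)\in\partial\tilde K$ we have $V(e)=\Gamma(v)$, so $\Delta(e,v)=\max(0,\Gamma(v)-V(e))=0$. By~\eqref{eq:erg_dynamics} this gives $\dot v=0$ regardless of $\rho(r,v)$, and hence regardless of $r$. Consequently $g_e(\dot v,\ddot v,\ldots)$ vanishes there, and $e$ obeys the frozen dynamics $\dot e=f_e(e)+B_w w$.

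For the higher derivatives $\ddot v$ I would note the following. Along solutions, $\dot v$ is continuous in time, since $\Delta$ is continuous and $\rho$ is bounded. Hence near a boundary contact, $\|\dot v\|$ is bounded by $\bar\kappa\,\Delta$, which is small. I would either assume, as in the main-text proof, that $g_e$ vanishes when $\dot v=0$ in the sense of the frozen model, or absorb the small residual by continuity into the strict decrease margin established in Step 2.

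\textbf{Step 2: strict decrease of $V$ on the boundary.} By Assumption~\ref{asm:adm_dist}, $V(e)=\Gamma(v)\ge\inf_{\mathcal V}\Gamma>\bar V_h(H_{\max})$. Then~\eqref{eq:app_strict_decrease} gives $\|e\|>\alpha_d^{-1}(\sigma(H_{\max}))$, and so $\dot V(e)\le-\alpha_d(\|e\|)+\sigma(\|B_w w\|)<0$ for all $\|w\|\le W_{\max}$.

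\textbf{Step 3: conclude invariance.} Because $\dot v=0$ on the boundary, $\Gamma(v)$ is locally constant in the time derivative, so $\tfrac{d}{dt}\phi=\dot V(e)<0$ there. Now suppose a trajectory leaves $\tilde K$. Let $t^*$ be the last time it lies in $\tilde K$, so that $\phi(t^*)=0$. Just after $t^*$ we would need $\phi>0$, but this contradicts the strictly negative derivative at $t^*$. The negativity holds uniformly in $w\in\mathcal W$ by Step 2, using continuity of $f_e$, $V$, and $\Gamma$. The argument is therefore robust to $w$ and independent of $r$.

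\textbf{Main obstacle.} The main obstacle is the regularity needed to differentiate $\Gamma(v)$. Under the general supremum definition~\eqref{eq:app_threshold_sup}, together with the $\min_i$, $\Gamma$ need not be differentiable. I would handle this by using upper Dini derivatives. Since $\dot v=0$ on the boundary and $\Gamma$ is locally Lipschitz, which I would verify from continuity of $d_i(v)$ and properness of $V$, the Dini derivative of $\Gamma(v(t))$ is bounded by $L_\Gamma\|\dot v\|=0$. This keeps the comparison argument valid. A secondary point is that $\Gamma(v)$ must remain admissible ($v\in\mathcal V$) along the flow; this holds because $\dot v=0$ whenever $\Delta=0$, so $v$ moves only while $\Gamma(v)>V(e)\ge0$.
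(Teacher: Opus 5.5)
Your argument follows the paper's proof step for step: the barrier $V(e)-\Gamma(v)$, then $\Delta=0\Rightarrow\dot v=0$ on $\partial\tilde K$ regardless of $r$, then strict decrease of $V$ there via Assumption~\ref{asm:adm_dist} and \eqref{eq:app_strict_decrease}, then invariance. The only differences are that you prove the final step by a first-exit argument instead of citing the zeroing-barrier proposition, and you add Dini-derivative and $v\in\mathcal V$ bookkeeping that the paper omits. One caution: your fallback of absorbing the $\ddot v$ part of $g_e$ ``by continuity'' does not work, since $\|\dot v\|\le\bar\kappa\Delta$ gives no smallness of $\ddot v$ near a contact (and local Lipschitzness of $\Gamma$ does not follow from mere continuity of $d_i$), so either keep the paper's frozen-model convention or, more cleanly, run the contradiction on the exterior stretch $(t_1,t_2]$ of a putative exit, where $V(e)>\Gamma(v)$ forces $\Delta\equiv0$, hence $v\equiv v(t_1)$, $g_e\equiv0$ and constant $\Gamma(v)$, so that $\dot V<0$ yields $V(e(t_2))<V(e(t_1))=\Gamma(v(t_2))$, a contradiction needing no regularity of $\Gamma$ and no convention on $g_e$ at contact instants.
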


\begin{proof}
Define the barrier $\Phi(e,v) := V(e) - \Gamma(v)$ and consider
$\partial\tilde K = \{\Phi = 0\}$, i.e.\ $V(e) = \Gamma(v)$. There
$\Delta(e,v) = \max(0, \Gamma(v) - V(e)) = 0$, so the ERG
law~\eqref{eq:erg_dynamics} gives $\dot v = 0$ and the error obeys the frozen
dynamics. Since $V(e) = \Gamma(v) > \bar V_h$ by Assumption~\ref{asm:adm_dist},
the state is outside $\Omega_h$, so by~\eqref{eq:app_strict_decrease}
$\dot V(e) < 0$. With $\dot v = 0$, $\Gamma(v)$ is constant, $\dot\Phi = \dot V - \tfrac{d}{dt}\Gamma(v) = \dot V < 0$ on $\{\Phi = 0\}$. By
the invariance property of zeroing barrier functions~\cite[Prop.~1]{Ames2017},
the sublevel set $\{\Phi \le 0\} = \tilde K$ is robustly forward invariant.
\end{proof}

\begin{corollary}[Low-Level Contract, general]
\label{app:cor_lowlevel}
Under Assumptions~\ref{app:asm_frozen} and~\ref{asm:adm_dist}, $C_{\mathrm{tss}}$,
and $(e(kT_s),v(kT_s))\in\tilde K$, the low-level contract holds with
\begin{equation}
\varepsilon_L := h_r(\bar e_h(H_{\max})) = h_r\big(\underline\alpha^{-1}(\bar V_h(H_{\max}))\big).
\label{eq:app_eps_L_general}
\end{equation}
\end{corollary}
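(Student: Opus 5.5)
The plan follows the linear--quadratic proof of Corollary~\ref{cor:lowlevel}, with the general $\mathcal{K}_\infty$ machinery of Assumption~\ref{app:asm_frozen} replacing the quadratic closed forms. The safety and tracking parts of $\mathcal{C}_L$ are handled separately, because only the tracking part requires $A_{\mathrm{ref}}$ and $C_{\mathrm{tss}}$.

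\emph{Safety.} Start from $(e(kT_s),v(kT_s))\in\tilde K$ together with $A_{\mathrm{env}}^k$, i.e.\ $\|w\|\le W_{\max}$ on $\mathbb{I}_k$. Theorem~\ref{app:thm_invariance} then gives $(e(t),v(t))\in\tilde K$ for all $t\in\mathbb{I}_k$, for any commanded $r$. So $V(e(t))\le\Gamma(v(t))\le\Gamma_i(v(t))$ for every $i$. By Definition~\ref{app:def_threshold}, $\Gamma_i(v)$ is the largest $V$-level whose sublevel set lies in the half-space $\{c_{e,i}^\top e\le d_i(v)\}$. The supremum in \eqref{eq:app_threshold_sup} should be read in this sense, as the linear--quadratic closed form \eqref{eq:Gamma_i} confirms. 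Hence $c_{e,i}^\top e(t)\le d_i(v(t))$ for all $i$. Translating back through $x=e+v$ (zero-padded) gives $Cx(t)\le d$, so $G_{\mathrm{safe}}^k$ holds. The input constraints, encoded as additional rows as in \appref{app:constraints}, are covered by the same argument. Inducting over $k$, using that $\tilde K$ membership at $(k{+}1)T_s$ is inherited from the end of $\mathbb{I}_k$, yields $\bigwedge_k G_{\mathrm{safe}}^k$.

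\emph{Tracking.} Under $A_{\mathrm{ref}}^k$ and Assumptions~\ref{asm:smoothing}--\ref{asm:progress}, the transit phase brings $v$ to within $\eta$ of $r_k$ in time $\tau_1$. The error then evolves under the frozen dynamics $\dot e=f_e(e)+B_w w$. By \eqref{eq:app_strict_decrease}, $V$ strictly decreases outside $\Omega_h$. I would turn this into a finite entry time by bounding $\dot V\le-\alpha_d(\underline\alpha^{-1}$-type level$)+\sigma(H_{\max})<0$ uniformly on $\{V\ge \bar V_h+\delta'\}$. This gives an explicit entry time into an arbitrarily small inflation of $\Omega_h$, which is the role played by $\tau_2$ in Proposition~\ref{prop:settling}. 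Once inside, forward invariance of $\Omega_h$ keeps $V(e(t))\le\bar V_h$. By $C_{\mathrm{tss}}$ the entry happens before $(k{+}1)T_s$, so at the sampling instant $\|e\|\le\underline\alpha^{-1}(\bar V_h(H_{\max}))=\bar e_h$. Since $h_r$ is linear ($P_r$), $|h_{r,i}(x)-r_{k,i}|$ is controlled by the $i$-th component of $e$ plus the smoothing residual $\eta$, which Assumption~\ref{asm:smoothing} treats as negligible. This gives $G_{\mathrm{track}}^k$ with $\varepsilon_L$ as in \eqref{eq:app_eps_L_general}.

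\emph{Main obstacle.} The hardest step is making the settling argument rigorous in the general nonlinear setting. Proposition~\ref{prop:settling} is stated with the exponential envelope $m e^{-\lambda_e t}$, which exists only in the linear case. I would replace it by a generic ISS estimate $\|e(t)\|\le\beta(\|e(t_0)\|,t)+\gamma(H_{\max}+M)$ derived from Assumption~\ref{app:asm_frozen}. I would then define $\tau_{LL}$ implicitly as the time at which the $\beta$ term drops below the margin, so $C_{\mathrm{tss}}$ is interpreted with this $\tau_{LL}$. A secondary subtlety is that \eqref{eq:app_eps_L_general} applies $h_r$ to a scalar radius. I would state it componentwise, as the bound on $|e_i|$ implied by $\|e\|\le\bar e_h$ (or the sharper support-function projection of $\Omega_h$ when available).
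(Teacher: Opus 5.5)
Your proposal is correct and follows the paper's own proof. Safety comes from the robust invariance of $\tilde K$ (Theorem~\ref{app:thm_invariance}) via $V(e)\le\Gamma(v)\le\Gamma_i(v)\Rightarrow c_{e,i}^\top e\le d_i(v)$. Tracking comes from the frozen ISS dynamics driving $e$ into $\Omega_h$ within $\tau_{LL}$, so that $\|e\|\le\underline\alpha^{-1}(\bar V_h)$ at the sample. Your extra care only tightens points the paper glosses over: reading $\Gamma_i$ as the largest sublevel set inside the half-space, reading $h_r(\bar e_h)$ componentwise, tracking the $\eta$ residual, and defining $\tau_{LL}$ through a $\beta$-based estimate.

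The one slip is at the end of your tracking step. Your uniform-decrease argument gives finite-time entry only into $\{V\le\bar V_h+\delta'\}$, so at $(k{+}1)T_s$ you actually obtain $\underline\alpha^{-1}(\bar V_h+\delta')$ (plus $\eta$). That is $\varepsilon_L$ up to an arbitrarily small slack, in the spirit of the $(1+\delta)$ factor in Proposition~\ref{prop:settling}, not exactly $\bar e_h$. The paper itself simply asserts exact finite-time entry into $\Omega_h$, which is not justified either.
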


\begin{proof}
\emph{(Safety)} By Theorem~\ref{app:thm_invariance}, $V(e) \le \Gamma(v) \le
\Gamma_i(v)$ for every $i$, so $c_{e,i}^\top e \le d_i(v)$, giving
$x(t)\in\mathcal{X}_{\mathrm{safe}}$ and $u(t)\in\mathcal{U}$ (requires
$A_{\mathrm{env}}$, not $A_{\mathrm{ref}}$). \emph{(Tracking)} Once $v(t) = r$,
the frozen dynamics drive the error into $\Omega_h$ within $\tau_{LL}$; the
ultimate bound~\eqref{eq:app_strict_decrease} gives
$\|h_r(x(t_0+\tau_{LL})) - r\| \le \varepsilon_L = \bar e_h(H_{\max})$.
\end{proof}

\subsection{Recovery of the Linear--Quadratic Case}
The linear--quadratic specialization of Section~\ref{subsec:safety_layer} is
recovered by setting $V(e) = e^\top P e$ with
$\underline\alpha(s) = \lambda_{\min}(P)s^2$ and
$\overline\alpha(s) = \lambda_{\max}(P)s^2$: then~\eqref{eq:app_Vbar_general}
becomes $\bar V_h = \lambda_{\max}(P)(\alpha^{-1}(\sigma(H_{\max})))^2$
of~\eqref{eq:Vbar_def}, the sup-form~\eqref{eq:app_threshold_sup} closes to
$\Gamma_i(v) = d_i(v)^2/(c_{e,i}^\top P^{-1}c_{e,i})$ of~\eqref{eq:Gamma_i}, and
the component-wise bound~\eqref{eq:eps_L} follows from the ellipsoidal
Cauchy--Schwarz projection.

\end{document}